\newtheorem{theorem}{Theorem}[section]
\newtheorem{lemma}[theorem]{Lemma}
\newtheorem{proposition}[theorem]{Proposition}
\newtheorem{corollary}[theorem]{Corollary}
\newcommand{\bbC}{\mathbb{C}}
\newcommand{\bbD}{\mathbb{D}}
\newcommand{\bbF}{\mathbb{F}}
\newcommand{\bbR}{\mathbb{R}}
\newcommand{\bbZ}{\mathbb{Z}}
\newcommand{\caA}{\mathcal{A}}
\newcommand{\caP}{\mathcal{P}}
\newcommand{\E}{\mathop{\mathbf{E}}}
\newcommand{\tr}{\mathrm{tr}}
\newcommand{\iunit}{\sqrt{-1}}
\newcommand{\dist}{\mathrm{dist}}
\newcommand{\id}{\mathrm{id}}
\newcommand{\GL}{\mathrm{GL}}
\newcommand{\Skew}{\mathrm{Skew}}
\newcommand{\diag}{\mathrm{diag}}
\title{Testing Properties of Functions on Finite Groups}
\author{
  Kenta Oono\\
  Preferred Networks, Inc.\\
  \texttt{oono@preferred.jp}
  \and
  Yuichi Yoshida\thanks{
    Supported by JSPS Grant-in-Aid for Young Scientists (B) (No.~26730009), MEXT Grant-in-Aid for Scientific Research on Innovative Areas (No.~24106001), and JST, ERATO, Kawarabayashi Large Graph Project.
  }\\
  National Institute of Informatics, and \\
  Preferred Infrastructure, Inc. \\
  \texttt{yyoshida@nii.ac.jp}
}
\begin{document}

\maketitle

\begin{abstract}
  We study testing properties of functions on finite groups.
  First we consider functions of the form $f:G \to \bbC$, where $G$ is a finite group.
  We show that conjugate invariance, homomorphism, and the property of being proportional to an irreducible character is testable with a constant number of queries to $f$,
  where a character is a crucial notion in representation theory.
  Our proof relies on representation theory and harmonic analysis on finite groups. 
  Next we consider functions of the form $f: G \to M_d(\bbC)$,
  where $d$ is a fixed constant and  $M_d(\bbC)$ is the family of $d$ by $d$ matrices with each element in $\bbC$.
  For a function $g:G \to M_d(\bbC)$,
  we show that the unitary isomorphism to $g$ is testable with a constant number of queries to $f$,
  where we say that $f$ and $g$ are unitary isomorphic if there exists a unitary matrix $U$ such that $f(x) = Ug(x)U^{-1}$ for any $x \in G$.
\end{abstract}


\section{Introduction}\label{sec:intro}

In \emph{property testing}~\cite{Rubinfeld:1996um,Blum:1993cn},
we want to decide whether the input function $f$ satisfies a predetermined property $\caP$ or ``far'' from it.
More specifically,
an algorithm is called a \emph{tester} for a property $\caP$ if,
given a query access to the input function $f$ and a parameter $\epsilon > 0$,
it accepts with probability at least $2/3$ when $f$ satisfies $\caP$,
and rejects with probability at least $2/3$ when $f$ is \emph{$\epsilon$-far} from $\caP$.
If a tester accepts with probability one when $f$ satisfies $\caP$, then it is called a \emph{one-sided error tester}.
The definition of $\epsilon$-farness depends on the model,
but for the case of Boolean functions,
we say that a function $f:\bbF_2^n \to \{-1,1\}$ is $\epsilon$-far from $\caP$ if the \emph{distance} $\dist(f,g) := \Pr_x[f(x) \neq g(x)]$ of $f$ and $g$ is at least $\epsilon$ for any function $g:\bbF_2^n \to \{-1,1\}$ satisfying $\caP$.
Here $-1$ and $1$ are corresponding to \textbf{true} and \textbf{false}, respectively.
The efficiency of the tester is measured by the number of queries to $f$, called the \emph{query complexity}.
We say that a property $\caP$ is \emph{constant-query testable} if there is a tester with query complexity depending only on $\epsilon$ (and not on $n$ at all).

The study of testing properties of Boolean functions, or more generally, functions on finite fields was initiated by Rubinfeld and Sudan~\cite{Rubinfeld:1996um},
and then subsequently many properties have been shown to be constant-query testable~\cite{Blum:1993cn,Bhattacharyya:2010kw,Green:2005iv,Yoshida:2012dm}.
To incorporate the algebraic structure of the finite field,
Kaufman and Sudan~\cite{Kaufman:2008jm} asked to study \emph{affine-invariant properties},
that is, properties $\caP$ such that, if $f:\bbF_p^n \to \{0,1\}$ satisfies $\caP$, then $f \circ A$ also satisfies $\caP$ for any bijective affine transformation $A$.
A lot of progress has been made on the study of the constant-query testability of affine-invariant properties~\cite{Bhattacharyya:2012ud,Hatami:2013ux},
and finally (almost) complete characterizations of constant-query testability were achieved~\cite{Bhattacharyya:2013ii,Yoshida:2014tq}.
For further details on function property testing, refer to~\cite{Bhattacharyya:2013gu,Ron:2009hh} for surveys.

Besides finite fields,
functions over finite groups such as the cyclic group and the permutation group are also objects that naturally appear in various contexts, e.g., circuit complexity~\cite{Allender:1998hf}, computational learning~\cite{Wimmer:2010jw}, and machine learning~\cite{Huang:2009vk}.
Despite its importance, there are only a few works on testing properties on functions over finite groups~\cite{BenOr:2007cg,Blum:1993cn}, and extending this line of research is the main focus of the present paper.
More specifically,
we consider testing properties of functions $f$ of the form $f:G \to \bbD$, and more generally, $f:G \to \bbD(d)$,
where $G$ is a finite group, $\bbD = \{z \in \bbC \mid |z| \leq 1\}$ is the unit disk, and $\bbD(d) = \{A \in M_d(\bbC) \mid \|A\|_F \leq 1 \}$ is the set of $d$ by $d$ matrices with Frobenius norm at most one.
Note that $\bbD(1) = \bbD$.
We regard $d$ as a constant.
The reason that we use $\bbD$ and $\bbD(d)$ is that they are maximal sets closed under multiplication.
Below, we get into the details of these two settings.

\paragraph{Testing properties of functions of the form $f:G \to \bbD$:}

We define the \emph{distance} between two functions $f,g:G \to \bbD$ as $\dist(f,g) = \frac{1}{2}\|f - g\|_2$,
where $\|f\|_2 := \sqrt{\E_{x \in G}|f(x)|^2}$ is the $L_2$ norm.
Note that $\dist(f,g)$ is always in $[0,1]$.
We say that a function $f:G \to \bbD$ is $\epsilon$-far from a property $\caP$ if $\dist(f,g) \geq \epsilon$ for any function $g:G \to \bbD$ satisfying $\caP$.
We note that,
for $\{-1,1\}$-valued functions $f,g:G\to \{-1,1\}$,
we have $\Pr[f(x) \neq g(x)] = \epsilon$ if and only if $\dist(f,g) = \sqrt{\epsilon}$ holds.
Hence, we have a quadratic gap between our definition and the standard definition using the Hamming distance.
However, we adopt the $L_2$ norm as it is more friendly with our analysis.

We first show the following:
\begin{itemize}
\itemsep=0pt
\item Conjugate invariance, that is, $f(yxy^{-1}) = f(x)$ for any $x,y \in G$, is one-sided error testable with $O(1/\epsilon^2)$ queries.
\item Homomorphism, that is, $f(x)f(y) = f(xy)$ for any $x,y \in G$, is one-sided error testable with $O(1/\epsilon^2 \log(1/\epsilon))$ queries.
\end{itemize}
We show the constant-query testability of conjugate invariance by a simple combinatorial argument.

When $G = \bbF_2^n$, then homomorphism is often called linearity and intensively studied in the area of property testing~\cite{Bellare:1996iv,BenSasson:2003dn,Blum:1993cn,Shpilka:2006fp}.
Indeed in this case, a function $f:\bbF_2^n \to \bbD$ is homomorphism if and only if $f(x) = \chi_S(x) := (-1)^{\sum_{i \in S}x_i}$ for some $S \subseteq \{1,\ldots,n\}$.

The case $G  = S_n$, the permutation group of order $n$, is easy to understand.
Let $f:S_n \to \bbD$ be a function on the permutation group.
If $f$ is conjugate invariant,
then the value of $f(\pi)$ only depends on the cycle pattern of $\pi$.
If $f$ is homomorphism,
then $f$ is the all-zero function, the all-one function, or the function that returns the sign of the input permutation.


We note that Ben-Or~\emph{et~al.}~\cite{BenOr:2007cg} studied the constant-query testability of homomorphism from a finite group to another finite group.
The query complexity of their algorithm is $O(1/\epsilon)$, where the distance is measured by the Hamming distance.
Their algorithm and analysis by a combinatorial argument extends to our setting, in which the range is $\bbD$ and the distance is measured by the $L_2$ norm, with query complexity $O(1/\epsilon^2)$.
In this sense, our result on homomorphism is not new.
Nevertheless, we prove it again using harmonic analysis over finite groups.
By doing so, we can generalize it for testing other properties. 






To describe our next result,
we need to introduce the basic of representation theory.
A \emph{representation} of a group $G$ is a homomorphism $\varphi$ of the form $\varphi:G \to M_{d}(\bbC)$ for some integer $d$.
In particular, we study the family of \emph{irreducible representations}, where any representation can be described as the direct sum of irreducible representations.
We mention that irreducible representations $\varphi:G \to M_{d}(\bbC)$
can be chosen as \emph{unitary}, that is, $\varphi(x)$ is unitary for all $x \in G$.
The \emph{character} of a representation $\varphi$ is the function $\chi_\varphi(x) = \tr(\varphi(x))$,
where $\tr(\cdot)$ denotes the trace of a matrix.
The character carries the essential information about the representation in a more condensed form and is intensively studied in \emph{character theory}.
The character of an irreducible representation is called an \emph{irreducible character}.
For example,
every irreducible character of $\bbF_2^n$ is of the form $\chi_S(x) = (-1)^{\sum_{i \in S}x_i}$ for some $S \subseteq \{1,\ldots,n\}$, which is often called a \emph{character} in Fourier analysis of Boolean functions.
If $G$ is \emph{abelian}, that is, commutative, then representations of $G$ always map to one-dimensional matrices,
and hence a representation and its corresponding character coincide.
However, this is not the case when $G$ is not abelian, which makes the analysis more involved.
We show the following:
\begin{itemize}
\item The property of being proportional to an irreducible character, that is, $f = c \chi_\varphi$ for some $c \in \bbC$ and irreducible representation $\varphi$, is testable with $O(1/\epsilon^8 \log^2(1/\epsilon))$ queries.
\end{itemize}
The reason that we do not consider irreducible characters themselves is that irreducible characters may take values outside of $\bbD$.
In particular, $\chi_\varphi(1) = d$ holds for the identity element $1 \in G$ and the dimension $d$ of $\varphi$.

When $G = \bbF_2^n$, then irreducible characters coincide with linear functions, that is, $\chi_S$ for some $S \subseteq \{1,\ldots,n\}$.
Hence, testing the property of being proportional to an irreducible character can be seen as another generalization of linearity testing.

The form of irreducible characters is quite complicated in general.
When $G = S_n$,
however,
its combinatorial interpretation via Young Tableau is well studied~\cite{Biane:2003hw,Kerov:2000aa,Stanley:2003tv} (though still complicated to state here).

\paragraph{Testing properties of functions of the form $f:G \to \bbD(d)$:}
Since the representations of a group $G$ are matrix-valued,
it is natural to consider testing properties of functions of the form $f:G \to \bbD(d)$.
For two functions $f,g:G \to \bbD(d)$,
we define $\dist(f,g) = \frac{1}{2}\sqrt{\E_x\|f(x) - g(x)\|_F^2}$. 
Note that this is indeed a metric and matches the previous definition of distance when $d = 1$.

Let $U(d)$ be the set of $d$ by $d$ unitary matrices with each element in $\bbC$.
We show the following:
\begin{itemize}
\item Unitary equivalence to $g:G \to \bbD(d)$, that is, $f = UgU^{-1}$ for some unitary matrix $U \in U(d)$, is testable with $(d^{3/2}/\epsilon)^{O(d^2)}$ queries.
\end{itemize}
Here $g$ is a parameter of the problem and not a part of the input.
Unitary equivalence is an important notion when studying representations since the irreducibility of a representation $\varphi$ is preserved by the transformation $\varphi \mapsto U\varphi U^{-1}$ for a unitary matrix $U$.

Our tester samples unitary matrices from the Haar measure, a fundamental tool in the representation theory of Lie Groups, and then checks whether $g$ becomes close to $f$ by applying these unitary matrices.

Arguably the simplest property of matrix-valued functions is again homomorphism.
However, homomorphism is known to be constant-query testable by a combinatorial argument~\cite{BenOr:2007cg},
and the harmonic analysis does not facilitate the analysis.
Therefore we do not study homomorphism of matrix-valued functions in this paper.



\paragraph{Related work:}
There are a number of works on testing whether a function on a finite group is a homomorphism.
Blum~\emph{et~al.}~\cite{Blum:1993cn} gave a tester (the \emph{BLR tester}) for homomorphism of functions on a finite group.
However, the number of queries depends on the number of generators of the group,
which may depend on the size of the group in general.
Ben-Or~\emph{et~al.}~\cite{BenOr:2007cg} gave another algorithm without this dependency.
Bellare~\emph{et~al.}~\cite{Bellare:1996iv} gave a Fourier-analytic proof of the BLR tester when the domain and the range are $\bbF_2^n$ and $\bbF_2$, respectively.
Our tester for homomorphism can be seen as a generalization of their analysis to general groups.
There has been an interest in improving various parameters of homomorphism testing results, due to their applications in the construction of probabilistically checkable proof (PCP)~\cite{Arora:1998dga}.
Bellare~\emph{et~al.}~\cite{Bellare:1996iv} gave an almost tight connection between the distance to homomorphism and the rejection probability of the BLR tester.
Ben-Sasson~\emph{et~al.}~\cite{BenSasson:2003dn} and Shpilka and Wigderson~\cite{Shpilka:2006fp} reduced the number of random bits required by the  test as it affects the efficiency of the proof system and in turn the hardness of approximation results that one can achieve using the proof system.
Rubinfeld~\cite{Rubinfeld:2006jy} studied properties of a function on a finite group that are defined by functional equations and gave a sufficient condition of constant-query testability.

Using the $L_p$ norm for $p \geq 1$ as a distance measure in property testing is recently systematically studied by Berman~\emph{et~al.}~\cite{Berman:2014kg}.
One of their motivations is exploring the connection of property testing with learning theory and approximation theory.
For this purpose, the $L_p$ norm is more favorable than the Hamming distance because, in learning theory and approximation theory, we typically measure errors in the $L_p$ norm for $p = 1$ or $2$.
Indeed, several lower bounds and upper bounds for property testing in the $L_p$ norm were shown using this connection.
See~\cite{Berman:2014kg} for more details.

Representation theory is one of the most important areas in modern mathematics.
Representation theory itself is intensively studied and it is also used as an analytical tool in harmonic analysis, invariant theories, and modular theory.
Representation theory have been used in various problems of theoretical computer science such as constructing pseudorandom objects~\cite{Alon:2013ci,Kassabov:2007ki}, circuit complexity~\cite{Allender:1998hf}, communication complexity~\cite{Raz:1995ib}, computational learning~\cite{Wimmer:2010jw}, machine learning~\cite{Huang:2009vk}, and quantum property testing~\cite{ODonnell:2015va}.

\paragraph{Organization:}
In Section~\ref{sec:pre}, we introduce representation theory, harmonic analysis on finite groups, and the Haar measure in more detail.
We discuss the testability of conjugacy invariance in Section~\ref{sec:conjugate}.
We show that homomorphism and the property of a being proportional to an irreducible character are constant-query testable in Sections~\ref{sec:homo} and~\ref{sec:nrc}, respectively.
Section~\ref{sec:unitary} is devoted to testing unitary equivalence.

\section{Preliminaries}\label{sec:pre}

For an integer $n \geq 1$, $[n]$ denotes the set $\{1,2,\ldots,n\}$.
Let $\delta_{ij}$ be \emph{Kronecker's delta}, that is, $\delta_{ij} = 1$ if $i = j$ and $\delta_{ij} = 0$ if $i \neq j$.
For a matrix $M$, we denote its $(i, j)$-th element by $M_{ij}$.
We write the real and imginary part of a complex number $z$ as $\Re z$ and $\Im z$, respectively (hence $z = \Re z + \iunit \Im z$).
For a complex number $z$, $\overline{z}$ denotes its conjugate.

We frequently use the following lemma.
\begin{lemma}\label{lem:estimate}
  Let $f: G \to \bbD$ be a function for some finite group $G$.
  For any $\epsilon > 0$,
  with probability at least $1-\delta$,
  we can compute an estimate $z$ of $\E_{x \in G}[f(x)]$ such that $|z - \E_{x \in G}[f(x)]| \leq \epsilon$ .
  The number of queries to $f$ is $O(1/\epsilon^2 \log 1/\delta)$.
\end{lemma}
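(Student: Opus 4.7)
The plan is to use the natural empirical-mean estimator together with a standard Hoeffding-type concentration bound, handling the real and imaginary parts separately since the values lie in the complex unit disk $\bbD$.

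First I would define the estimator: draw $m$ samples $x_1,\ldots,x_m$ independently and uniformly from $G$, query $f$ at each, and set
\[
z \;=\; \frac{1}{m}\sum_{i=1}^{m} f(x_i).
\]
This uses exactly $m$ queries, so the task reduces to choosing $m$ of order $\epsilon^{-2}\log(1/\delta)$ so that $|z - \E_{x\in G}[f(x)]| \leq \epsilon$ with probability at least $1-\delta$.

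Next I would decompose $f(x) = \Re f(x) + \iunit\, \Im f(x)$. Since $f(x) \in \bbD$, both $\Re f(x)$ and $\Im f(x)$ lie in $[-1,1]$, so the real-valued random variables $\Re f(x_i)$ and $\Im f(x_i)$ are i.i.d.\ and bounded. Applying Hoeffding's inequality to $\frac{1}{m}\sum_i \Re f(x_i)$ and independently to $\frac{1}{m}\sum_i \Im f(x_i)$ yields, for any $t>0$,
\[
\Pr\Bigl[\bigl|\Re z - \E_x \Re f(x)\bigr| > t\Bigr] \;\leq\; 2\exp(-mt^2/2),
\]
and likewise for the imaginary part. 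Setting $t = \epsilon/\sqrt{2}$ and taking a union bound over the two parts, we get failure probability at most $4\exp(-m\epsilon^2/4)$. Choosing $m = \lceil (4/\epsilon^2)\log(4/\delta)\rceil = O(\epsilon^{-2}\log(1/\delta))$ makes this at most $\delta$, and when both parts are within $\epsilon/\sqrt{2}$ of their means we have $|z - \E_x f(x)| \leq \sqrt{(\epsilon/\sqrt 2)^2 + (\epsilon/\sqrt 2)^2} = \epsilon$, as desired.

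There is no real obstacle here; the argument is a textbook Chernoff/Hoeffding estimate adapted to complex-valued bounded random variables. The only minor choice is whether to split into real and imaginary parts (as above) or to invoke a vector-valued Hoeffding bound directly in $\bbR^2$; I would use the split version because it keeps the proof self-contained and the constants transparent.
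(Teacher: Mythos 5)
Your proof is correct and follows essentially the same approach as the paper: take the empirical mean, split into real and imaginary parts, apply Hoeffding to each, and union bound. The only cosmetic difference is that you combine the two parts via the Euclidean norm with radius $\epsilon/\sqrt{2}$ each, whereas the paper uses $\epsilon/2$ per part and the triangle inequality $|z - c| \leq |\Re z - \Re c| + |\Im z - \Im c|$; both give the same $O(\epsilon^{-2}\log(1/\delta))$ bound.
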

\begin{proof}
  Let $c = \E_{x \in G}[f(x)]$.
  To estimate $c$,
  we sample $x_1,\ldots,x_s$ uniformly at random from $G$,
  where $s = \Theta(1/\epsilon^2 \log 1/\delta)$.
  Then, we output $\widetilde{c} := \frac{1}{s}\sum_{i \in [s]}f(x_x)$.
  Clearly, the query complexity is $O(1/\epsilon^2 \log 1/\delta)$.

  We now show that $\widetilde{c}$ is indeed a good approximation to $c$.
  Since $|\Re(f(x))| \leq 1$,
  from Hoeffding's inequality,
  we have $|\Re(c) - \Re(\widetilde{c})| \leq \epsilon/2$ with probability at least $1-\delta/2$ by choosing the hidden constant in $s$ large enough.
  Similarly,
  we have $|\Im(c) - \Im(\widetilde{c})| \leq \epsilon/2$ with probability at least $1-\delta/2$.
  By the union bound, we have $|c-\widetilde{c}| \leq |\Re(c) - \Re(\widetilde{c})| + |\Im(c) - \Im(\widetilde{c})| \leq \epsilon$ with probability at least $1-\delta$.
\end{proof}



\subsection{Representation theory}
We introduce basic notions and facts in representation theory.
See, e.g.,~\cite{steinberg2011representation} for more details.

For a vector space $V$ over a field $\bbF$,
$\GL_{\bbF}(V)$ denotes the set of invertible linear transformations.
We only consider the case $\bbF = \bbC$ in this paper, and hence we omit the subscript for simplicity.

A \emph{representation} of $G$ is a pair $(\varphi,V)$ of a finite-dimensional vector space $V$ and a homomorphism $\varphi:G \to \GL(V)$,
that is, $\varphi(xy) = \varphi(x)\varphi(y)$ for every $x,y \in G$ and $\varphi(1)$ is the identity transformation for the identity element $1 \in G$.
For a representation $(\varphi,V)$, $V$ is called the \emph{representation space} of it.
When $V$ is clear from the context, we simply call $\varphi$ a representation.
The \emph{dimension} of a representation $(\varphi,V)$ is the dimension of $V$.
When $V$ is a finite-dimensional vector space, then we say that $(\varphi,V)$ is \emph{finite-dimensional representation}.
In our argument, we only need finite-dimensional representations.

We describe the decomposition of a representation into irreducible representations, which is a fundamental tool used in representation theory.
For a representation $(\varphi,V)$ and a subspace $W$ of $V$,
we say that $W$ is \emph{$G$-invariant} if $\varphi(G)W \subseteq W$.
If $W$ is a $G$-invariant space, then we can regard the range of $\varphi$ as $\GL(W)$, and hence we obtain a representation $(\varphi,W)$.
Note that $\{0\}$ and $V$ are $G$-invariant from the definition.
A representation $(\varphi,V)$ is called \emph{irreducible} if $\{0\}$ and $V$ are the only $G$-invariant spaces.
Note that a one-dimensional representation is always irreducible.
When $G$ is abelian, then we have the converse from Schur's Lemma,
that is, any irreducible representation is one-dimensional.
When $G$ is non-abelian, however, an irreducible representation might have dimension more than one.
This fact makes the analysis of algorithms for functions on a non-abelian group more involved.

Two representations $(\varphi,V)$ and $(\psi,W)$ of $G$ are \emph{equivalent} if there exists an invertible linear transformation $T:V \to W$ such that, for every $x \in G$, it holds that $\psi(x) \circ T = T \circ \varphi(x)$ .
We identify equivalent representations,
and we denote by $\widehat{G}$ the family of equivalence classes of irreducible representations.
It is known that there is a one-to-one correspondence between conjugacy classes of $G$ and $\widehat{G}$.

A representation $(\varphi,V)$ is \emph{unitary} if, for all $x \in G$, $\varphi(x)$ is a unitary transformation.
For any representation of $G$, there is an equivalent unitary representation.
Hence, we can take unitary representations as a complete system of representatives of $\widehat{G}$, and we identify it with $\widehat{G}$.
Since $G$ is finite, so is $\widehat{G}$.
For $\varphi \in \widehat{G}$, we denote the dimension of its representation space by $d_{\varphi}$.
In what follows,
we fix a basis of the vector space of each representation $(\varphi,V)$,
and we regard it as a homomorphism from $G$ to $M_{d_\varphi}(\bbC)$, where $d_{\varphi}$ is the dimension of $V$.



\subsection{Fourier analysis on non-abelian finite groups}

We regard the space of $\bbC$-valued functions of $G$ as an inner product space by defining $\langle f, g\rangle = \E_{x \in G}[f(x)\overline{g(x)}]$ for $f,g:G \to \bbC$.
The following fact is known.
\begin{lemma}[\cite{steinberg2011representation}]\label{lem:matrix-coefficients-forms-orthonormal-basis}
  For a finite group $G$, the set $\bigl\{\sqrt{d_{\varphi}} \varphi_{ij} \mid  \varphi \in \widehat{G}, i, j \in [d_{\varphi}]\bigr\}$ forms an orthonormal basis of the space of $\bbC$-valued functions of $G$.
\end{lemma}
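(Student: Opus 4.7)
The plan is to prove this classical Peter--Weyl-type statement in two stages: first establish orthonormality of the family, then conclude via a dimension count that it spans the whole space of $\bbC$-valued functions on $G$.

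For orthonormality, I would invoke Schur's lemma in its averaging form. Given two irreducible unitary representations $(\varphi, V)$ and $(\psi, W)$ and any linear map $A: W \to V$, the averaged operator $T_A := \frac{1}{|G|}\sum_{x \in G} \varphi(x) A \psi(x)^{-1}$ intertwines $\varphi$ and $\psi$, i.e.\ $\varphi(y) T_A = T_A \psi(y)$ for every $y \in G$. By Schur's lemma, $T_A = 0$ if $\varphi \not\cong \psi$, and $T_A = \lambda_A \cdot \id$ (with $\lambda_A = \tr(A)/d_\varphi$) if $\varphi = \psi$. Now take $A$ to be the matrix unit $E_{kl}$ (zeros except a $1$ in position $(k,l)$), use unitarity to replace $\psi(x)^{-1}$ by $\overline{\psi(x)}^\top$, and read off the $(i,j)$-entry of $T_A$. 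The identity $T_A = 0$ gives $\E_x[\varphi_{ik}(x)\overline{\psi_{jl}(x)}] = 0$ when $\varphi \not\cong \psi$, and the identity $T_A = (\delta_{kl}/d_\varphi)\id$ gives $\E_x[\varphi_{ik}(x)\overline{\varphi_{jl}(x)}] = \delta_{ij}\delta_{kl}/d_\varphi$ in the equivalent case. Multiplying each $\varphi_{ij}$ by $\sqrt{d_\varphi}$ therefore produces an orthonormal family with respect to $\langle \cdot, \cdot\rangle$.

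To conclude completeness, I would argue by a dimension count using the (left) regular representation $(\rho, L^2(G))$, where $\rho(g)f(x) = f(g^{-1}x)$. Decomposing $\rho$ into irreducibles, one shows that each $\varphi \in \widehat{G}$ appears with multiplicity exactly $d_\varphi$ (for instance, by computing $\langle \chi_\rho, \chi_\varphi \rangle$ and using that $\chi_\rho(x) = |G| \cdot \mathbf{1}[x=1]$, which yields $\chi_\rho(1)/|G| \cdot \chi_\varphi(1) = d_\varphi$). Taking dimensions gives the identity $\sum_{\varphi \in \widehat{G}} d_\varphi^2 = |G|$. Since the space of $\bbC$-valued functions on $G$ has dimension $|G|$ and we have exhibited $\sum_\varphi d_\varphi^2 = |G|$ orthonormal vectors inside it, this family must be an orthonormal basis.

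The main (minor) obstacle is the Schur-lemma computation: keeping track of indices when one passes from the operator identity $T_{E_{kl}} = \lambda \cdot \id$ to an entrywise statement about inner products, and correctly using unitarity of $\varphi$ to turn the matrix entry of $\varphi(x)^{-1}$ into $\overline{\varphi(x)_{ji}}$. Once that bookkeeping is done, orthonormality drops out immediately, and the completeness step is purely a character-theoretic count that I would import from the standard reference~\cite{steinberg2011representation} rather than redo in detail.
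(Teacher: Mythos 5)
Your proposal is correct, but note that the paper itself gives no proof of this lemma at all: it is stated as a known fact and attributed directly to the textbook~\cite{steinberg2011representation}. What you have written out is exactly the standard Schur-orthogonality/Peter--Weyl argument that underlies that citation. Both halves of your argument check out: the averaged intertwiner $T_{E_{kl}}$ has $(i,j)$-entry $\E_x[\varphi_{ik}(x)\overline{\psi_{jl}(x)}]$ after using $\psi(x)^{-1}=\overline{\psi(x)}^{\top}$, Schur's lemma pins it down as $0$ or $\frac{\delta_{kl}}{d_\varphi}\delta_{ij}$, the $\sqrt{d_\varphi}$ rescaling then gives an orthonormal family, and the multiplicity computation $\langle\chi_\rho,\chi_\varphi\rangle=d_\varphi$ for the regular representation yields $\sum_\varphi d_\varphi^2=|G|$, so the orthonormal family has the full dimension $|G|$ and is a basis. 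The one phrase to tighten is the remark that $\lambda_A=\tr(A)/d_\varphi$; that deserves a one-line justification by taking traces of $T_A=\lambda_A\,\id$ using the cyclicity of the trace, since otherwise a reader may wonder where the normalization comes from. Apart from that, the proof is complete and is the expected one; since the authors chose to cite rather than reprove, there is no alternative argument in the paper to contrast against.
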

Hence, we can decompose $f:G \to \bbC$ as
\[
  f(x)
  = \sum_{\varphi \in \widehat{G}}d_{\varphi}\sum_{i,j \in [d_{\varphi}]}\langle f, \varphi_{ij}\rangle \varphi_{ij}(x)
  = \sum_{\varphi \in \widehat{G}}d_{\varphi}\sum_{i,j \in [d_{\varphi}]}\widehat{f} (\varphi)_{ij}\varphi_{ij}(x),
\]
where $\widehat{f}(\varphi) \in M_{d_\varphi}(\bbC)$ is defined as $\widehat{f}(\varphi) = \E_{x \in G}[f(x) \overline{\varphi(x)}]$ and called the \emph{Fourier coefficient} of $\varphi$.
This decomposition is called the Fourier expansion of $f$.
Note that Fourier coefficients are matrix-valued functions.
The following is well known.
\begin{lemma}[\cite{steinberg2011representation}]
  Let $f,g:G \to \bbC$ be functions.
  Then, we have
  \begin{align*}
    \langle f, g \rangle &=
    \sum_{\varphi}d_{\varphi}\sum_{i,j \in [d_{\varphi}]}\widehat{f}(\varphi)_{ij}\overline{\widehat{g}(\varphi)_{ij}},
    \tag{Plancherel's identity}\\
    \|f\|_2^2 & =
    \sum_{\varphi}d_{\varphi}\sum_{i,j \in [d_{\varphi}]}|\widehat{f}(\varphi)_{ij}|^2.
    \tag{Parseval's identity}
  \end{align*}
\end{lemma}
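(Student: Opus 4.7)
The plan is to derive both identities directly from Lemma~\ref{lem:matrix-coefficients-forms-orthonormal-basis}, treating Parseval as the specialization $g=f$ of Plancherel. The only real content is bookkeeping the $d_\varphi$ factors; there is no genuine obstacle once the orthonormal basis statement is in hand.

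First I would observe that the definition of $\widehat{f}(\varphi)$ coincides, entry-wise, with an inner product against a basis element: namely $\widehat{f}(\varphi)_{ij} = \E_{x\in G}[f(x)\overline{\varphi_{ij}(x)}] = \langle f, \varphi_{ij}\rangle$. Combining this with the fact that $\{\sqrt{d_\varphi}\,\varphi_{ij} : \varphi \in \widehat{G},\; i,j \in [d_\varphi]\}$ is an orthonormal basis (Lemma~\ref{lem:matrix-coefficients-forms-orthonormal-basis}), the standard basis expansion gives
\[
  f = \sum_{\varphi \in \widehat{G}} \sum_{i,j \in [d_\varphi]} \langle f, \sqrt{d_\varphi}\,\varphi_{ij}\rangle \sqrt{d_\varphi}\,\varphi_{ij}
    = \sum_{\varphi \in \widehat{G}} d_\varphi \sum_{i,j \in [d_\varphi]} \widehat{f}(\varphi)_{ij}\,\varphi_{ij},
\]
which, as a byproduct, also justifies the Fourier expansion formula stated just above the lemma.

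Next I would compute $\langle f, g\rangle$ by substituting the expansion for both $f$ and $g$, using sesquilinearity of the inner product and then the orthogonality relation $\langle \sqrt{d_\varphi}\,\varphi_{ij}, \sqrt{d_\psi}\,\psi_{kl}\rangle = \delta_{\varphi,\psi}\,\delta_{i,k}\,\delta_{j,l}$, equivalently $\langle \varphi_{ij}, \psi_{kl}\rangle = \delta_{\varphi,\psi}\,\delta_{i,k}\,\delta_{j,l} / d_\varphi$. Collapsing the four sums using the Kronecker deltas leaves
\[
  \langle f, g\rangle
  = \sum_{\varphi,\psi} d_\varphi d_\psi \sum_{i,j,k,l} \widehat{f}(\varphi)_{ij}\,\overline{\widehat{g}(\psi)_{kl}}\,\langle \varphi_{ij},\psi_{kl}\rangle
  = \sum_{\varphi} d_\varphi \sum_{i,j \in [d_\varphi]} \widehat{f}(\varphi)_{ij}\,\overline{\widehat{g}(\varphi)_{ij}},
\]
which is exactly Plancherel's identity. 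The cancellation of one factor of $d_\varphi$ against the $1/d_\varphi$ in the orthogonality relation is the only step where one must be careful.

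Finally, Parseval's identity follows by setting $g = f$, since $\|f\|_2^2 = \langle f, f\rangle$ and $\widehat{f}(\varphi)_{ij}\overline{\widehat{f}(\varphi)_{ij}} = |\widehat{f}(\varphi)_{ij}|^2$. Since every step either invokes Lemma~\ref{lem:matrix-coefficients-forms-orthonormal-basis} or is routine manipulation, there is no substantive difficulty; the proof is essentially a two-line calculation once the orthonormality of the matrix coefficients (scaled by $\sqrt{d_\varphi}$) is available.
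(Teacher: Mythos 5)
Your derivation is correct. The paper itself does not prove this lemma but cites it to \cite{steinberg2011representation}; your argument is exactly the standard textbook derivation from the orthonormality of the scaled matrix coefficients (Lemma~\ref{lem:matrix-coefficients-forms-orthonormal-basis}), with the $d_\varphi$ bookkeeping handled correctly, and Parseval obtained as the $g=f$ case.
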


\subsection{Class functions and characters}\label{subsec:characters}

For a representation $\varphi:G \to M_{d_\varphi}(\bbC)$,
the \emph{character} $\chi_\varphi:G \to \bbC$ of $\varphi$ is defined as $\chi_\varphi(x) = \tr(\varphi(x))$ for $x \in G$.
We say that a function $f:G \to \bbC$ is \emph{conjugate invariant} if $f(x) = f(yxy^{-1})$ for all $x, y \in G$.
A conjugate invariant function is sometimes called a \emph{class function}.
It is not hard to check that characters are conjugate invariant.
Indeed, the following fact is known.
\begin{lemma}
  For a finite group $G$, the set $\bigl\{\chi_{\varphi} \mid \varphi \in \widehat{G}\bigr\}$ forms an orthonormal basis of the space of $\bbC$-valued class functions of $G$.
\end{lemma}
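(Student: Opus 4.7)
The plan is to deduce this from Lemma~\ref{lem:matrix-coefficients-forms-orthonormal-basis} together with Schur's lemma. There are three things to check: (i) each $\chi_\varphi$ is a class function, (ii) the family $\{\chi_\varphi\}_{\varphi \in \widehat{G}}$ is orthonormal with respect to $\langle\cdot,\cdot\rangle$, and (iii) it spans the subspace of class functions.

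For (i), $\chi_\varphi(yxy^{-1}) = \tr(\varphi(y)\varphi(x)\varphi(y)^{-1}) = \tr(\varphi(x)) = \chi_\varphi(x)$ by the cyclic invariance of the trace, as already remarked in the text. For (ii), write $\chi_\varphi = \sum_{i \in [d_\varphi]} \varphi_{ii}$. Lemma~\ref{lem:matrix-coefficients-forms-orthonormal-basis} says $\langle \sqrt{d_\varphi}\varphi_{ij}, \sqrt{d_\psi}\psi_{kl}\rangle = \delta_{\varphi\psi}\delta_{ik}\delta_{jl}$, hence $\langle \varphi_{ii}, \psi_{kk}\rangle = \delta_{\varphi\psi}\delta_{ik}/d_\varphi$, and summing over $i,k$ gives $\langle \chi_\varphi, \chi_\psi\rangle = \delta_{\varphi\psi}$.

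Step (iii) is the main content. Let $f$ be an arbitrary class function and fix $\varphi \in \widehat{G}$. Using $f(x) = f(y^{-1}xy)$ for every $y \in G$ and averaging over $y$, we get
\[
  \widehat{f}(\varphi) = \E_x f(x) \overline{\varphi(x)} = \E_x f(x) \overline{\E_y \varphi(y^{-1}xy)} = \E_x f(x) \overline{M(x)},
\]
where $M(x) := \E_y \varphi(y)^{-1}\varphi(x)\varphi(y)$. A direct substitution $y \mapsto yz$ shows $\varphi(z)^{-1}M(x)\varphi(z) = M(x)$ for every $z \in G$, so $M(x)$ is an intertwiner for the irreducible representation $\varphi$. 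By Schur's lemma, $M(x) = c(x) I$ for some scalar $c(x)$; taking traces yields $c(x) = \chi_\varphi(x)/d_\varphi$. Substituting back, $\widehat{f}(\varphi)_{ij} = \delta_{ij}\langle f, \chi_\varphi\rangle/d_\varphi$, i.e., $\widehat{f}(\varphi)$ is a scalar matrix. Plugging this into the Fourier expansion from Lemma~\ref{lem:matrix-coefficients-forms-orthonormal-basis} collapses the double sum to
\[
  f(x) = \sum_{\varphi \in \widehat{G}} d_\varphi \sum_{i,j} \widehat{f}(\varphi)_{ij}\varphi_{ij}(x) = \sum_{\varphi \in \widehat{G}} \langle f, \chi_\varphi\rangle \chi_\varphi(x),
\]
so $\{\chi_\varphi\}$ spans the space of class functions. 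Combined with (ii) this yields the orthonormal basis statement.

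The main obstacle is the Schur-lemma step in (iii): one needs to produce a matrix-valued object that commutes with the whole image of $\varphi$ and then justify, via irreducibility, that it must be a scalar matrix. Everything else is bookkeeping with the expansion from Lemma~\ref{lem:matrix-coefficients-forms-orthonormal-basis} and with the definition of the inner product.
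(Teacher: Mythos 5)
The paper itself does not prove this lemma: it is stated as a known fact with a reference to the textbook \cite{steinberg2011representation}, so there is no in-paper proof to compare against. Your argument is correct and is the standard route. Parts (i) and (ii) are exactly right (cyclic invariance of the trace, and summing the diagonal matrix coefficients using the orthogonality relations of Lemma~\ref{lem:matrix-coefficients-forms-orthonormal-basis}), and the spanning step (iii) correctly isolates the key idea: average $\varphi(y)^{-1}\varphi(x)\varphi(y)$ over $y$, observe that the result intertwines $\varphi$ with itself, apply Schur's lemma to conclude it is a scalar matrix, and identify the scalar by taking traces.

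It is worth flagging a structural point. The intermediate fact you establish in (iii) --- that $\widehat f(\varphi)$ is a scalar matrix whenever $f$ is a class function --- is exactly Lemma~\ref{lem:fourier-expansion-of-class-function} of the paper. The paper proves that lemma by a Cauchy--Schwarz equality-case argument built on Lemmas~\ref{lem:trace-of-fourier-coefficient} and~\ref{lem:distance-to-class-functions}, but the proof of Lemma~\ref{lem:distance-to-class-functions} already invokes the expansion $g = \sum_{\varphi}\langle g, \chi_\varphi\rangle\chi_\varphi$, i.e.\ the spanning claim of the very lemma at hand. So the paper's chain of reasoning uses the present lemma as an external black box, whereas your Schur-lemma route proves the spanning claim and Lemma~\ref{lem:fourier-expansion-of-class-function} simultaneously and independently. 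If one wanted the paper's development to be self-contained, your proof is the one to splice in, after which the paper's Cauchy--Schwarz derivation of Lemma~\ref{lem:fourier-expansion-of-class-function} becomes redundant.
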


Note that if a representation is one-dimensional, its character is identical to the original representation, hence is a homomorphism.
This is not the case in general.

The following lemma says that Fourier coefficients of a class function are always diagonal.
\begin{lemma}\label{lem:fourier-expansion-of-class-function}
  For any class function $f: G\to \bbC$, it holds that $\widehat{f}(\varphi) = \frac{\langle f, \chi_{\varphi} \rangle}{d_{\varphi}} I_{d_{\varphi}}$.
\end{lemma}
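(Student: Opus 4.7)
The plan is to exploit two facts: (i) the conjugate invariance of $f$ forces the matrix $\widehat{f}(\varphi)$ to commute with every $\overline{\varphi(y)}$, and (ii) Schur's lemma then forces $\widehat{f}(\varphi)$ to be a scalar multiple of the identity. The scalar is then pinned down by taking a trace.

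In more detail, I would first expand $\overline{\varphi(y)}\,\widehat{f}(\varphi)$ using the definition $\widehat{f}(\varphi) = \E_{x\in G}[f(x)\overline{\varphi(x)}]$. Since $\varphi$ is a homomorphism, $\overline{\varphi(y)}\,\overline{\varphi(x)} = \overline{\varphi(yx)}$, so after reindexing by $z = yx$ (which preserves Haar measure on the finite group) we obtain
\[
  \overline{\varphi(y)}\,\widehat{f}(\varphi) = \E_{z\in G}\bigl[f(y^{-1}z)\,\overline{\varphi(z)}\bigr].
\]
An analogous reindexing with $z = xy$ gives $\widehat{f}(\varphi)\,\overline{\varphi(y)} = \E_{z}[f(zy^{-1})\,\overline{\varphi(z)}]$. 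Now conjugate invariance of $f$ yields $f(y^{-1}z) = f(y \cdot y^{-1}z \cdot y^{-1}) = f(zy^{-1})$, so the two expressions are equal, showing $\overline{\varphi(y)}$ commutes with $\widehat{f}(\varphi)$ for every $y \in G$.

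Since $\varphi$ is irreducible, so is its conjugate $\overline{\varphi}$ (they share the same invariant subspaces, as complex conjugation is just a change of coordinates on matrix entries). Schur's lemma therefore implies that $\widehat{f}(\varphi) = c\, I_{d_\varphi}$ for some $c \in \bbC$. To compute $c$, I would take the trace of both sides: on the one hand $\tr(c I_{d_\varphi}) = c\, d_\varphi$, and on the other
\[
  \tr\bigl(\widehat{f}(\varphi)\bigr) = \E_{x\in G}\bigl[f(x)\,\tr(\overline{\varphi(x)})\bigr] = \E_{x\in G}\bigl[f(x)\,\overline{\chi_\varphi(x)}\bigr] = \langle f, \chi_\varphi\rangle,
\]
so $c = \langle f, \chi_\varphi\rangle / d_\varphi$, which gives the claimed identity.

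The only place that needs any care is the commutation step, and in particular invoking Schur's lemma for the conjugate representation $\overline{\varphi}$ rather than for $\varphi$ itself; once this is set up correctly the rest is a one-line trace computation. There is no essential obstacle.
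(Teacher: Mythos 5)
Your proof is correct, and it takes a genuinely different route from the paper. You show that conjugate invariance forces $\widehat f(\varphi)$ to commute with $\overline{\varphi(y)}$ for all $y$, invoke irreducibility of $\overline\varphi$ together with Schur's lemma to conclude $\widehat f(\varphi) = cI_{d_\varphi}$, and pin down $c$ by a trace. The paper instead stays entirely inside the Fourier-analytic toolkit it has already built: it applies Lemma~\ref{lem:distance-to-class-functions} with $g=f$ to get $\|f\|_2^2 = \sum_\varphi |\tr\widehat f(\varphi)|^2$, sandwiches this between a Cauchy--Schwarz bound and Parseval, and reads off from the equality conditions that $\widehat f(\varphi)$ is diagonal with all diagonal entries equal. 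Your approach is shorter and more standard in representation theory, and it avoids relying on the auxiliary Lemma~\ref{lem:distance-to-class-functions}; the paper's approach avoids invoking Schur's lemma and irreducibility of the conjugate representation, which keeps the section's dependencies purely on the Plancherel/Parseval machinery it has already set up. One small caution: your parenthetical justification for why $\overline\varphi$ is irreducible is imprecise. Complex conjugation is antilinear, not a change of coordinates, and $\varphi$ and $\overline\varphi$ do not literally share the same invariant subspaces; rather, $W$ is $\overline\varphi$-invariant iff $\overline W := \{\overline w : w\in W\}$ is $\varphi$-invariant (equivalently, since $\varphi$ is unitary, $\overline\varphi$ is the dual representation, whose irreducibility is a standard fact). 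With that justification tightened, your argument is complete.
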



In order to prove Lemma~\ref{lem:fourier-expansion-of-class-function}, we need the following two auxiliary lemmas:
\begin{lemma}\label{lem:trace-of-fourier-coefficient}
  For a function $f:G \to \bbC$ and an irreducible representation $\varphi$, we have
  \begin{align*}
    \langle f, \chi_\varphi \rangle = \tr(\widehat{f}(\varphi)).
  \end{align*}
\end{lemma}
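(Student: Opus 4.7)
The plan is to unfold both sides of the claimed identity directly from the definitions; no deep machinery is required, since this lemma is essentially a compatibility statement between the matrix-valued Fourier coefficient and the scalar character.

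First I would recall that the Fourier coefficient is defined entrywise by
\[
  \widehat{f}(\varphi)_{ij} = \langle f, \varphi_{ij}\rangle = \E_{x \in G}[f(x)\overline{\varphi_{ij}(x)}],
\]
so $\widehat{f}(\varphi) = \E_{x \in G}[f(x)\overline{\varphi(x)}]$, where $\overline{\varphi(x)}$ denotes the entrywise complex conjugate of the matrix $\varphi(x)$. Then I would compute the trace by linearity of trace and of expectation:
\[
  \tr\bigl(\widehat{f}(\varphi)\bigr)
  = \sum_{i \in [d_\varphi]} \widehat{f}(\varphi)_{ii}
  = \sum_{i \in [d_\varphi]} \E_{x \in G}\bigl[f(x)\overline{\varphi_{ii}(x)}\bigr]
  = \E_{x \in G}\Bigl[f(x)\sum_{i \in [d_\varphi]}\overline{\varphi_{ii}(x)}\Bigr].
\]

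Next I would use the elementary fact that $\tr(\overline{A}) = \overline{\tr(A)}$ for any square matrix $A$ (since the diagonal entries are conjugated term by term), which gives
\[
  \sum_{i \in [d_\varphi]}\overline{\varphi_{ii}(x)}
  = \overline{\tr(\varphi(x))}
  = \overline{\chi_\varphi(x)}.
\]
Substituting this into the previous display yields
\[
  \tr\bigl(\widehat{f}(\varphi)\bigr)
  = \E_{x \in G}\bigl[f(x)\overline{\chi_\varphi(x)}\bigr]
  = \langle f, \chi_\varphi \rangle,
\]
which is the claimed identity.

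There is essentially no obstacle here: the proof is just a two-line manipulation swapping the order of the trace and the expectation, together with the observation that trace and entrywise conjugation commute. The irreducibility hypothesis on $\varphi$ is not actually used in this lemma; it will only enter when Lemma~\ref{lem:trace-of-fourier-coefficient} is combined with Schur-type orthogonality relations to deduce the diagonal form claimed in Lemma~\ref{lem:fourier-expansion-of-class-function}.
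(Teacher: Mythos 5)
Your proof is correct, and it takes a genuinely more direct route than the paper. The paper first expands $f$ in its Fourier series $f = \sum_{\rho} d_\rho \sum_{i,j}\widehat{f}(\rho)_{ij}\rho_{ij}$, writes $\chi_\varphi = \sum_k \varphi_{kk}$, and then invokes the orthogonality relations $\langle \rho_{ij},\varphi_{kk}\rangle = \delta_{\rho\varphi}\,\delta_{ik}\,\delta_{jk}/d_\varphi$ coming from Lemma~\ref{lem:matrix-coefficients-forms-orthonormal-basis} to collapse the resulting triple sum down to $\sum_k \widehat{f}(\varphi)_{kk}$. You instead start from the other side, unfolding $\tr(\widehat{f}(\varphi))$ entrywise from the definition $\widehat{f}(\varphi) = \E_x[f(x)\overline{\varphi(x)}]$, exchanging trace with expectation, and recognizing $\sum_i \overline{\varphi_{ii}(x)} = \overline{\chi_\varphi(x)}$. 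This bypasses both the Fourier expansion of $f$ and the orthogonality relations; the identity becomes, as you say, a two-line compatibility of definitions. Your remark that irreducibility of $\varphi$ plays no role in the computation is also accurate (the hypothesis is present mainly because the paper defines $\widehat{f}(\varphi)$ only for $\varphi \in \widehat{G}$). What your route buys is brevity and the clarity that no representation-theoretic input is needed here; what the paper's route illustrates is how the Fourier/orthogonality machinery can be used uniformly, which foreshadows the later lemmas where that machinery is genuinely needed.
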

\begin{proof}
  \begin{align*}
    \langle f, \chi_\varphi \rangle
    & =
    \E_x \sum_\rho d_\rho \sum_{i,j}\widehat{f}(\rho)_{ij} \rho_{ij}(x) \sum_k \overline{\varphi_{kk}(x)}
    =
    \sum_\rho d_\rho \sum_{i,j}\widehat{f}(\rho)_{ij}\sum_k \langle \rho_{ij},\varphi_{kk}\rangle \\
    & =
    \sum_{k}\widehat{f}(\varphi)_{kk}
    =
    \tr(\widehat{f}(\varphi)).
    \qedhere
  \end{align*}
\end{proof}


\begin{lemma}\label{lem:distance-to-class-functions}
  Let $f:G \to \bbC$ be a function and $g:G\to \bbC$ be a class function.
  Then,
  \begin{align*}
    \langle f, g \rangle = \sum_{\varphi} \tr(\widehat{f}(\varphi))\overline{\tr(\widehat{g}(\varphi))}.
  \end{align*}
\end{lemma}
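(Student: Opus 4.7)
The plan is to prove Lemma~\ref{lem:distance-to-class-functions} by combining Plancherel's identity with the diagonal structure of the Fourier coefficients of a class function. Since $g$ is a class function, Lemma~\ref{lem:fourier-expansion-of-class-function} tells us $\widehat{g}(\varphi) = \frac{\langle g,\chi_\varphi\rangle}{d_\varphi} I_{d_\varphi}$, so the off-diagonal entries vanish and only the diagonal entries of $\widehat{f}(\varphi)$ survive when one expands the inner product in the Fourier basis. This immediately collapses the double sum over matrix entries into a trace.

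Concretely, I would first apply Plancherel's identity to write
\[
\langle f, g\rangle = \sum_{\varphi} d_\varphi \sum_{i,j \in [d_\varphi]} \widehat{f}(\varphi)_{ij}\,\overline{\widehat{g}(\varphi)_{ij}}.
\]
Next, substitute $\widehat{g}(\varphi)_{ij} = \frac{\langle g,\chi_\varphi\rangle}{d_\varphi}\delta_{ij}$ from Lemma~\ref{lem:fourier-expansion-of-class-function}, so that only the $i=j$ terms remain and the factor $d_\varphi$ cancels the $1/d_\varphi$:
\[
\langle f, g\rangle = \sum_\varphi \overline{\langle g,\chi_\varphi\rangle}\sum_{i \in [d_\varphi]} \widehat{f}(\varphi)_{ii} = \sum_\varphi \tr(\widehat{f}(\varphi))\,\overline{\langle g,\chi_\varphi\rangle}.
\]
Finally, Lemma~\ref{lem:trace-of-fourier-coefficient} applied to $g$ gives $\langle g,\chi_\varphi\rangle = \tr(\widehat{g}(\varphi))$, which yields the claimed identity.

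There is no substantive obstacle here; the result is essentially a bookkeeping consequence of the two preceding lemmas. The only point to double-check is the conjugation: $\langle g, \chi_\varphi\rangle$ appears conjugated in Plancherel's identity, and since Lemma~\ref{lem:trace-of-fourier-coefficient} identifies $\langle g,\chi_\varphi\rangle$ with $\tr(\widehat{g}(\varphi))$ without any conjugation, the conjugate then sits correctly on $\tr(\widehat{g}(\varphi))$ in the final formula.
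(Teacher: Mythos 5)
Your calculation is clean and the algebra goes through, but there is a circularity problem that makes this argument unusable in the paper's logical ordering. You invoke Lemma~\ref{lem:fourier-expansion-of-class-function} (that $\widehat{g}(\varphi) = \frac{\langle g,\chi_\varphi\rangle}{d_\varphi} I_{d_\varphi}$) as your key input. But look at how the paper is organized: the text explicitly says ``In order to prove Lemma~\ref{lem:fourier-expansion-of-class-function}, we need the following two auxiliary lemmas,'' and those two auxiliary lemmas are Lemma~\ref{lem:trace-of-fourier-coefficient} and Lemma~\ref{lem:distance-to-class-functions}. The paper's proof of Lemma~\ref{lem:fourier-expansion-of-class-function} then indeed cites Lemma~\ref{lem:distance-to-class-functions} in its very first line. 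So the dependency graph in the paper is Lemma~\ref{lem:trace-of-fourier-coefficient} $\Rightarrow$ Lemma~\ref{lem:distance-to-class-functions} $\Rightarrow$ Lemma~\ref{lem:fourier-expansion-of-class-function}, and your argument runs the last arrow backwards, creating a cycle.

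The paper's own proof avoids this by not assuming anything about the shape of $\widehat{g}(\varphi)$. Instead it expands $g$ in the character basis, $g = \sum_\varphi \langle g,\chi_\varphi\rangle \chi_\varphi$ (which follows from the orthonormality of irreducible characters among class functions, not from Lemma~\ref{lem:fourier-expansion-of-class-function}), substitutes into $\langle f,g\rangle$, and uses orthogonality of matrix coefficients $\langle \varphi_{ij},\varphi'_{j'j'}\rangle$ plus Lemma~\ref{lem:trace-of-fourier-coefficient} to collapse the sum. Your route would be perfectly fine if Lemma~\ref{lem:fourier-expansion-of-class-function} had been established first by some independent means (for instance, showing $\widehat{g}(\varphi)$ commutes with all $\varphi(y)$ and appealing to Schur's lemma); indeed it is arguably the more conceptual proof. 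But as the paper stands, you would need to supply that independent proof of the diagonal structure, or else switch to an argument like the paper's that does not presuppose it.
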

\begin{proof}
  Since $g$ is a class function, we can represent $g(x) = \sum_{\varphi}\langle g, \chi_\varphi\rangle \chi_\varphi(x)$.
  Now we have
  \begin{align*}
    \langle f, g \rangle
    & =
    \sum_{\varphi,\varphi'}d_{\varphi}\sum_{i,j} \widehat{f}(\varphi)_{ij} \overline{\langle g, \chi_\varphi'\rangle} \E_x [ \varphi_{ij}(x) \overline{\chi_{\varphi'}(x)}] \\
    & =
    \sum_{\varphi,\varphi'}d_{\varphi}\sum_{i,j} \widehat{f}(\varphi)_{ij} \sum_{i'}\overline{\widehat{g}(\varphi')_{i'i'}} \sum_{j'} \E_x [ \varphi_{ij}(x) \overline{\varphi'_{j'j'}(x)}] \tag{from Lemma~\ref{lem:trace-of-fourier-coefficient}}\\
    & =
    \sum_{\varphi,\varphi'}d_{\varphi}\sum_{i,j} \widehat{f}(\varphi)_{ij} \sum_{i'}\overline{\widehat{g}(\varphi')_{i'i'}} \sum_{j'} \langle \varphi_{ij}, \varphi'_{j'j'}\rangle \\
    & =
    \sum_{\varphi}\sum_{i} \widehat{f}(\varphi)_{ii} \sum_{i'}\overline{\widehat{g}(\varphi)_{i'i'}}  \\
    & =
    \sum_{\varphi} \tr(\widehat{f}(\varphi))\overline{\tr(\widehat{g}(\varphi))}.
    \qedhere
  \end{align*}
\end{proof}

\begin{proof}[Proof of Lemma~\ref{lem:fourier-expansion-of-class-function}]
  \begin{align*}
    & \| f\|_{2}^{2} = \sum_{\varphi} \Bigl| \tr(\widehat{f}(\varphi))\Bigr|^{2} \quad \tag{from Lemma~\ref{lem:distance-to-class-functions}}\\
    = & \sum_{\varphi} \Bigl| \sum_{i\in [d_{\varphi}]} \widehat{f}(\varphi)_{ii}\Bigr|^{2}
    \leq \sum_{\varphi} d_{\varphi} \sum_{i\in [d_{\varphi}]} \Bigl| \widehat{f}(\varphi)_{ii}\Bigr|^{2}  \tag{by Cauchy-Schwarz}\\
    \leq & \sum_{\varphi} d_{\varphi} \sum_{i, j\in [d_{\varphi}]} \left| \widehat{f}(\varphi)_{ij}\right|^{2}
    = \| f\|_{2}^{2}. \tag{by Parseval's identity}
  \end{align*}
  Therefore, the equality holds for both inequalities in the formula above.
  In particular, $\widehat{f}(\varphi)$ is proportional to the identity matrix $cI_{d_\varphi}$ for some $c \in \bbC$.
  By Lemma~\ref{lem:trace-of-fourier-coefficient},
  $\langle f, \chi_{\varphi}\rangle = \tr(\widehat{f}(\varphi)) = cd_\varphi$.
  Hence, $c = \langle f, \chi_\varphi \rangle / d_\varphi$, and we have the lemma.
\end{proof}

\subsection{Introduction to the Haar measure}\label{app:haar-measure}
In this section, we introduce Haar measure briefly.
See, e.g., a textbook~\cite{Alexander2008LieGroup} for more details.


A \emph{topological group} is a group equipped with a topology and whose group operations are continuous in its topology.
A finite group is a topological group if it is endowed with a discrete topology.
Any subgroup of $M_{d}(\bbC)$ is a topological group, in which we identify $M_{d}(\bbC)$ with $\bbC^{d^{2}}$ and introduce the topology induced by $\bbC^{d^{2}}$.

We call a measure $\mu$ on a topological group $G$ \emph{left invariant} (resp., \emph{right invariant}) if $\mu(xS) = \mu(S)$ (resp., $\mu(S) = \mu(Sx)$) for any $x\in G$ and Borel set $S \subseteq G$.
Similarly, We call $\mu$ \emph{invariant under taking inverse} if $\mu(S^{-1}) = \mu(S)$ for any Borel set $S \subseteq G$ where $S^{-1} = \{x^{-1} \mid x\in S\}$.

For any compact topological group $G$, there exists a measure on $G$ which is left invariant, right invariant, and invariant under taking inverse.
Such a measure is unique up to scalar multiplication and called the Haar measure on $G$.
For example, the Haar measure $\mu$ of a finite group $G$ is (a scalar multiplication of) the counting measure, that is, $\mu(S) = |S|/|G|$ for subset $S \subseteq G$.

We regard $U(d)$ as a closed subgroup of $M_{d}(\bbC)$ and regard it as a compact topological group.
Hence, the Haar measure of $U(d)$ exists.

\section{Conjugate Invariance}\label{sec:conjugate}

In this section, we first show that conjugate-invariance is constant-query testable.
\begin{theorem}\label{the:conjugate-invariance}
  Conjugate invariance is one-sided error testable with $O(1/\epsilon^2)$ queries.
\end{theorem}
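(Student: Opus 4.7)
The plan is to analyze the obvious two-query subroutine: sample $x,y \in G$ independently and uniformly at random, and reject if $f(x) \neq f(yxy^{-1})$; the full tester iterates this subroutine $T = \Theta(1/\epsilon^2)$ times and rejects if any iteration rejects. One-sided completeness is immediate from the definition of conjugate invariance: if $f(x) = f(yxy^{-1})$ for all $x,y$, then every subroutine call accepts.

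For soundness I would exhibit an explicit conjugate-invariant function that is close to $f$ whenever the per-trial rejection probability is small. Set
\[
  g(x) \;:=\; \E_{y \in G}\bigl[f(yxy^{-1})\bigr].
\]
Two quick checks: substituting $y \mapsto yz$ in the expectation (which preserves the uniform measure on $G$) shows $g(zxz^{-1}) = g(x)$, so $g$ is conjugate invariant, and since $\bbD$ is convex and $f$ takes values in $\bbD$, so does $g$. Thus $g$ is a legitimate competitor in the definition of $\dist(f,\caP)$.

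The core estimate is a one-line Jensen argument. Pointwise,
\[
  |f(x) - g(x)|^2 \;=\; \bigl|\E_y\bigl[f(x) - f(yxy^{-1})\bigr]\bigr|^2 \;\leq\; \E_y\bigl|f(x) - f(yxy^{-1})\bigr|^2,
\]
so averaging over $x$ yields $\|f-g\|_2^2 \leq \E_{x,y}|f(x)-f(yxy^{-1})|^2$. Since $f$ takes values in $\bbD$, the integrand is bounded by $4 \cdot \mathbf{1}[f(x) \neq f(yxy^{-1})]$, giving
\[
  \dist(f,g)^2 \;=\; \tfrac{1}{4}\|f-g\|_2^2 \;\leq\; \Pr_{x,y}\bigl[f(x) \neq f(yxy^{-1})\bigr].
\]
If $f$ is $\epsilon$-far from conjugate invariance, the left-hand side is at least $\epsilon^2$, so each subroutine call rejects with probability at least $\epsilon^2$; then $T = \lceil 2/\epsilon^2\rceil$ independent trials reduce the all-accept probability to at most $(1-\epsilon^2)^T \leq e^{-2} < 1/3$. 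The query budget is $2T = O(1/\epsilon^2)$.

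There is no real obstacle here: everything reduces to choosing the right candidate $g$ (the conjugation average), and the only care needed is the convexity of $\bbD$ (to keep $g$ in range) and translation invariance of the uniform measure (to make $g$ conjugate invariant), after which Jensen's inequality converts the $L_2$ approximation guarantee to the empirical disagreement probability up to the constant $4$, which is absorbed in the factor $\tfrac14$ in the definition of $\dist$.
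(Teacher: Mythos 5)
Your proof is correct and takes a genuinely different route from the paper's. The paper's candidate class function is $\widetilde{f}(x) = z_{x^\sharp}$, the \emph{most common} value of $f$ on the conjugacy class of $x$, and the soundness bound $\Pr_{x,y}[f(x)\neq f(yxy^{-1})] \geq \dist(f,\widetilde{f})^2$ is obtained by a counting argument involving the per-class mode probabilities $p_C$ together with a lemma establishing that conjugation by a uniform $y$ induces the uniform distribution on the conjugacy class. Your candidate is the \emph{conjugation average} $g(x) = \E_y[f(yxy^{-1})]$, and the same inequality (with $g$ in place of $\widetilde{f}$) drops out of a one-line Jensen computation that requires only translation invariance of the uniform measure on $G$ and convexity of $\bbD$ — you never need the equidistribution-over-conjugacy-classes lemma or any reasoning about the probability mass function within a class. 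For the purpose of Theorem~\ref{the:conjugate-invariance} alone, your argument is shorter and more self-contained. The paper's choice of $\widetilde{f}$ is, however, not arbitrary: the mode function is what feeds into the self-correction procedure of Lemma~\ref{lem:reduced-to-class-function}, because the majority value on a class can be recovered with high probability from a constant number of conjugate queries and checked for consistency, whereas the average $g(x)$ can only be estimated approximately and does not yield the exact oracle access to a class function that the self-correction step needs. So the two candidates play complementary roles: yours gives a cleaner proof of Theorem~\ref{the:conjugate-invariance} in isolation, while the paper's sets up the machinery reused downstream.
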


Then, we show the following lemma, which simplifies testing properties that imply conjugate invariance.
\begin{lemma}\label{lem:reduced-to-class-function}
  Let $P$ be a property such that every $f$ satisfying $P$ is a class function.
  Suppose that there is a tester $\caA$ for $P$ with query complexity $q(\epsilon)$ if the input is restricted to be a class function.
  Then there is a tester $\caA'$ for $P$ with query complexity $O(1/\epsilon^2 + q(\epsilon/2) \log q(\epsilon/2))$.
  Moreover, if $\caA$ is a one-sided error tester, then $\caA'$ is also a one-sided error tester.
\end{lemma}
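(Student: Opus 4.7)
The plan is to reduce the general case to testing $P$ on class functions. I would use the orthogonal $L_2$-projection $\tilde{f}(x) := \E_{y \in G}[f(yxy^{-1})]$ as a proxy for $f$; since $\tilde{f}$ is a class function with values in $\bbD$ and is the projection of $f$ onto the subspace of class functions, the Pythagorean identity $\dist(f,g)^2 = \dist(f,\tilde{f})^2 + \dist(\tilde{f},g)^2$ holds for every class function $g$. Theorem~\ref{the:conjugate-invariance} will handle the case that $f$ is far from $\tilde{f}$, while a simulated execution of $\mathcal{A}$ on $\tilde{f}$ will handle the complementary case.

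Concretely, $\mathcal{A}'$ first runs the one-sided tester of Theorem~\ref{the:conjugate-invariance} on $f$ with parameter $\epsilon/2$ (costing $O(1/\epsilon^2)$ queries) and rejects if that subtest rejects. Otherwise it simulates $\mathcal{A}$ with parameter $\epsilon/2$: each time $\mathcal{A}$ asks for the value of its input at some $x$, we invoke Lemma~\ref{lem:estimate} with $s = O(\log q(\epsilon/2))$ fresh uniform samples $y_1, \ldots, y_s$ and return $\frac{1}{s} \sum_j f(y_j x y_j^{-1})$ as an estimate of $\tilde{f}(x)$. The simulation costs $O(q(\epsilon/2) \log q(\epsilon/2))$ queries, giving the claimed total, and $\mathcal{A}'$ outputs whatever $\mathcal{A}$ returns.

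For completeness, the key observation is that when $f \in P$, $f$ is itself a class function, so $f(y_j x y_j^{-1}) = f(x)$ for every $y_j$ and every estimate returned to $\mathcal{A}$ equals $f(x)$ exactly---no sampling noise enters. Consequently the conjugate-invariance subtest accepts with probability $1$, and $\mathcal{A}$ sees the genuine class function $f \in P$ and accepts with probability at least $2/3$; one-sidedness of $\mathcal{A}$ is preserved by $\mathcal{A}'$. For soundness, suppose $f$ is $\epsilon$-far from $P$. If $\dist(f, \tilde{f}) \ge \epsilon/2$, the first subtest rejects; otherwise the Pythagorean identity forces $\dist(\tilde{f}, g) \ge \sqrt{3}\,\epsilon/2 > \epsilon/2$ for every $g \in P$, so $\tilde{f}$ is $\epsilon/2$-far from $P$ and $\mathcal{A}$ on the idealized input $\tilde{f}$ rejects with probability at least $2/3$.

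The step I expect to be the main obstacle is transferring $\mathcal{A}$'s soundness guarantee from the idealized execution on the exact $\tilde{f}$ to the simulated execution on the noisy estimates. The plan is to calibrate the tolerance and confidence of Lemma~\ref{lem:estimate} so that, by a union bound over the $q(\epsilon/2)$ queries, every answer supplied to $\mathcal{A}$ matches $\tilde{f}(x)$ within the required accuracy with probability at least $1 - 1/(3 q(\epsilon/2))$; on that event the simulated and idealized executions of $\mathcal{A}$ can be coupled, and the $2/3$ reject probability transfers up to an additive constant that is absorbed by tightening the hidden constants. Making this go through with only $s = O(\log q(\epsilon/2))$ samples per query is the delicate point, and relies on the tolerance inherent in property testing (rather than exact decision making) for $\mathcal{A}$.
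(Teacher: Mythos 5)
Your reduction diverges from the paper's at the crucial step and has a genuine gap. You take $\tilde f(x) := \E_{y}[f(yxy^{-1})]$, the $L_2$-orthogonal projection of $f$ onto class functions (which is indeed the class function $f^*$ nearest to $f$, and your Pythagorean step is fine). But you then propose to simulate $\mathcal A$ on $\tilde f$ by returning sample averages $\frac{1}{s}\sum_j f(y_j x y_j^{-1})$ with only $s = O(\log q(\epsilon/2))$ samples per query. Such an estimate carries additive error $\Theta(1/\sqrt{\log q})$; it is not the value of any fixed class function. The hypothesis of the lemma guarantees only that $\mathcal A$ works when its input \emph{is} a class function accessed exactly. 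There is no ``tolerance inherent in property testing'' that you can invoke: a tester on class functions may be arbitrarily sensitive to $o(1)$ perturbations of the values it reads, and nothing in the hypothesis rules this out. So the coupling between the simulated run and the idealized run on $\tilde f$ that you need in the soundness case does not exist, and the step you flag as ``the main obstacle'' is in fact where the argument breaks. (Cranking up $s$ to get small per-query error would cost $\mathrm{poly}(1/\epsilon)$ per query rather than $O(\log q)$, blowing past the stated complexity and still leaving an uncontrolled residual error.)

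The paper avoids this by not targeting $\tilde f = f^*$ at all. It builds a \emph{different} nearby class function $f'$ whose value at $x$ is, on conjugacy classes where $f$ takes a strict majority value, exactly that majority value $z_{x^\sharp}$. A self-corrector then samples $O(\log 1/\delta)$ conjugates $y_i x y_i^{-1}$ and either (a) sees all answers agree and outputs that common value — which, whenever the majority probability exceeds $1/2$, equals $f'(x)$ exactly — or (b) sees a disagreement, which is a hard witness that $f$ is not a class function, and rejects. Crucially no noisy value is ever passed to $\mathcal A$: each simulated query returns the \emph{exact} $f'(x)$ or aborts. A separate argument (a Cauchy--Schwarz-type bound comparing $z_C$ to the mean $z^*_C$) shows $\dist(f, f') \le 3\,\dist(f, f^*)$, so $f'$ is still close enough to $f$ for the Pythagorean-style soundness argument to go through with adjusted constants. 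Replacing your noisy projection estimate with this exact self-correction is the missing idea.
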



\subsection{Proof of Theorem~\ref{the:conjugate-invariance}}\label{subsec:testing-conjugate-invariance}
\begin{algorithm}[t!]
  \caption{(Tester for conjugate invariance)}
  \label{alg:conjugate-invariance}
  \begin{algorithmic}[1]
  \FOR{$s := O(1/\epsilon^2)$ times}
  \STATE Sample $x$ and $y \in G$ uniformly at random.
  \STATE \textbf{if} $f(x) \neq f(yxy^{-1})$ \textbf{then} reject. \label{line:conjugate-invariance-rejection}
  \ENDFOR
  \STATE Accept.
  \end{algorithmic}
\end{algorithm}
Our algorithm for testing conjugate invariance is described in Algorithm~\ref{alg:conjugate-invariance}.
It is easy to see that the query complexity of Algorithm~\ref{alg:conjugate-invariance} is $O(1/\epsilon^2)$ and the tester always accepts when $f$ is conjugate invariant.
Thus, it suffices to show that Algorithm~\ref{alg:conjugate-invariance} rejects with probability at least $2/3$ when $f$ is $\epsilon$-far from class functions.

It is well known that conjugacy classes of $G$ form a partition of $G$.
We define $G^\sharp$ as the set of conjugacy classes of $G$.
Also, for an element $x \in G$, we define $x^\sharp$ as the unique conjugacy class $x$ belongs to.

For $y\in x^\sharp$, we define $N_{x, y} = \{z\in G \mid zxz^{-1} = y \}$.
Since we have $N_{x, y} \cap N_{x, y'} = \emptyset$
for $y, y' \in x^\sharp$ with $y \neq y'$,
it holds that $G=\bigsqcup_{y\in x^\sharp} N_{x, y}$.
Therefore, the following lemma guarantees that uniform sampling from a conjugacy class is executed by uniformly sampling from the whole group.
\begin{lemma}\label{lem:sampling-from-conjugacy-class}
  The number of elements in $N_{x, y}$ depends only on the conjugacy class to which $y$ belongs.
\end{lemma}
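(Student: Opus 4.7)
The plan is to show that $N_{x,y}$ is a left coset of the centralizer $C_G(x) = \{z \in G \mid zxz^{-1} = x\}$, and therefore has cardinality $|C_G(x)|$, which is manifestly independent of the choice of $y$ within $x^\sharp$. Observe first that $N_{x,x}$ is precisely $C_G(x)$, so the cardinality we want to pin down is $|N_{x,x}|$.

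Fix $x$ and $y \in x^\sharp$. Since $y$ and $x$ are conjugate, we may choose some $z_0 \in G$ with $z_0 x z_0^{-1} = y$, so that $z_0 \in N_{x,y}$ (in particular $N_{x,y}$ is nonempty). I then plan to verify the two-way implication $z \in N_{x,y} \iff z_0^{-1} z \in C_G(x)$, which is a short manipulation: $zxz^{-1} = y = z_0 x z_0^{-1}$ is equivalent to $(z_0^{-1}z) x (z_0^{-1}z)^{-1} = x$. This shows $N_{x,y} = z_0 \cdot N_{x,x}$, and left multiplication by $z_0$ is a bijection $G \to G$, so $|N_{x,y}| = |N_{x,x}|$.

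The quantity $|N_{x,x}| = |C_G(x)|$ depends only on the conjugacy class $x^\sharp$ (centralizers of conjugate elements have the same cardinality, since conjugation by any $g$ carries $C_G(x)$ onto $C_G(gxg^{-1})$), but for the lemma as stated we only need that it is independent of the chosen $y \in x^\sharp$, and that falls out of the coset description above.

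There is no real obstacle here; the only thing to be careful about is confirming that $N_{x,y}$ is nonempty, which uses exactly the hypothesis $y \in x^\sharp$, and writing the coset identification in one direction rather than as a vague ``the map $z \mapsto z_0^{-1}z$ is a bijection between $N_{x,y}$ and $N_{x,x}$'' without checking it lands in the right set.
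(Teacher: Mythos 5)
Your proof is correct. You establish the identity $N_{x,y} = z_0 \, C_G(x)$ where $z_0 \in N_{x,y}$ (using exactly that $y \in x^\sharp$ guarantees a $z_0$ exists), and the equivalence $z \in N_{x,y} \iff z_0^{-1}z \in C_G(x)$ is a clean two-line verification. This gives $|N_{x,y}| = |C_G(x)|$, manifestly independent of $y$. The paper takes a slightly different route: rather than identifying each $N_{x,y}$ with a single fixed set, it builds a direct bijection between $N_{x,y}$ and $N_{x,y'}$ for two choices $y, y' \in x^\sharp$, via $z \mapsto z_0' z^{-1} z_0$, and verifies the composite with its inverse is the identity. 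The two arguments are close cousins --- both ultimately exhibit a bijection that kills the dependence on $y$ --- but yours is the more standard orbit-stabilizer-flavored argument and carries a small bonus: it names the common cardinality explicitly as $|C_G(x)|$, which the paper's bijection does not surface. Either proof suffices for the lemma.
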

\begin{proof}
  For $y, y'\in x^\sharp$, fix $z_{0}, z'_{0}\in G$ so that $z_{0}xz_{0}^{-1} = y$ and $z'_{0}xz'_{0}{}^{-1} = y'$ hold.
  We construct mappings $\Phi_{y, y'} : N_{x, y}\to N_{x, y'}$ by $z \mapsto z'_{0}z^{-1}z_{0}$ and $\Phi_{y', y} : N_{x, y'}\to N_{x, y}$ by $z \mapsto z_{0}z^{-1}z'_{0}$.
  By a direct calculation, we can check that $ \Phi_{y, y'}\circ \Phi_{y', y} = \id_{N_{x, y}}$, and $\Phi_{y', y}\circ \Phi_{y, y'} = \id_{N_{x, y'}}$.
  Therefore $|N_{x, y}| = |N_{x, y'}|$ holds.
\end{proof}

Fix a function $f:G \to \bbD$.
For a conjugacy class $C \in G^\sharp$ and $z \in \bbC$,
define $p_C(z) := \sharp \{x \in C \mid f(x) = z\} / |C|$ as the probability that $f(x) = z$ if we sample $x \in C$ uniformly at random.
We define $p_C := \max_{z \in \bbC}p_C(z)$ and $z_C := \arg \max_{z \in \bbC}p_C(z)$.
Then, we define $\widetilde{f}$ as $\widetilde{f}(x) = z_{x^\sharp}$.
Note that $\widetilde{f}$ is a class function such that $\widetilde{f}(x) \in \bbD$ for any $x\in G$.



\begin{lemma}\label{lem:probability-of-conjugate-invariance}
  \begin{align*}
    \Pr_{x,y\in G}[f(x) \neq f(yxy^{-1})] \geq \dist(f,\widetilde{f})^2.
  \end{align*}
\end{lemma}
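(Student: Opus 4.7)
The plan is to reduce both the left-hand side and the right-hand side to weighted sums over conjugacy classes of $G$, and then argue term-by-term.

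First I would use Lemma~\ref{lem:sampling-from-conjugacy-class} to understand the distribution of the pair $(x,\,yxy^{-1})$ when $x,y$ are drawn uniformly from $G$. Since $G=\bigsqcup_{w\in x^\sharp} N_{x,w}$ and all the sets $N_{x,w}$ have the same cardinality $|G|/|x^\sharp|$, the conjugate $yxy^{-1}$ is uniformly distributed on $x^\sharp$ for fixed $x$ and uniform $y$. Hence $(x,\,yxy^{-1})$ is sampled by first choosing a conjugacy class $C\in G^\sharp$ with probability $|C|/|G|$ and then drawing two i.i.d.\ uniform samples from $C$. This yields
\[
\Pr_{x,y\in G}[f(x)\neq f(yxy^{-1})]
= \sum_{C\in G^\sharp}\frac{|C|}{|G|}\Bigl(1-\sum_{v\in\bbC} p_C(v)^2\Bigr).
\]
Analogously, grouping the expectation in the definition of distance by conjugacy class gives
\[
\dist(f,\widetilde f)^2
=\frac{1}{4}\E_{x\in G}\bigl|f(x)-\widetilde f(x)\bigr|^2
=\sum_{C\in G^\sharp}\frac{|C|}{|G|}\cdot\frac{1}{4}\E_{x\in C}\bigl|f(x)-z_C\bigr|^2.
\]

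Next I would compare the summands one conjugacy class at a time. For the left-hand summand, since $p_C(v)\le p_C$ and $\sum_v p_C(v)=1$, we have $1-\sum_v p_C(v)^2\ge 1-p_C$. For the right-hand summand, I use that $f(x)$ and $z_C$ both lie in $\bbD$: the integrand $|f(x)-z_C|^2$ vanishes on the fraction $p_C$ of $C$ where $f(x)=z_C$, and on the remaining $1-p_C$ fraction is bounded by $(|f(x)|+|z_C|)^2\le 4$. Thus $\tfrac14\E_{x\in C}|f(x)-z_C|^2\le 1-p_C$, which matches the lower bound on the left-hand summand. Summing over $C$ gives the claimed inequality.

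The argument is essentially bookkeeping once one sees the correct decomposition, so there is no real obstacle; the only place where ingenuity is needed is the reduction to independent conjugacy-class samples via Lemma~\ref{lem:sampling-from-conjugacy-class}. It is worth noting that the bound $|f(x)-z_C|^2\le 4$ uses the range $\bbD$ in an essential way, and this is exactly what lets the factor $\tfrac14$ in the definition of $\dist$ line up with the combinatorial inequality for the collision probability.
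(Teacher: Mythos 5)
Your proof is correct and follows essentially the same route as the paper: both arguments pass through the intermediate quantity $\frac{1}{|G|}\sum_{C\in G^\sharp}|C|(1-p_C)$, lower-bounding the collision-type probability by $1-p_C$ on each class and upper-bounding $\frac14\E_{x\in C}|f(x)-z_C|^2$ by $1-p_C$ using $|f(x)-z_C|\le 2$. The only cosmetic difference is that you state explicitly that $(x,yxy^{-1})$ is two i.i.d.\ uniform draws from a random conjugacy class, whereas the paper leaves this implicit after citing Lemma~\ref{lem:sampling-from-conjugacy-class}.
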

\begin{proof}
  Since $|x - y| \leq 2$ for any $x, y \in \bbD$,
  we have
  \[
    \dist(f,\widetilde{f})^2
    = \frac{1}{4}\E_x |f(x) - \widetilde{f}(x)|^2
    \leq \Pr_x[f(x) \neq \widetilde{f}(x)]
    = \frac{1}{|G|}\sum_{C \in G^\sharp} |C|(1-p_C).
  \]

  By Lemma~\ref{lem:sampling-from-conjugacy-class},
  if we fix $x \in G$ and sample $y \in G$ uniformly at random,
  then $yxy^{-1}$ forms a uniform distribution over elements in $x^\sharp$.
  Thus,
  \begin{align*}
    & \Pr_{x,y\in G}[f(x) \neq f(yxy^{-1})]
    \geq
    \frac{1}{|G|} \sum_{C \in G^\sharp}|C| \int p_C(z)(1-p_C(z)) \mathrm{d}z \\
    & \geq
    \frac{1}{|G|} \sum_{C \in G^\sharp}|C| \int p_C(z)(1-p_C) \mathrm{d}z
    =
    \frac{1}{|G|} \sum_{C \in G^\sharp}|C|(1-p_C)
    \geq
    \dist(f,\widetilde{f})^2.
  \end{align*}
\end{proof}
\begin{lemma}\label{lem:conjugate-invariance-soundness}
  If $f$ is $\epsilon$-far from being conjugate invariant,
  then Algorithm~\ref{alg:conjugate-invariance} rejects with probability at least $2/3$.
\end{lemma}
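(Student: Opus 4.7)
The plan is to combine the preceding observations: $\widetilde{f}$ was constructed as a class function with range in $\bbD$, so it is a legitimate candidate witness for conjugate invariance, and hence any $\epsilon$-far function $f$ must satisfy $\dist(f,\widetilde{f}) \geq \epsilon$. Lemma~\ref{lem:probability-of-conjugate-invariance} then immediately translates this distance into a per-iteration rejection probability.

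More concretely, first I would invoke the definition of $\epsilon$-farness: since $\widetilde{f}:G\to\bbD$ is conjugate invariant, the hypothesis forces $\dist(f,\widetilde{f})\geq \epsilon$, i.e., $\dist(f,\widetilde{f})^2\geq \epsilon^2$. Next, by Lemma~\ref{lem:probability-of-conjugate-invariance}, in each iteration the probability that the sampled pair $(x,y)$ triggers the rejection at line~\ref{line:conjugate-invariance-rejection} is at least $\dist(f,\widetilde{f})^2 \geq \epsilon^2$.

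Finally, because the $s$ iterations draw independent samples, the probability that Algorithm~\ref{alg:conjugate-invariance} never rejects is at most
\[
(1-\epsilon^2)^s \leq \exp(-s\epsilon^2),
\]
which is bounded above by $1/3$ once $s \geq \lceil \ln 3 \rceil /\epsilon^2 = O(1/\epsilon^2)$, as prescribed by the algorithm. This gives rejection probability at least $2/3$, completing the proof.

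There is essentially no hard step here: all of the analytic work was absorbed into Lemma~\ref{lem:probability-of-conjugate-invariance} (and the auxiliary Lemma~\ref{lem:sampling-from-conjugacy-class} behind it). The only thing to be slightly careful about is that $\widetilde{f}$ takes values in $\bbD$ so that it qualifies as a witness conjugate-invariant function for the $\epsilon$-farness comparison, which was already noted during its construction.
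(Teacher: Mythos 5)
Your proof is correct and follows exactly the paper's argument: observe that $\widetilde{f}$ is a class function mapping into $\bbD$ (hence a valid witness), conclude $\dist(f,\widetilde{f})\geq\epsilon$, invoke Lemma~\ref{lem:probability-of-conjugate-invariance} for a per-trial rejection probability of at least $\epsilon^2$, and amplify over $O(1/\epsilon^2)$ independent trials. You merely spell out the standard amplification calculation that the paper leaves implicit.
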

\begin{proof}
  Since $\widetilde{f}:G \to \bbD$ is a class function, we have $\dist(f,\widetilde{f}) \geq \epsilon$.
  Hence, the probability we reject at Line~\ref{line:conjugate-invariance-rejection} in each trial is at least $\epsilon^2$ by Lemma~\ref{lem:probability-of-conjugate-invariance}.
  Hence the tester rejects with probability $2/3$ by choosing the hidden constant in $s$ large enough.
\end{proof}
We establish Theorem~\ref{the:conjugate-invariance} by Lemma~\ref{lem:conjugate-invariance-soundness}.

\subsection{Proof of Lemma~\ref{lem:reduced-to-class-function}}\label{subsec:self-correction-to-class-function}
The following lemma shows that we can obtain a query access to a class function that is close to $f$.
\begin{lemma}\label{lem:self-correction-to-class-function}
  Let $f:G \to \bbD$ be a function that is $\epsilon$-close to a class function.
  There exists a class function $f':G \to \bbD$ with the following property.
  \begin{itemize}
  \setlength{\itemsep}{0pt}
  \item For any $x \in G$,
  with $O(\log 1/\delta)$ queries to $f$,
  we can correctly compute $f'(x)$ or find a witness that $f$ is not a class function with probability at least $1-\delta$.
  Moreover, if $f$ itself is a class function, then we can always compute $f'(x)$ correctly.
  \item $\dist(f',f) \leq 3\epsilon$. In particular, $f' = f$ when $f$ itself is a class function.
  \end{itemize}
\end{lemma}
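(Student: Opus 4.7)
The plan is to define $f'$ as a majority-vote correction of $f$ along conjugacy classes and to compute it by a simple sampling self-corrector. By Lemma~\ref{lem:sampling-from-conjugacy-class}, the quantity $p_x(z) := \Pr_{y \in G}[f(yxy^{-1}) = z]$ depends only on the conjugacy class $x^\sharp$. With a threshold of $3/4$, I set $f'(x) := z$ if some $z \in \bbD$ satisfies $p_x(z) \geq 3/4$ (such a $z$ is then unique), and $f'(x) := 0$ otherwise. This is a well-defined class function with values in $\bbD$, and it coincides with $f$ whenever $f$ is already a class function, since then $p_x$ is a point mass at $f(x)$.

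The self-corrector on input $x$ draws $y_1, \ldots, y_t$ uniformly at random from $G$ with $t = O(\log 1/\delta)$, reads $v_i := f(y_i x y_i^{-1})$, and outputs the unique value $v^*$ appearing more than $(2/3)t$ times in $\{v_1, \ldots, v_t\}$ as $f'(x)$, or, failing that, outputs the witness pair $(y_i x y_i^{-1}, y_j y_i^{-1})$ for some $i, j$ with $v_i \neq v_j$. This pair is a valid witness because $(y_j y_i^{-1})(y_i x y_i^{-1})(y_j y_i^{-1})^{-1} = y_j x y_j^{-1}$, so the two values of $f$ appearing in the conjugate-invariance check are exactly $v_i$ and $v_j$. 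Hoeffding's inequality applied separately to each of at most $t$ values present in the sample, exploiting the gap between the definition threshold $3/4$ and the algorithmic threshold $2/3$, shows that except with probability $\delta$ the algorithm returns $f'(x)$ when $f'(x) \neq 0$ and a valid witness when $f'(x) = 0$. When $f$ is itself a class function, all $v_i$ coincide and the algorithm deterministically returns $f(x) = f'(x)$.

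The crux of the argument is the distance bound $\dist(f, f') \leq 3\epsilon$. Fix a class function $g$ with $\dist(f, g) \leq \epsilon$, write $g_C$ for its value on class $C$, and set $D_C := \E_{x \in C}|f(x) - g_C|^2$, so $\sum_C \frac{|C|}{|G|} D_C \leq 4\epsilon^2$. On a ``concentrated'' class with $f'|_C = z$ and $p_C(z) \geq 3/4$, I bound $\E_{x \in C}|f - f'|^2 \leq O(D_C)$ by a case split on whether $z = g_C$: in the equal case the bound is immediate, while if $z \neq g_C$ then $p_C(g_C) \leq 1/4$ forces $D_C \geq (3/4)|z - g_C|^2$, and the triangle inequality closes the estimate. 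The main obstacle is the ``non-concentrated'' case $f'|_C = 0$: I must then control $\E_{x \in C}|f(x)|^2$ by $O(D_C)$, but since $\bbD$ is continuous, $f$ could a priori take many distinct values on $C$ all close to $g_C$ without any one reaching mass $3/4$, keeping $D_C$ small while $\E_{x \in C}|f(x)|^2$ stays as large as $|g_C|^2$. To handle this, I plan to quantize $f$ to a grid of resolution $\Theta(\epsilon)$ before applying the majority construction, which forces any two distinct sample values to differ by $\Omega(\epsilon)$ --- turning non-concentration into a quantitative lower bound on $D_C$ through a direct counting argument --- and then transfer the bound back to $f$ via the triangle inequality, absorbing the $\Theta(\epsilon)$ rounding error into the constant $3$ after tuning.
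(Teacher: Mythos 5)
Your proposal contains two genuine gaps, one of them fatal.

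The fatal one is exactly the issue you flag as the ``main obstacle'': defining $f'|_C = 0$ on a non-concentrated class $C$. Write $w_C := |C|/|G|$ and let $g$ be the closest class function. The contribution of such a class to $\E_x|f(x)-f'(x)|^2$ is $w_C\,\E_{x\in C}|f(x)|^2$, which can be $\Theta(w_C)$ (e.g.\ $|g_C| \approx 1$ with $f$ scattered closely around $g_C$), while its contribution $w_C D_C$ to $\E_x|f(x)-g(x)|^2$ can be arbitrarily small. The quantization idea does not repair this: the best a non-concentration$+$grid argument can give is $D_C \geq \Omega(\epsilon^2)$ for non-concentrated $C$, which only yields $\sum_{C\text{ non-conc}} w_C \leq O(1)$ and hence $\sum_{C\text{ non-conc}} w_C \E_{x\in C}|f|^2 \leq O(1)$ --- off by a factor $\Theta(1/\epsilon^2)$ from the required $O(\epsilon^2)$. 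The constant you set $f'$ to on a non-concentrated class must itself be close to $g_C$; the paper's choice is the $L_2$-optimal constant $z^*_C := \arg\min_z \sum_{x\in C}|f(x)-z|^2$ (the mean over $C$), for which the non-concentrated classes contribute exactly their optimal amount $D_C$ and the self-corrector never needs to compute $f'(x)$ there, since it finds a witness with high probability. You are not required to output $f'(x)$ in that regime; the statement is a disjunction, and the paper exploits that.

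The second, lesser, gap is the $3/4$ vs.\ $2/3$ threshold mismatch. When $p_C \in [2/3,3/4)$ you set $f'(x)=0$, but the empirical frequency of the mode concentrates around a value $> 2/3$, so your algorithm outputs the mode with probability $\Omega(1)$ --- which is neither $f'(x)$ nor a witness. Hoeffding cannot help here: the gap it would need is between the true mass and the algorithmic threshold $2/3$, and that gap can be zero or have the wrong sign. The paper avoids this by using a single threshold ($1/2$) \emph{and} by requiring unanimity in the sampled conjugates before returning a value; unanimity ensures the algorithm never returns a wrong value except on an event of probability $\leq (1-p_C)^s$, and on a class function it is deterministic. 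If you keep a majority rule you must match the two thresholds, or otherwise control outputs in the gap.

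Your handling of the concentrated case ($p_C(z)\geq 3/4$) is correct and gives an acceptable constant via the triangle inequality, and the witness construction $(y_ixy_i^{-1},\ y_jy_i^{-1})$ is fine. The missing idea is to project onto the nearest constant on the bad classes rather than onto $0$.
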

\begin{proof}
  For a conjugacy class $C$, let $z^*_C \in \bbD$ be the unique value that minimizes $\sum_{x \in C}|f(x) - z^*_C|^2$.
  We define $f^*:G \to \bbD$ as $f^*(x) = z^*_{x^\sharp}$.
  Note that $f^*$ is the class function closest to $f$.

  We define $f':G \to \bbD$ as follows:
  \[
    f'(x) = \begin{cases}
    z^*_{x^\sharp} & \text{if }p_{x^\sharp} \leq \frac 12,\\
    z_{x^\sharp} & \mbox{otherwise}.
    \end{cases}
  \]

  We first show the first claim.
  Our algorithm for computing $f'(x)$ is as follows.
  Given $x \in G$,
  we pick $y_1,\ldots,y_s \in G$ for $s := O(\log 1/\delta)$ uniformly at random,
  and compute $f(y_ixy_i^{-1})$ for each $i \in [s]$.
  If $f(y_ixy_i^{-1}) \neq f(y_jxy_j^{-1})$ for some $i \neq j$,
  then we reject $f$ and output the pair as the witness that $f$ is not a class function.
  If all of them are the same, we output the value as $f'(x)$.

  Now we analyze the correctness of the algorithm above.
  If $p_{x^\sharp} \leq 1/2$,
  then with probability at least $1-\delta$,
  we have $f(y_i x y_i^{-1}) \neq f(y_j x y_j^{-1})$ for some $i \neq j$,
  and we reject.
  If $p_{x^{\sharp}} > 1/2$,
  with probability at least $1-\delta$,
  the majority of $\{f(y_ixy_i^{-1})\}_{i \in [s]}$ is equal to $z_C$.
  Hence,
  with probability at least $1-\delta$,
  either we output $z_C$ or reject.

  Moreover, if $f$ itself is a class function, then we have $f'(x) = f(x)$ for any $x \in G$, and our algorithm always outputs $f(x)$ as $f'(x)$.

  We turn to the second claim.
  For two functions $g,h:G\to \bbC$ and a conjugacy class $C$,
  define $\dist_C(g,h) := \sqrt{\sum_{x \in C}|g(x) - h(x)|^2}$.
  We will show that, for each conjugacy class $C \in G^\sharp$,
  $\dist_C(f,f') \leq 3\dist_C(f,f^*)$,
  which implies $\dist(f,f') \leq 3\epsilon$.

  If $C$ satisfies $p_C \leq 1/2$, we have nothing to show.
  Thus suppose $p_C > 1/2$.
  Then, we have
  \begin{align*}
    \dist_C(f,f') \leq \dist_C(f,f^*) + \dist_C(f^*, f')
    = \sqrt{\sum_{x \in C}|f(x) - f^*(x)|^{2}} + \sqrt{|C||z^*_C - z_C|^{2}}.
  \end{align*}
  Since a $p_C$-fraction of values has moved from $z_C$ to $z^*_C$ when constructing $f^*$ from $f$,
  we have $p_C|C||z^*_C - z_C|^{2} \leq \sum_{x \in C}|f(x) - f^{*}(x)|^{2}$.
  By $p_C > 1/2$, we have $|C||z^*_C - z_C|^{2} \leq 2\sum_{x \in C}|f(x) - f^{*}(x)|^{2}$.
  Combining this with the previous inequality, we have $\dist_C(f,f') \leq (1+\sqrt{2})\dist_C(f,f^*) \leq 3\dist_C(f,f^*)$.
\end{proof}

\begin{proof}[Proof of Lemma~\ref{lem:reduced-to-class-function}]
  We first apply the $\epsilon/6$-tester for conjugate invariance (Algorithm~\ref{alg:conjugate-invariance}).
  If the tester rejects, we immediately reject $f$ as it implies that $f$ does not satisfy $P$.
  Otherwise, using Lemma~\ref{lem:self-correction-to-class-function},
  we construct a query access to a class function $f'$ with $\delta = O(1/q(\epsilon/2))$.
  Then we apply the tester $\caA$ to $f'$ with the error parameter $\epsilon/2$.
  The query complexity is clearly as stated.

  Suppose that $f$ satisfies the property $P$.
  Then, we never reject when testing conjugate invariance.
  Also $f'(x) = f(x)$ holds for every $x \in G$ and it follows that $f'$ satisfies the property $P$.
  Hence, the tester $\caA$ accepts $f'$ with probability at least $2/3$.
  Moreover if $\caA$ is a one-sided error tester, then $\caA$ accepts $f'$ with probability one.

  Suppose that $f$ is $\epsilon$-far from the property $P$.
  If $f$ is $\epsilon/6$-far from conjugate invariance, then we reject $f$ with probability at least $2/3$.
  Thus assume that $f$ is $\epsilon/6$-close to conjugate invariance.
  In this case $f'$ is a class function that is $\epsilon/2$-close to $f$.
  Hence $f'$ is still $\epsilon/2$-far from the property $P$.
  Then the tester $\caA$ on $f'$ should reject with probability $2/3$.
\end{proof}

\section{Testing Homomorphism}\label{sec:homo}

In this section, we show the following:
\begin{theorem}\label{the:homomorphism}
  Homomorphism is one-sided error testable with $O(1/\epsilon^2 \log (1/\epsilon))$ queries.
\end{theorem}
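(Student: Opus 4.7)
I will analyze the natural BLR-style tester: sample $s=\Theta(\epsilon^{-2}\log(1/\epsilon))$ independent pairs $(x_i,y_i)\in G\times G$ uniformly and reject if any local check $f(x_i)f(y_i)=f(x_iy_i)$ fails. One-sidedness is immediate, since any homomorphism $f\colon G\to\bbD$ satisfies $f(1)=f(1)^{2}\in\{0,1\}$, making $f$ either the zero function or a one-dimensional unitary character of $G$; in both cases every check holds identically.

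For soundness, I will relate the per-round rejection probability $\eta:=\Pr_{x,y}[f(x)f(y)\ne f(xy)]$ to the harmonic-analytic correlation
\[
  T(f):=\E_{x,y\in G}\bigl[f(x)f(y)\overline{f(xy)}\bigr],
\]
which equals $1$ on every nontrivial homomorphism. Substituting the non-abelian Fourier expansion of $f$ from Lemma~\ref{lem:matrix-coefficients-forms-orthonormal-basis}, using $\overline{\varphi(xy)}=\overline{\varphi(x)}\cdot\overline{\varphi(y)}$, and twice applying the orthogonality of matrix coefficients yields the closed form
\[
  T(f)=\sum_{\varphi\in\widehat{G}} d_{\varphi}\,\tr\bigl(\widehat{f}(\varphi)^{2}\,\widehat{f}(\varphi)^{*}\bigr).
\]
Combining $|\tr(A^{2}A^{*})|\le\|A\|_{\mathrm{op}}\|A\|_{F}^{2}$, Parseval, and the dimensional bound $\|\widehat{f}(\varphi)\|_{\mathrm{op}}\le\|\widehat{f}(\varphi)\|_{F}\le 1/\sqrt{d_{\varphi}}$ (the latter from Parseval and $\|f\|_{2}\le 1$) gives $|T(f)|\le\max_{\varphi}\|\widehat{f}(\varphi)\|_{\mathrm{op}}\cdot\|f\|_{2}^{2}$; the dimensional bound crucially forces the maximum to be attained at a one-dimensional $\varphi$ whenever $|T(f)|/\|f\|_{2}^{2}$ is close to $1$.

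Finally, I will connect $\eta$ back to closeness to a homomorphism via two elementary estimates. Because $|f(xy)|^{2}=|f(x)f(y)|^{2}$ on the ``check passes'' event and both sides lie in $[0,1]$, we obtain $|\|f\|_{2}^{2}-\|f\|_{2}^{4}|\le\eta$, so either $\|f\|_{2}^{2}\le O(\eta)$ (in which case $\dist(f,0)\le O(\sqrt{\eta})$ and the zero function certifies closeness) or $\|f\|_{2}^{2}\ge 1-O(\eta)$. In the latter regime, splitting $\Re T(f)$ by the check event gives $\Re T(f)\ge \|f\|_{2}^{2}-2\eta$, hence $|T(f)|/\|f\|_{2}^{2}\ge 1-O(\eta)$; the operator-norm analysis then supplies a one-dimensional character $\chi$ with $|\widehat{f}(\chi)|\ge 1-O(\eta)$, and setting $c=\widehat{f}(\chi)/|\widehat{f}(\chi)|\in\bbT$ yields $\|f-c\chi\|_{2}^{2}=\|f\|_{2}^{2}+1-2|\widehat{f}(\chi)|\le O(\eta)$. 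Either way $f$ is $O(\sqrt{\eta})$-close to a homomorphism, so $\epsilon$-farness forces $\eta=\Omega(\epsilon^{2})$, and $s$ independent rounds boost the overall rejection probability above $2/3$ by standard concentration. The main obstacle I expect is cleanly stitching the two regimes in the intermediate case $\|f\|_{2}^{2}\in(O(\eta),1-O(\eta))$ while keeping the constants sharp; the harmonic-analytic viewpoint contributes the non-obvious fact that only one-dimensional irreducible representations can witness closeness to a homomorphism, which is the new conceptual ingredient beyond the combinatorial analysis of~\cite{BenOr:2007cg}.
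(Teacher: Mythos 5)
Your route is genuinely different from the paper's and, if it were fully correct, it would actually be slightly stronger. The paper first reduces to class functions via Lemma~\ref{lem:reduced-to-class-function} (which is where the extra $\log(1/\epsilon)$ in the query bound comes from), and only then runs the BLR test, exploiting that class functions have diagonal Fourier coefficients with $|\widehat{f}(\varphi)_{ii}|\le 1/d_\varphi$. You bypass the class-function reduction entirely and replace that structural fact with the dimensional bound $\|\widehat{f}(\varphi)\|_F\le 1/\sqrt{d_\varphi}$ from Parseval, together with the trace inequality $|\tr(A^2A^*)|\le\|A\|_{\mathrm{op}}\|A\|_F^2$. The dichotomy $\|f\|_2^2\le O(\eta)$ or $\|f\|_2^2\ge 1-O(\eta)$ and the bound $\Re T(f)\ge\|f\|_2^2-2\eta$ are both correct and nicely stated; together with the norm chain they cleanly force the dominant Fourier mass onto a one-dimensional representation. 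This is an attractive alternative, and note it would give $O(1/\epsilon^2)$ queries without the $\log$ factor.

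However, there is a concrete gap at the very end of the second regime. You establish $|\widehat{f}(\chi)|\ge 1-O(\eta)$ for some one-dimensional $\chi$, set $c=\widehat{f}(\chi)/|\widehat{f}(\chi)|\in\bbT$, and correctly compute $\|f-c\chi\|_2^2=\|f\|_2^2+1-2|\widehat{f}(\chi)|\le O(\eta)$; but you then assert that $f$ is close to a homomorphism. The function $c\chi$ satisfies $(c\chi)(x)(c\chi)(y)=c^2\chi(xy)$ while $(c\chi)(xy)=c\chi(xy)$, so $c\chi$ is a homomorphism only when $c=1$ (given $|c|=1$); for general $c$ it merely is proportional to an irreducible character, which is a strictly larger property. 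Passing from $\Re T(f)$ to $|T(f)|$ and then to $\|\widehat{f}(\chi)\|_{\mathrm{op}}=|\widehat{f}(\chi)|$ is exactly where the phase information got lost. To repair it, keep the real part throughout: e.g., bound $\Re T(f)\le\bigl(\max\{\max_{\chi'\text{ one-dim}}\Re\widehat{f}(\chi'),\,1/\sqrt 2\}\bigr)\|f\|_2^2$ directly, so that $\Re T(f)\ge 1-O(\eta)>1/\sqrt 2$ forces $\Re\widehat{f}(\chi)\ge 1-O(\eta)$ for some one-dimensional $\chi$, whence $\|f-\chi\|_2^2=\|f\|_2^2+1-2\Re\widehat{f}(\chi)\le O(\eta)$ and $\chi$ itself is the required nearby homomorphism. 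Equivalently, one can observe that the contributions to $T(f)$ from all $\varphi\neq\chi$ have total magnitude $O(\eta)$, so $\widehat{f}(\chi)=T(f)+O(\eta)$ already has real part at least $1-O(\eta)$.
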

We note that, if $f:G \to \bbC$ is a homomorphism, then it is a one-dimensional representation and hence an irreducible representation.
First we observe that homomorphism implies conjugate invariance.
\begin{lemma}\label{lem:homomorphism->class-function}
  If $f:G \to \bbC$ is a homomorphism,
  then $f$ is conjugate invariant.
\end{lemma}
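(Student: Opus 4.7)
The plan is to exploit the fact that multiplication in $\bbC$ is commutative, so that conjugation by any element of $G$ becomes trivial once we push it through $f$. The only subtlety is that the range $\bbC$ is not a group under multiplication (because of $0$), so I first need to understand what $f$ looks like on the identity and invertible elements.

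First I would show that $f(1) \in \{0,1\}$ by applying the homomorphism property to $1 \cdot 1 = 1$, giving $f(1) = f(1)^2$. If $f(1) = 0$, then for every $x \in G$ we have $f(x) = f(x \cdot 1) = f(x) f(1) = 0$, so $f$ is identically zero and is trivially conjugate invariant.

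Otherwise $f(1) = 1$, and for every $x \in G$ the identity $f(x) f(x^{-1}) = f(1) = 1$ forces $f(x) \neq 0$ and $f(x^{-1}) = f(x)^{-1}$. Then for any $x, y \in G$, the homomorphism property combined with commutativity of $\bbC^{*}$ gives
\[
  f(y x y^{-1}) = f(y) f(x) f(y^{-1}) = f(y) f(x) f(y)^{-1} = f(x),
\]
which is the definition of conjugate invariance.

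There is no real obstacle here; the ``hard part'' is simply recognizing the two cases for $f(1)$ so that the $0 = f(1)$ branch does not derail the division in $f(x^{-1}) = f(x)^{-1}$. The lemma ultimately rests on the fact that any homomorphism from a group into an abelian (partial) monoid factors through an abelian quotient and therefore kills commutators, which is why this lemma is specific to $\bbC$-valued $f$ and does \emph{not} generalize to the matrix-valued setting considered later in Section~\ref{sec:unitary}.
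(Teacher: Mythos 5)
Your proof is correct and follows essentially the same approach as the paper's: case-split on whether $f(1)$ is $0$ or $1$, then use commutativity of $\bbC$ to kill the conjugation. The only cosmetic difference is that the paper sidesteps the explicit identity $f(x^{-1}) = f(x)^{-1}$ by commuting $f(x)$ past $f(y^{-1})$ directly, writing $f(y)f(x)f(y^{-1}) = f(y)f(y^{-1})f(x) = f(1)f(x)$, which avoids any division.
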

\begin{proof}
  Since $f$ is a homomorphism,
  we have for any $x, y \in G$, $f(yxy^{-1}) = f(y)f(x)f(y^{-1}) = f(y)f(y^{-1})f(x) = f(yy^{-1})f(x) = f(1)f(x)$.
  By setting $y = 1$, we have $f(x) = f(1)f(x)$, which means $f(x) = 0$ or $f(1) = 1$.

  If $f(x) = 0$ for all $x \in G$, then $f$ is clearly conjugate invariant.
  If $f(x) \neq 0$ for some $x \in G$, then $f(1) = 1$.
  In this case, we have $f(yxy^{-1}) = f(1)f(x) = f(x)$ and $f$ is again conjugate invariant.
\end{proof}

\begin{algorithm}[!t]
  \caption{(Tester for homomorphism)}
  \label{alg:homomorphism}
  \begin{algorithmic}[1]
  \REQUIRE{A class function $f:G \to \bbD$.}
  \FOR{$s = O(1/\epsilon^2)$ times}
  \STATE Sample $x,y \in G$ uniformly at random.
  \STATE \textbf{if} $f(x)f(y) \neq f(xy)$ \textbf{then} reject.
  \ENDFOR
  \STATE Accept.
  \end{algorithmic}
\end{algorithm}

From Lemmas~\ref{lem:reduced-to-class-function} and~\ref{lem:homomorphism->class-function},
to test homomorphism, it suffices to show that homomorphism is one-sided error testable with $O(1/\epsilon^2)$ queries when the input function is a class function.
Our tester is given in Algorithm~\ref{alg:homomorphism}.
It is clear that the query complexity is $O(1/\epsilon^2)$.
We next see that Algorithm~\ref{alg:homomorphism} always accepts homomorphisms:
\begin{lemma}
  If a class function $f:G \to \bbD$ is a homomorphism, then Algorithm~\ref{alg:homomorphism} always accepts.
\end{lemma}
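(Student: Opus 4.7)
The statement asserts completeness of the homomorphism tester: if $f$ is a class function that is also a homomorphism, then Algorithm~\ref{alg:homomorphism} always accepts. The plan is simply to observe that rejection happens only at line~3, and the rejection condition is exactly the negation of the homomorphism property.

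More concretely, I would argue as follows. By definition, $f:G \to \bbD$ is a homomorphism means $f(x)f(y) = f(xy)$ for every $x,y \in G$. In each of the $s$ iterations of the for-loop, the tester draws a pair $(x,y) \in G \times G$ uniformly at random and rejects only if $f(x)f(y) \neq f(xy)$. Since this inequality never holds under the hypothesis, none of the $s$ iterations rejects, regardless of the random choices, and the algorithm proceeds to the final \textbf{Accept} step.

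There is no real obstacle here: the class-function hypothesis is not even needed for this direction (it is used later for soundness, not completeness), and the conclusion is immediate from unwinding the definitions. The only thing worth noting in the write-up is that the claim is a \emph{deterministic} ``always accept'' statement, which is precisely what justifies calling the combined procedure in Theorem~\ref{the:homomorphism} a one-sided error tester.
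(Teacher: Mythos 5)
Your proof is correct and matches the paper's (one-line) argument: since $f(x)f(y)=f(xy)$ for all $x,y\in G$, the rejection test never fires and the algorithm accepts. Your added remark that the class-function hypothesis is not used for completeness is also accurate.
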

\begin{proof}

  We always accept because $f(x)f(y) = f(xy)$ for any $x,y \in G$
\end{proof}

Now we turn to the case that $f$ is $\epsilon$-far from homomorphisms.
To show that $\Pr[f(x)f(y) \neq f(xy)]$ is much smaller than $1$,
we analyze the term $f(x)f(y)\overline{f(xy)}$.
\begin{lemma}\label{lem:estimate-of-homomorphism}
  For any function $f:G \to \bbC$, we have
  \[
    \E_{x,y}[f(x)f(y)\overline{f(xy)}] =  \sum_{\varphi}d_\varphi \sum_{i,j,k \in [d_\varphi]}\widehat{f}(\varphi)_{ij}\widehat{f}(\varphi)_{jk}\overline{\widehat{f}(\varphi)_{ik}}.
  \]
\end{lemma}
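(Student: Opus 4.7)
The plan is to expand all three occurrences of $f$ using the Fourier decomposition given by Lemma~\ref{lem:matrix-coefficients-forms-orthonormal-basis}, then exploit the homomorphism property $\varphi(xy) = \varphi(x)\varphi(y)$ to split the factor at $xy$ into independent factors in $x$ and $y$, and finally apply Schur's orthogonality of matrix coefficients to collapse the triple sum.

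Concretely, I would start by writing
\[
f(x) = \sum_{\varphi}d_{\varphi}\sum_{i,j}\widehat{f}(\varphi)_{ij}\varphi_{ij}(x),
\]
and analogous expansions with independent dummy representations $\rho$ and $\pi$ for $f(y)$ and $\overline{f(xy)}$ respectively. Using the homomorphism property $\pi_{ab}(xy) = \sum_c \pi_{ac}(x)\pi_{cb}(y)$, I would then expand $\overline{\pi_{ab}(xy)}$ as $\sum_c \overline{\pi_{ac}(x)}\,\overline{\pi_{cb}(y)}$, so that the expectation $\E_{x,y}$ factorizes into
\[
\E_x[\varphi_{ij}(x)\overline{\pi_{ac}(x)}] \cdot \E_y[\rho_{kl}(y)\overline{\pi_{cb}(y)}] = \langle \varphi_{ij},\pi_{ac}\rangle \cdot \langle \rho_{kl},\pi_{cb}\rangle.
\]

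The next step is to invoke Schur orthogonality, which is part of Lemma~\ref{lem:matrix-coefficients-forms-orthonormal-basis}: the inner product $\langle \varphi_{ij},\pi_{ac}\rangle$ vanishes unless $\varphi = \pi$, $i=a$, $j=c$, in which case it equals $1/d_{\varphi}$, and analogously for the second inner product. These conditions force $\varphi = \rho = \pi$ and pin down the indices so that only one free triple of indices survives. After cancelling the factor $d_{\varphi}^{3}$ from the Fourier expansion against the $1/d_{\varphi}^{2}$ from the two Schur factors, a single overall factor of $d_{\varphi}$ remains, which matches the claimed formula.

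The main obstacle is purely bookkeeping: I must keep three independent sets of representation/index labels straight and match them correctly under the two Kronecker constraints produced by Schur's lemma. There is no analytic difficulty and no convergence issue since all sums are finite, so once the indices are aligned the identity follows directly.
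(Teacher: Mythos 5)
Your proposal is correct and is essentially the same argument as in the paper: both expand $f(x)$, $f(y)$, $\overline{f(xy)}$ with three independent dummy representations, split $\overline{\pi_{ab}(xy)}$ via $\pi(xy)=\pi(x)\pi(y)$, factor the expectation over $x$ and $y$, and then apply Schur orthogonality of matrix coefficients (Lemma~\ref{lem:matrix-coefficients-forms-orthonormal-basis}) to force the three representations to coincide and pin down the indices, leaving the stated $d_\varphi$ factor after cancellation.
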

\begin{proof}
  The left hand side is equal to
  \begin{align}
    \sum_{\varphi,\varphi',\varphi''}d_\varphi d_{\varphi'} d_{\varphi''}\sum_{i,j \in [d_\varphi]}\sum_{i',j' \in [d_{\varphi'}]}\sum_{i'',j'' \in [d_{\varphi''}]}\widehat{f}(\varphi)_{ij}\widehat{f}(\varphi')_{i'j'}\overline{\widehat{f}(\varphi'')_{i''j''}} \E_{x,y}[ \varphi_{ij}(x) \varphi'_{i'j'}(y) \overline{\varphi''_{i''j''}(xy)}]. \label{eq:cubic-expansion}
  \end{align}
  Now we analyze the expectation in~\eqref{eq:cubic-expansion}.
  \begin{align*}
    & \E_{x,y}[ \varphi_{ij}(x) \varphi'_{i'j'}(y) \overline{\varphi''_{i''j''}(xy)}]
    =
    \E_{x,y}\Bigl[ \varphi_{ij}(x) \varphi'_{i'j'}(y) \sum_{k'' \in [d_{\varphi''}]}\overline{\varphi''_{i''k''}(x)}\overline{\varphi''_{k''j''}(y)}\Bigr] \\
    & =
    \sum_{k'' \in [d_{\varphi''}]}\langle \varphi_{ij},\varphi''_{i''k''}\rangle \langle \varphi'_{i'j'} \varphi''_{k''j''} \rangle
    =
    \begin{cases}
    \frac{1}{d_\varphi^2} & \mbox{if } \varphi = \varphi' = \varphi'', i = i'', j = i' = k'',\mbox{ and }j' = j'',\\
    0 & \mbox{otherwise}.
    \end{cases}
  \end{align*}
  Hence $\eqref{eq:cubic-expansion} = \sum_{\varphi}d_\varphi \sum_{i,j,j' \in [d_\varphi]}\widehat{f}(\varphi)_{ij}\widehat{f}(\varphi)_{jj'}\overline{\widehat{f}(\varphi)_{ij'}}$.
\end{proof}
\begin{corollary}\label{cor:estimate-of-homomorphism}
  For any class function $f:G \to \bbC$, we have
  \[
    \E_{x,y}[f(x)f(y)\overline{f(xy)}] = \sum_{\varphi}d_\varphi \sum_{i \in [d_\varphi]}\widehat{f}(\varphi)_{ii}|\widehat{f}(\varphi)_{ii}|^2.
  \]
\end{corollary}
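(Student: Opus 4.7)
The plan is to obtain the corollary as an immediate consequence of Lemma~\ref{lem:estimate-of-homomorphism} combined with the structural result Lemma~\ref{lem:fourier-expansion-of-class-function} on the Fourier coefficients of class functions. Concretely, Lemma~\ref{lem:fourier-expansion-of-class-function} tells us that for a class function $f$, each Fourier coefficient $\widehat{f}(\varphi)$ is a scalar multiple of the identity, so $\widehat{f}(\varphi)_{ij} = 0$ whenever $i \neq j$ and $\widehat{f}(\varphi)_{ii}$ is the same value (namely $\langle f,\chi_\varphi\rangle/d_\varphi$) for every $i \in [d_\varphi]$.

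First, I would simply invoke Lemma~\ref{lem:estimate-of-homomorphism} and write
\[
  \E_{x,y}[f(x)f(y)\overline{f(xy)}] = \sum_{\varphi}d_\varphi \sum_{i,j,k \in [d_\varphi]}\widehat{f}(\varphi)_{ij}\widehat{f}(\varphi)_{jk}\overline{\widehat{f}(\varphi)_{ik}}.
\]
Then, applying Lemma~\ref{lem:fourier-expansion-of-class-function}, the factor $\widehat{f}(\varphi)_{ij}$ vanishes unless $i=j$, $\widehat{f}(\varphi)_{jk}$ vanishes unless $j=k$, and $\overline{\widehat{f}(\varphi)_{ik}}$ vanishes unless $i=k$. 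These three constraints hold simultaneously exactly when $i=j=k$, so the triple sum collapses to a single diagonal sum, yielding
\[
  \E_{x,y}[f(x)f(y)\overline{f(xy)}] = \sum_{\varphi}d_\varphi \sum_{i \in [d_\varphi]}\widehat{f}(\varphi)_{ii}\widehat{f}(\varphi)_{ii}\overline{\widehat{f}(\varphi)_{ii}} = \sum_{\varphi}d_\varphi \sum_{i \in [d_\varphi]}\widehat{f}(\varphi)_{ii}|\widehat{f}(\varphi)_{ii}|^2.
\]

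There is essentially no obstacle here: all the analytic work has already been carried out in Lemma~\ref{lem:estimate-of-homomorphism}, and the class-function hypothesis is used only to kill off-diagonal terms via Lemma~\ref{lem:fourier-expansion-of-class-function}. The only thing to be mildly careful about is book-keeping of the three index constraints ($i=j$ from the first factor, $j=k$ from the second, $i=k$ from the third) to confirm they force $i=j=k$ rather than allowing a spurious two-index contribution.
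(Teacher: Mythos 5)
Your proof is correct and takes essentially the same route as the paper: invoke Lemma~\ref{lem:estimate-of-homomorphism}, then use Lemma~\ref{lem:fourier-expansion-of-class-function} to kill the off-diagonal Fourier coefficients so the triple sum collapses to the diagonal. The paper's version is more terse, but the content (including the observation that $i=j$, $j=k$, $i=k$ forces $i=j=k$) is the same.
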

\begin{proof}
  If $f$ is a class function, then $\widehat{f}(\varphi)_{ij} = 0$ for any $\varphi \in \widehat{G}$ and $i \neq j \in [d_\varphi]$ by Lemma~\ref{lem:fourier-expansion-of-class-function}.
  Hence, we have the corollary from Lemma~\ref{lem:estimate-of-homomorphism}.
\end{proof}

The following lemma completes the proof of Theorem~\ref{the:homomorphism}.
\begin{lemma}
  If a class function $f:G \to \bbD$ is $\epsilon$-far from homomorphism,
  then Algorithm~\ref{alg:homomorphism} rejects with probability at least $2/3$.
\end{lemma}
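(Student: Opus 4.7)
I will lower bound $\E_{x,y}|f(x)f(y) - f(xy)|^2$ by $\Omega(\epsilon^2)$. Since $|f(x)f(y) - f(xy)|^2 \leq 4$ pointwise (as $f$ takes values in $\bbD$), this yields per-trial rejection probability $\Pr[f(x)f(y) \neq f(xy)] \geq \tfrac{1}{4}\E_{x,y}|f(x)f(y)-f(xy)|^2 = \Omega(\epsilon^2)$, and $s = \Theta(1/\epsilon^2)$ independent trials then drive the overall rejection probability above $2/3$ by a standard amplification argument ($(1-p)^s \leq e^{-ps}$).

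Let $\alpha_\varphi := \langle f, \chi_\varphi \rangle$ and $S := \|f\|_2^2 = \sum_\varphi |\alpha_\varphi|^2$ (Parseval, noting that $f$ being a class function makes $\widehat{f}(\varphi)$ diagonal by Lemma~\ref{lem:fourier-expansion-of-class-function}). Expanding $|f(x)f(y) - f(xy)|^2 = |f(x)f(y)|^2 + |f(xy)|^2 - 2\Re[f(x)f(y)\overline{f(xy)}]$, using $\E_{x,y}|f(x)f(y)|^2 = S^2$ and $\E_{x,y}|f(xy)|^2 = S$, and applying Corollary~\ref{cor:estimate-of-homomorphism}, I get
\[
\E_{x,y}|f(x)f(y) - f(xy)|^2 = S + S^2 - 2R, \qquad R := \sum_\varphi \frac{|\alpha_\varphi|^2 \Re \alpha_\varphi}{d_\varphi}.
\]
The $\epsilon$-farness hypothesis translates into two Fourier constraints: because the zero function and each one-dimensional character $\chi_\varphi$ are homomorphisms, $\dist(f,0) \geq \epsilon$ forces $S \geq 4\epsilon^2$, and $\|f-\chi_\varphi\|_2^2 = S - 2\Re\alpha_\varphi + 1 \geq 4\epsilon^2$ forces $\Re \alpha_\varphi \leq M := (S+1-4\epsilon^2)/2$ for each 1-dimensional $\varphi$. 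Observe $M \geq 1/2$ since $S \geq 4\epsilon^2$.

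I derive two complementary upper bounds on $R$: \textbf{(a)} $R \leq MS$, using $\Re \alpha_\varphi \leq M$ when $d_\varphi = 1$ and the trivial $\Re \alpha_\varphi / d_\varphi \leq |\alpha_\varphi|/d_\varphi \leq 1/2 \leq M$ when $d_\varphi \geq 2$; \textbf{(b)} $R \leq S^{3/2}$, using $\Re \alpha_\varphi \leq |\alpha_\varphi|$ and $1/d_\varphi \leq 1$ to get $R \leq \sum_\varphi |\alpha_\varphi|^3 \leq (\max_\varphi |\alpha_\varphi|) \cdot S \leq \sqrt{S} \cdot S$. Consequently $\E|\cdot|^2 \geq S + S^2 - 2\min(MS, S^{3/2})$. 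If $S \geq (1-2\epsilon)^2$, then $M \leq \sqrt{S}$, so bound (a) dominates and $\E|\cdot|^2 \geq S + S^2 - 2MS = S(1+S-2M) = 4\epsilon^2 S \geq 4\epsilon^2 (1-2\epsilon)^2$. If $S < (1-2\epsilon)^2$, bound (b) dominates and $\E|\cdot|^2 \geq S + S^2 - 2S^{3/2} = \bigl(\sqrt{S}(1-\sqrt{S})\bigr)^2$; since $\sqrt{S} \in [2\epsilon, 1-2\epsilon)$ and $u \mapsto u(1-u)$ is concave and symmetric about $u = 1/2$, its minimum on this interval is at the endpoints, giving $\sqrt{S}(1-\sqrt{S}) \geq 2\epsilon(1-2\epsilon)$ and so again $\E|\cdot|^2 \geq 4\epsilon^2(1-2\epsilon)^2$. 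For $\epsilon \leq 1/4$ this is at least $\epsilon^2$ (the complementary regime $\epsilon > 1/4$ is absorbed into constants).

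The main obstacle is that no single bound on $R$ works across the full range $S \in [4\epsilon^2, 1]$: bound (a) alone yields only $\E|\cdot|^2 \geq 4\epsilon^2 S \geq 16\epsilon^4$, which would require $O(1/\epsilon^4)$ trials, off by a factor of $1/\epsilon^2$; and bound (b) alone degenerates as $S \to 1$ since $(1-\sqrt{S})^2 \to 0$. The two bounds meet precisely at the crossover $S = (1-2\epsilon)^2$, where $M = \sqrt{S}$, and interpolating between them at that point is what delivers the advertised $\Omega(\epsilon^2)$ rate and hence the query complexity $O(1/\epsilon^2)$ claimed by the algorithm.
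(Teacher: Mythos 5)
Your proof is correct, and it is a more careful argument than the one the paper gives, though both hinge on the same Fourier-analytic identity (Corollary~\ref{cor:estimate-of-homomorphism}). The paper bounds $\Re\,\E_{x,y}\bigl[f(x)f(y)\overline{f(xy)}\bigr]\le \max_{\varphi,i}\Re\widehat f(\varphi)_{ii}\le 1-2\epsilon^2$ and then asserts that an $\Omega(\epsilon^2)$-fraction of pairs have $f(x)f(y)\overline{f(xy)}\neq 1$, \emph{hence} $f(x)f(y)\neq f(xy)$. That last implication is the weak point: $f(x)f(y)=f(xy)$ gives $f(x)f(y)\overline{f(xy)}=|f(xy)|^2$, which need not equal $1$ when $|f(xy)|<1$, so $Z\neq 1$ does not by itself witness a violation. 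Your approach sidesteps this entirely by lower-bounding the quantity the test actually sees, $\E_{x,y}\bigl|f(x)f(y)-f(xy)\bigr|^2=S+S^2-2R$, and converting via the pointwise bound $\bigl|f(x)f(y)-f(xy)\bigr|^2\le 4$. To pull this off you need two ingredients the paper never invokes: the constraint $S=\|f\|_2^2\ge 4\epsilon^2$ coming from farness to the zero homomorphism, and the secondary bound $R\le S^{3/2}$ alongside $R\le MS$, stitched together at the crossover $S=(1-2\epsilon)^2$. (I checked the details: $|\alpha_\varphi|\le 1$ follows from Parseval so $|\alpha_\varphi|/d_\varphi\le 1/2\le M$ for $d_\varphi\ge 2$, the identity $\|f-\chi_\varphi\|_2^2=S+1-2\Re\alpha_\varphi$ gives $\Re\alpha_\varphi\le M$ for one-dimensional $\varphi$, and the two-case optimization yields $\E|\cdot|^2\ge 4\epsilon^2(1-2\epsilon)^2\ge\epsilon^2$ for $\epsilon\le 1/4$.) In short: same key lemma, but your derivation directly controls the rejection probability and thereby patches a step the paper leaves unjustified, at the cost of a more involved two-regime calculation.
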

\begin{proof}

  From Corollary~\ref{cor:estimate-of-homomorphism}, we have
  \begin{align*}
    & \Re \E_{x,y}[f(x)f(y)\overline{f(x+y)}]
    =
    \Re \sum_{\varphi}d_\varphi \sum_{i \in [d_\varphi]}\widehat{f}(\varphi)_{ii}|\widehat{f}(\varphi)_{ii}|^2 \\
    & \leq
    \max_{\varphi \in \widehat{G}, i \in [d_\varphi]} \Re\widehat{f}(\varphi)_{ii}\cdot  \sum_{\varphi}d_\varphi \sum_{i \in [d_\varphi]}|\widehat{f}(\varphi)_{ii}|^2 \\
    & =
    \max_{\varphi \in \widehat{G}, i \in [d_\varphi]} \Re\widehat{f}(\varphi)_{ii} \cdot \|f\|_2^2
    \leq
    \max_{\varphi \in \widehat{G}, i \in [d_\varphi]} \Re \widehat{f}(\varphi)_{ii}.
  \end{align*}
  For any $\varphi$ with dimension more than one,
  $|\widehat{f}(\varphi)_{ii}| \leq 1/2$ and hence $\Re \widehat{f}(\varphi)_{ii} \leq 1/2$ (see Lemma~\ref{lem:fourier-expansion-of-class-function}).
  Now consider a one-dimensional irreducible representation $\varphi$.
  Since $f$ is $\epsilon$-far from homomorphism, we have
  \[
    \epsilon \leq \dist(f,\varphi)
    = \frac{1}{2}\sqrt{\|f\|_2^2 + \|\varphi\|^2_2 - 2\Re \langle f, \varphi\rangle }
    \leq \frac{1}{2}\sqrt{2 - 2\Re\widehat{f}(\varphi)}.
  \]
  Note that $\|f\|_2^2 \leq 1$ and $\|\varphi\|_2^2 = 1$ as $\varphi$ is a (non-zero) homomorphism.
  Hence, $\Re\widehat{f}(\varphi) \leq 1 - 2\epsilon^2$.

  We have shown that $\Re\E_{x,y}[f(x)f(y)\overline{f(x+y)}] \leq 1-2\epsilon^2$.
  Since $|f(x)f(y)\overline{f(x+y)}| \leq 1$,
  at least an $\Omega(\epsilon^2)$-fraction of pairs $(x, y)$ satisfy $f(x)f(y) \overline{f(x+y)} \neq 1$.
  Hence we have $\Pr_{x,y}[f(x)f(y) = f(x+y)] \leq 1-\Omega(\epsilon^2)$.
  By choosing the hidden constant in $s$ large enough,
  we reject with probability at least $2/3$.
\end{proof}

\section{Testing the Property of Being Proportional to an Irreducible Character}\label{sec:nrc}

In this section, we show the following:
\begin{theorem}
  The property of being proportional to an irreducible character is testable with $O(1/\epsilon^8 \log^2(1/\epsilon))$ queries.
\end{theorem}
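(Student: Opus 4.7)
The plan is to reduce to the class-function setting and then exploit a Schur-type identity. Since every $c\chi_\varphi$ is a class function, Lemma~\ref{lem:reduced-to-class-function} lets us assume the input is already a class function; the logarithmic overhead it introduces will be what turns an inner $O(\epsilon^{-8}\log(1/\epsilon))$ tester into the claimed $O(\epsilon^{-8}\log^2(1/\epsilon))$ bound.

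The key identity is $f(1)\,\E_{z\in G}[f(xzyz^{-1})] = f(x)f(y)$ for all $x,y\in G$. For $f = c\chi_\varphi$, this follows from Schur's lemma in the form $\E_z[\varphi(z)\varphi(y)\varphi(z^{-1})] = (\chi_\varphi(y)/d_\varphi) I_{d_\varphi}$, which, after multiplying by $\varphi(x)$ and taking traces, gives $\E_z[\chi_\varphi(xzyz^{-1})] = \chi_\varphi(x)\chi_\varphi(y)/d_\varphi$. Conversely, writing a class function $f = \sum_\psi a_\psi \chi_\psi$ with $a_\psi = \langle f,\chi_\psi\rangle$ and expanding both sides in the orthonormal basis $\{\chi_\psi(x)\chi_{\psi'}(y)\}_{\psi,\psi'}$ of class functions on $G \times G$, the identity reduces to $a_\psi a_{\psi'} = (f(1)a_\psi/d_\psi)\delta_{\psi\psi'}$ for every pair $(\psi,\psi')$, and the off-diagonal constraints force at most one $a_\psi$ to be nonzero.

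The tester will draw $s = \Theta(\epsilon^{-4})$ pairs $(x,y)$ uniformly from $G^2$, query $f(1)$ once, and for each pair use Lemma~\ref{lem:estimate} to compute an estimate $\widetilde{M}(x,y)$ of $M(x,y) := \E_z[f(xzyz^{-1})]$ to additive precision $\Theta(\epsilon^2)$ with failure probability $\Theta(1/s)$, costing $O(\epsilon^{-4}\log(1/\epsilon))$ queries per pair; it rejects when $|f(1)\widetilde{M}(x,y) - f(x)f(y)|$ exceeds a threshold of order $\epsilon^2$ on some pair. Completeness is immediate from the identity and a union bound. For soundness, set $E(x,y) := f(1)M(x,y) - f(x)f(y)$; orthonormality of $\chi_\psi \otimes \chi_{\psi'}$ on $G \times G$ yields
\[
  \E_{x,y}|E(x,y)|^2 = \sum_\psi |a_\psi|^2 \Bigl|\tfrac{f(1)}{d_\psi} - a_\psi\Bigr|^2 + \sum_{\psi \neq \psi'} |a_\psi|^2 |a_{\psi'}|^2,
\]
whose off-diagonal part is at least $\|f\|_2^2(\|f\|_2^2 - \max_\psi|a_\psi|^2)$. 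Because $0$ belongs to the property we have $\|f\|_2 \geq 2\epsilon$, and because $a_{\psi^*}\chi_{\psi^*}$ (for $\psi^* = \arg\max_\psi|a_\psi|$) is an in-property approximator in the typical case $|a_{\psi^*}| \leq 1/d_{\psi^*}$, we also get $\|f\|_2^2 - \max_\psi|a_\psi|^2 \geq 4\epsilon^2$; hence $\E|E|^2 = \Omega(\epsilon^4)$. Combined with the pointwise bound $|E|\leq 2$, Paley--Zygmund gives $\Pr_{x,y}[|E(x,y)| \geq \Omega(\epsilon^2)] \geq \Omega(\epsilon^4)$, which the $\Theta(\epsilon^{-4})$ trials amplify to constant rejection probability.

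The hardest part will be the soundness bound in the regime where $|a_{\psi^*}| > 1/d_{\psi^*}$, so that $a_{\psi^*}\chi_{\psi^*}$ is not itself in $\bbD^G$ and cannot serve as the in-property witness used above. I plan to handle this by case splitting on whether $(|a_{\psi^*}| - 1/d_{\psi^*})^2$ is small or large: in the small case the truncated approximator $(1/d_{\psi^*})e^{\iunit\arg a_{\psi^*}}\chi_{\psi^*}$ differs from the unconstrained one by a controlled amount and the off-diagonal estimate survives after adjusting constants, while in the large case the diagonal contribution $|a_{\psi^*}|^2|f(1)/d_{\psi^*} - a_{\psi^*}|^2$ is already $\Omega(\epsilon^4)$ on its own because $|f(1)|/d_{\psi^*} \leq 1/d_{\psi^*}$ forces a gap of size at least $|a_{\psi^*}| - 1/d_{\psi^*}$ between $f(1)/d_{\psi^*}$ and $a_{\psi^*}$.
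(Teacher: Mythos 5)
Your proposal is correct, and the algorithm and the key identity are the same as the paper's (it is Lemma~\ref{lem:characterization-of-character}, with the symmetry $\E_z[f(xzyz^{-1})]=\E_z[f(yzxz^{-1})]$ for class functions); the difference lies in the soundness computation. You expand $E(x,y)=f(1)M(x,y)-f(x)f(y)$ in the character basis $\{\chi_\psi(x)\chi_{\psi'}(y)\}$ of class functions on $G\times G$ (legitimate after the reduction of Lemma~\ref{lem:reduced-to-class-function}), which gives $\E|E|^2 = \sum_\psi|a_\psi|^2|f(1)/d_\psi - a_\psi|^2 + \sum_{\psi\ne\psi'}|a_\psi|^2|a_{\psi'}|^2$, and you then extract an $\Omega(\epsilon^4)$ lower bound by a three-way case split on $|a_{\psi^*}|$ versus $1/d_{\psi^*}$. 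The paper instead expands only $f(y)$ (over all matrix coefficients, so no class-function hypothesis is needed in the lemma itself), factors $E(x,y)=\sum_\varphi d_\varphi(f(x)-f(1)\widetilde\chi_\varphi(x))\sum_{ij}\widehat f(\varphi)_{ij}\varphi_{ij}(y)$, and obtains the packaged bound $\E|E|^2 \ge \|f\|_2^2 \min_\varphi\|f-f(1)\widetilde\chi_\varphi\|_2^2$ (Lemma~\ref{lem:sum-of-squared-difference}); soundness is then immediate because the comparator $f(1)\widetilde\chi_\varphi$ always lands in $\bbD$, since $|f(1)|\le 1$ and $|\widetilde\chi_\varphi|\le 1$, so no normalization issue ever arises. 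Your two formulas are in fact identical after simplification (your diagonal-plus-off-diagonal sum equals $\sum_\varphi|a_\varphi|^2\|f-f(1)\widetilde\chi_\varphi\|_2^2$), so the difference is purely which convex lower bound one takes; the paper's $\min_\varphi$ form sidesteps the case analysis you correctly anticipated as the hard part, and your fix — using the truncated witness $(1/d_{\psi^*})e^{\iunit\arg a_{\psi^*}}\chi_{\psi^*}$ in the small-gap case and the diagonal term together with $|a_{\psi^*}|>\sqrt2\epsilon$ in the large-gap case — is sound and gives the same $\Omega(\epsilon^4)$ bound. One cosmetic mismatch: the paper's Algorithm~\ref{alg:normalized-character} includes an initial estimate-and-accept step for $\|f\|_2^2$, which you replace (equivalently) by the observation that $0$ satisfies the property, giving $\|f\|_2\ge 2\epsilon$ in the far case.
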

As any character is a class function,
by Lemma~\ref{lem:reduced-to-class-function},
it suffices to give a tester with query complexity $O(1/\epsilon^8 \log(1/\epsilon))$ that works when the input function is a class function.
The following fact is crucial for our algorithm.
\begin{lemma}[\cite{Weyl:1927dn}]\label{lem:characterization-of-character}
  For a function $f:G \to \bbC$, the following are equivalent.
  \begin{enumerate}
  \itemsep=0pt
  \item $f(x) = f(1) \widetilde{\chi}_\varphi(x) $ for some irreducible representation $\varphi$,  where $\widetilde{\chi}_\varphi = \chi_\varphi / d_\varphi$.
  \item $f(x)f(y) = f(1)\E_{z \in G}[ f(yzxz^{-1})]$ for any $x, y \in G$.
  \end{enumerate}
\end{lemma}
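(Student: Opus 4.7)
The plan is to prove the two implications separately. For (1) $\Rightarrow$ (2), the key is the classical identity
\[
  \E_{z\in G}[\varphi(zxz^{-1})] = \frac{\chi_\varphi(x)}{d_\varphi}\, I_{d_\varphi},
\]
which follows from Schur's lemma: the averaged matrix commutes with every $\varphi(g)$ (by the substitution $z \mapsto g^{-1}z$, which leaves the uniform measure invariant), so for irreducible $\varphi$ it is a scalar matrix, and taking traces pins the scalar down using the fact that $\chi_\varphi$ is itself conjugate invariant. Multiplying this identity by $\varphi(y)$ on the left and tracing yields $\E_z[\chi_\varphi(yzxz^{-1})] = \chi_\varphi(x)\chi_\varphi(y)/d_\varphi$. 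Substituting $f = f(1)\widetilde{\chi}_\varphi$ and using $\chi_\varphi(1) = d_\varphi$ then gives (2) by direct calculation.

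The reverse direction (2) $\Rightarrow$ (1) is the substantive content. First I dispose of the degenerate case $f(1) = 0$: equation (2) forces $f(x)f(y) = 0$ for all $x,y$, so $f \equiv 0$ and the conclusion holds vacuously with any $\varphi$. Assuming $f(1) \neq 0$, setting $y = 1$ in (2) gives $f(x) = \E_z[f(zxz^{-1})]$, which is exactly the statement that $f$ is a class function. By the orthonormal basis statement for class functions preceding Lemma~\ref{lem:fourier-expansion-of-class-function}, I can therefore expand $f = \sum_{\varphi \in \widehat{G}} a_\varphi \chi_\varphi$ with $a_\varphi = \langle f, \chi_\varphi\rangle$.

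The crux is to plug this expansion into both sides of (2) and compare coefficients. The left-hand side becomes $\sum_{\varphi,\psi} a_\varphi a_\psi \chi_\varphi(x)\chi_\psi(y)$, and by the identity established in the first direction the right-hand side becomes $f(1) \sum_\varphi (a_\varphi / d_\varphi)\, \chi_\varphi(x)\chi_\varphi(y)$. The functions $\{\chi_\varphi(x)\chi_\psi(y)\}_{\varphi,\psi \in \widehat{G}}$ are the irreducible characters of the product group $G \times G$, hence are orthonormal on $G \times G$, so matching coefficients is legitimate. The off-diagonal equations yield $a_\varphi a_\psi = 0$ for $\varphi \neq \psi$, so at most one coefficient is nonzero (and at least one must be, since $f(1) \neq 0$); the diagonal equations yield $a_\varphi^2 = f(1)\, a_\varphi / d_\varphi$, so the surviving coefficient must equal $f(1)/d_\varphi$, giving exactly $f = f(1)\widetilde{\chi}_\varphi$. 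The main obstacle is this coefficient-matching step: one has to spot the right orthonormal system on $G \times G$ (products of irreducible characters) to disentangle the bilinear $f(x)f(y)$ on the left from the diagonal-weighted expression on the right. Once that orthogonality is in hand, the remainder is routine algebra.
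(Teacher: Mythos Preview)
The paper does not prove this lemma; it is quoted as a classical result of Weyl and used as a black box. Your argument is correct and is essentially the standard proof. One very minor remark: the identity $f(x) = \E_z[f(zxz^{-1})]$ you obtain by setting $y=1$ is not \emph{literally} the definition of a class function, but it implies it in one line, since the right-hand side depends only on the conjugacy class of $x$ (substitute $z \mapsto zg^{-1}$). Everything else---the Schur-lemma computation of $\E_z[\varphi(zxz^{-1})]$, the coefficient comparison via orthogonality of $\{\chi_\varphi \otimes \chi_\psi\}$ on $G\times G$, and the handling of the degenerate case $f(1)=0$---is clean and complete.
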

As we can freely change the value of $f(1)$ by multiplying a constant,
the second condition is a necesary and sufficient condition of being proportional to an irreducible character.

The most simple test based on Lemma~\ref{lem:characterization-of-character} is checking whether $f(x)f(y) \approx f(1)\E_{z}[ f(yzxz^{-1})]$ (by estimating the latter by sampling $z \in G$ a constant number of times).
However, we were unable to handle the term $\E_{x,y,z}[f(x)f(y)\overline{ f(yzxz^{-1})}\overline{f(1)}]$ that naturally arises when analyzing this test.
Instead, we estimate $|f(x)f(y) - f(1)\E_z[f(xyzxz^{-1})]|$ and check whether it is small.
The detail is given in Algorithm~\ref{alg:normalized-character}.
It is clear that the query complexity of Algorithm~\ref{alg:normalized-character} is $O(1/\epsilon^8 \log 1/\epsilon)$.

\begin{algorithm}[!t]
  \caption{(Tester for being proportional to an irreducible character)}
  \label{alg:normalized-character}
  \begin{algorithmic}[1]
  \REQUIRE{A function $f:G \to \bbD$.}
  \STATE Let $e_{\ref{line:estimate-f_2^2}}$ be the estimation to $\|f\|_2^2$ obtained by applying Lemma~\ref{lem:estimate} with the error parameter $\epsilon^2/100$ and the confidence parameter $1/100$. \label{line:estimate-f_2^2}
  \STATE \textbf{if} $e_{\ref{line:estimate-f_2^2}} < \epsilon^2 / 2$ \textbf{then} accept.
  \FOR{each $i = 1$ to $s := O(1/\epsilon^4)$}
    \STATE Sample $x,y\in G$ uniformly at random.
    \STATE Let $e^i_{\ref{line:estimate-f(yzxz)}}$ be the estimation to $\E_z [f(yzxz^{-1})]$ obtained by applying Lemma~\ref{lem:estimate} with the error parameter $\epsilon^2/10$ and the confidence parameter $1/100s$.  \label{line:estimate-f(yzxz)}
    \STATE Let $e^i_{\ref{line:estimate-fxfy-e}} = |f(x)f(y) - f(1)e^i_{\ref{line:estimate-f(yzxz)}}|^2$. \label{line:estimate-fxfy-e}
    \STATE \textbf{if} $e^i_{\ref{line:estimate-fxfy-e}} > \epsilon^4/100$ \textbf{then} reject. \label{line:alg3-reject}
  \ENDFOR
  \STATE accept.
  \end{algorithmic}
\end{algorithm}

\begin{lemma}
  If a function $f:G \to \bbC$ is proportional to an irreducible character, then Algorithm~\ref{alg:normalized-character} accepts with probability at least $2/3$.
\end{lemma}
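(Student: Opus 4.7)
The strategy is to reduce everything to the exact Weyl identity combined with a simple concentration argument. Suppose that $f = c\,\chi_\varphi$ for some irreducible representation $\varphi$. By Lemma~\ref{lem:characterization-of-character},
\[
f(x)\,f(y) \;=\; f(1)\,\E_{z \in G}\bigl[f(yzxz^{-1})\bigr]
\]
holds \emph{exactly} for every $x,y \in G$. Consequently, if we had oracle access to the expectation on the right-hand side, the quantity evaluated on Line~\ref{line:estimate-fxfy-e} would be identically zero and the loop in Lines~3--8 could never reject. All that remains is to check that the sampling errors accumulated in Lines~\ref{line:estimate-f_2^2} and~\ref{line:estimate-f(yzxz)} are small enough not to spoil this.

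First I would observe that Line~2 only ever accepts; it can never cause a rejection, so the estimate $e_{\ref{line:estimate-f_2^2}}$ is harmless for completeness and may be ignored. Next I would analyze a single iteration of the loop. By Lemma~\ref{lem:estimate} applied with error parameter $\epsilon^2/10$ and confidence $1/(100 s)$, the estimate $e^i_{\ref{line:estimate-f(yzxz)}}$ satisfies
\[
\bigl| e^i_{\ref{line:estimate-f(yzxz)}} - \E_{z}\!\left[f(y z x z^{-1})\right] \bigr| \;\leq\; \epsilon^2/10
\]
with probability at least $1 - 1/(100 s)$. Substituting the Weyl identity and using $|f(1)| \leq 1$ (which holds because $f$ takes values in $\bbD$), the triangle inequality yields
\[
\bigl| f(x)\,f(y) - f(1)\,e^i_{\ref{line:estimate-f(yzxz)}} \bigr| \;\leq\; |f(1)| \cdot \epsilon^2/10 \;\leq\; \epsilon^2/10,
\]
so $e^i_{\ref{line:estimate-fxfy-e}} \leq \epsilon^4/100$ and iteration $i$ does not trigger a rejection on Line~\ref{line:alg3-reject}.

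Finally I would union-bound over the $s = O(1/\epsilon^4)$ iterations: with probability at least $1 - s \cdot 1/(100 s) = 99/100 \geq 2/3$, every iteration produces $e^i_{\ref{line:estimate-fxfy-e}} \leq \epsilon^4/100$, and the algorithm reaches the final accept on Line~9. There is no genuine obstacle in this direction; completeness is essentially bookkeeping on top of the exact identity in Lemma~\ref{lem:characterization-of-character} plus the Hoeffding-type bound of Lemma~\ref{lem:estimate}. The real difficulty in analyzing Algorithm~\ref{alg:normalized-character} lies in soundness, where one must translate empirical closeness of $f(x)f(y)$ and $f(1)\E_z[f(yzxz^{-1})]$ into the existence of a single irreducible $\varphi$ to which $f$ is proportional, but that is a separate lemma.
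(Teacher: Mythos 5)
Your proof is correct and follows essentially the same route as the paper's: apply the exact Weyl identity from Lemma~\ref{lem:characterization-of-character}, note that the error in $e^i_{\ref{line:estimate-f(yzxz)}}$ propagates through $|f(1)| \leq 1$ to give $e^i_{\ref{line:estimate-fxfy-e}} \leq \epsilon^4/100$, and union-bound over the loop iterations. The only (minor) difference is that you explicitly observe that the estimate in Line~\ref{line:estimate-f_2^2} can only cause acceptance and hence needs no error budget, whereas the paper simply folds all estimations into one union bound; both give probability well above $2/3$.
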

\begin{proof}
  By the union bound, all the estimations succeed with probability at least $2/3$.
  Below we assume this indeed happens.

  Recall that $f(x)f(y) - f(1)\E_z[f(yzxz^{-1})] = 0$ for any $x, y \in G$ by Lemma~\ref{lem:characterization-of-character}.
  Then for each $i$,
  $e^i_{\ref{line:estimate-fxfy-e}} = |f(x)f(y) - f(1)e^i_{\ref{line:estimate-f(yzxz)}}|^2 = |f(1)\E_z[f(yzxz^{-1})] - f(1)e^i_{\ref{line:estimate-f(yzxz)}}|^2 \leq \epsilon^4/100$ holds for every $i \in [s]$.
  Hence, we accept with probability at least $2/3$.
\end{proof}

Now we turn to the case that $f$ is $\epsilon$-far from being proportional to an irreducible character.
We need the following auxiliary lemma.
\begin{lemma}\label{lem:sum-of-squared-difference}
  For any function $f: G \to \bbC$, we have
  \[
    \E_{x, y}\Bigl[ \bigl| f(x)f(y) -f(1)\E_{z}[f(yzxz^{-1})] \bigr|^{2}  \Bigr] \geq  \|f\|_{2}^{2} \min_{\varphi} \|f - f(1)\widetilde{\chi}_{\varphi} \|_{2}^{2}.
  \]
\end{lemma}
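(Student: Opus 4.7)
The plan is to Fourier-expand both sides using Lemma~\ref{lem:matrix-coefficients-forms-orthonormal-basis} on $G$ and its tensor extension on $G\times G$, and reduce the inequality to the trivial weighted-minimum bound $\sum_\varphi \lambda_\varphi S_\varphi \geq (\sum_\varphi \lambda_\varphi)\min_\sigma S_\sigma$ for non-negative $\lambda_\varphi$. Throughout, write $A_\varphi := \widehat{f}(\varphi) \in M_{d_\varphi}(\bbC)$ and let $f_\varphi(y) := d_\varphi \sum_{i,j}(A_\varphi)_{ij}\varphi_{ij}(y)$ be the $\varphi$-isotypic projection, so that $f = \sum_\varphi f_\varphi$ and $\|f\|_2^2 = \sum_\varphi d_\varphi \|A_\varphi\|_F^2$ by Parseval.

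The first key step is a Schur-type identity for the conjugation average. For each $\varphi \in \widehat{G}$, the map $B \mapsto \E_z[\varphi(z)B\varphi(z)^{-1}]$ is $G$-equivariant, so by Schur's lemma it equals $(\tr B/d_\varphi) I_{d_\varphi}$. Specializing at $B = \varphi(x)$ and extending by linearity gives
\[
\E_z[f(yzxz^{-1})] = \sum_\varphi \widetilde{\chi}_\varphi(x)\, f_\varphi(y).
\]
Set $\Psi(x,y) := f(x)f(y) - f(1)\,\E_z[f(yzxz^{-1})]$ and compute its Fourier coefficients on $G\times G$ against the orthonormal basis $\{\sqrt{d_\varphi d_{\varphi'}}\,\varphi_{ij}(x)\varphi'_{i'j'}(y)\}$; using orthogonality of matrix coefficients, a direct calculation yields
\[
\widehat\Psi(\varphi\otimes\varphi')_{(ij,i'j')} = (A_\varphi)_{ij}(A_{\varphi'})_{i'j'} - \delta_{\varphi\varphi'}\,\delta_{ij}\, \frac{f(1)}{d_\varphi^2}\,(A_\varphi)_{i'j'}.
\]

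Applying Parseval on $G\times G$, separating the $\varphi\neq\varphi'$ and $\varphi=\varphi'$ contributions, and substituting $\sum_{\varphi\neq\varphi'} d_\varphi d_{\varphi'}\|A_\varphi\|_F^2\|A_{\varphi'}\|_F^2 = \|f\|_2^4 - \sum_\varphi d_\varphi^2\|A_\varphi\|_F^4$ produces, after collecting terms,
\[
\E_{x,y}|\Psi(x,y)|^2 = \|f\|_2^4 + \sum_\varphi d_\varphi\|A_\varphi\|_F^2\cdot S_\varphi, \qquad S_\varphi := d_\varphi\bigl(\|A_\varphi - (f(1)/d_\varphi^2) I_{d_\varphi}\|_F^2 - \|A_\varphi\|_F^2\bigr).
\]
The same quantity $S_\sigma$ arises on the right-hand side of the lemma: the orthogonality relations give $\widehat{\widetilde{\chi}_\sigma}(\varphi) = \delta_{\sigma\varphi}\, I_{d_\sigma}/d_\sigma^2$, so Parseval on $G$ shows $\|f - f(1)\widetilde{\chi}_\sigma\|_2^2 = \|f\|_2^2 + S_\sigma$, whence
\[
\|f\|_2^2 \min_\sigma \|f - f(1)\widetilde{\chi}_\sigma\|_2^2 = \|f\|_2^4 + \Bigl(\sum_\varphi d_\varphi\|A_\varphi\|_F^2\Bigr) \min_\sigma S_\sigma.
\]

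Comparing the two displays, the desired inequality collapses to $\sum_\varphi \lambda_\varphi S_\varphi \geq (\sum_\varphi \lambda_\varphi)\min_\sigma S_\sigma$ with $\lambda_\varphi := d_\varphi\|A_\varphi\|_F^2 \geq 0$, which holds termwise. I do not expect any genuine difficulty: once Schur's lemma pins down $\E_z[f(yzxz^{-1})]$, the rest is index bookkeeping. The most error-prone step will be tracking the powers of $d_\varphi$ through the tensor-product Parseval identity and verifying that the $\varphi=\varphi'$ cross term collapses precisely to the per-frequency squared distance $\|A_\varphi - (f(1)/d_\varphi^2)I\|_F^2$, because this matching is exactly what makes the weighted-minimum reduction at the end possible.
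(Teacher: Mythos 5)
Your proof is correct, and it reaches the same central identity $\E_{x,y}|\Psi(x,y)|^{2} = \sum_{\varphi} d_{\varphi}\|\widehat f(\varphi)\|_F^{2}\,\|f - f(1)\widetilde\chi_\varphi\|_2^2$ as the paper before applying the same termwise weighted-minimum step with weights $\lambda_\varphi = d_\varphi\|\widehat f(\varphi)\|_F^2$, but the route there differs. The paper never leaves $G$: after computing $\E_z[f(yzxz^{-1})]$ it factors $\Psi(x,y)=\sum_\varphi \bigl(f(x)-f(1)\widetilde\chi_\varphi(x)\bigr)\,f_\varphi(y)$ (a sum of $x$-functions times mutually orthogonal $y$-functions), so the $L_2$ norm splits immediately by orthogonality in $y$ without any recombination. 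You instead Fourier-expand $\Psi$ fully on $G\times G$, apply Parseval there, and then reassemble the $\varphi\neq\varphi'$ cross terms back into $\|f\|_2^4$ and the diagonal terms into the per-frequency distances via the quantity $S_\varphi$. Both are sound; yours trades the paper's one structural observation (the rank-one-per-$\varphi$ factorization in $x$ and $y$) for a more mechanical but self-contained index computation. Two pleasant features of your write-up: the Schur's-lemma derivation of $\E_z[\varphi(z)B\varphi(z)^{-1}]=(\tr B/d_\varphi)I$ is cleaner than the coordinate calculation in the paper, and making the common quantity $S_\varphi$ explicit on both sides makes the final reduction to $\sum_\varphi \lambda_\varphi S_\varphi \geq (\sum_\varphi\lambda_\varphi)\min_\sigma S_\sigma$ especially transparent. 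One small caveat worth stating if you flesh this out: the minimum in the lemma ranges over all of $\widehat G$, and $S_\sigma$ need not be nonnegative, but the termwise bound only needs $S_\varphi \geq \min_\sigma S_\sigma$ and $\lambda_\varphi\geq 0$, so nothing breaks.
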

\begin{proof}
  We have
  \begin{align*}
    & \E_{z}\Bigl[ f(yzxz^{-1}) \Bigr]
    = \E_{z}\Bigl[ \sum_{\varphi} d_{\varphi} \sum_{i, j\in [d_{\varphi}]} \widehat{f}(\varphi)_{ij} \varphi_{ij}(yzxz^{-1}) \Bigr] \\
    = & \E_{z}\Bigl[ \sum_{\varphi} d_{\varphi} \sum_{i, j\in [d_{\varphi}]} \widehat{f}(\varphi)_{ij} \sum_{k, l, m\in [d_{\varphi}]} \varphi_{ik}(y)\varphi_{kl}(z)\varphi_{lm}(x)\varphi_{mj}(z^{-1}) \Bigr]\\
    =& \sum_{\varphi} d_{\varphi} \sum_{i, j\in [d_{\varphi}]} \widehat{f}(\varphi)_{ij} \sum_{k, l, m\in [d_{\varphi}]} \varphi_{ik}(y)\varphi_{lm}(x)\E_{z}\Bigl[\varphi_{kl}(z)\overline{\varphi_{jm}(z)} \Bigr] \\
    = & \sum_{\varphi} d_{\varphi} \sum_{i, j\in [d_{\varphi}]} \widehat{f}(\varphi)_{ij} \sum_{k, l, m\in [d_{\varphi}]} \varphi_{ik}(y)\varphi_{lm}(x) \frac{\delta_{kj}\delta_{lm}}{d_{\varphi}}\\
    =& \sum_{\varphi} d_{\varphi} \sum_{i, j\in [d_{\varphi}]} \widehat{f}(\varphi)_{ij} \varphi_{ij}(y) \sum_{k\in [d_{\varphi}]} \frac{\varphi_{kk}(x)}{d_{\varphi}}
    = \sum_{\varphi} d_{\varphi} \widetilde{\chi}_{\varphi}(x) \sum_{i, j\in [d_{\varphi}]} \widehat{f}(\varphi)_{ij} \varphi_{ij}(y)
  \end{align*}
In the third equality, we used the fact that $\varphi_{mj}(z^{-1}) = \overline{\varphi_{jm}(z)}$.
This follows from the fact that $\varphi(z^{-1})\varphi(z) = \varphi(1) = I$ and $\varphi(z)$ is unitary.

  Therefore,
  \begin{align*}
    & f(x)f(y) - f(1)\E_{z}[f(yzxz^{-1})]
    = f(x)\sum_{\varphi} d_{\varphi} \sum_{i, j\in [d_{\varphi}]} \widehat{f}(\varphi)_{ij} \varphi_{ij}(y) - f(1)\sum_{\varphi} d_{\varphi} \widetilde{\chi}_{\varphi} \sum_{i, j\in [d_{\varphi}]} \widehat{f}(\varphi)_{ij} \varphi_{ij}(y) \\
    = & \sum_{\varphi} d_{\varphi} ( f(x) - f(1)\widetilde{\chi}_{\varphi}(x)) \sum_{i, j\in [d_{\varphi}]} \widehat{f}(\varphi)_{ij} \varphi_{ij}(y)
  \end{align*}
  It follows that
  \begin{align*}
    & \E_{x, y}\Bigl[ \bigl| f(x)f(y) -  f(1)\E_{z}[f(yzxz^{-1})] \bigr|^{2}  \Bigr]
    = \E_{x, y}\Bigl[ \bigl|\sum_{\varphi} d_{\varphi} ( f(x) - f(1)\widetilde{\chi}_{\varphi}(x)) \sum_{i, j\in [d_{\varphi}]} \widehat{f}(\varphi)_{ij} \varphi(y)_{ij} \bigr|^{2} \Bigr]\\
    = & \sum_{\varphi, \varphi'} d_{\varphi}d_{\varphi'} \E_{x}\bigl[ (f(x) - f(1)\widetilde{\chi}_{\varphi}(x))\overline{(f(x) - f(1)\widetilde{\chi}_{\varphi'}(x))}\bigr] \sum_{i, j\in [d_{\varphi}]} \sum_{i', j'\in [d_{\varphi'}]} \widehat{f}(\varphi)_{ij} \overline{\widehat{f}(\varphi')_{i'j'}} \E_{y}\bigl[\varphi_{ij}(y)\overline{\varphi'_{i'j'}(y)} \bigr]\\
    = & \sum_{\varphi} d_{\varphi} \| f - f(1)\widetilde{\chi}_{\varphi}\|_{2}^{2} \sum_{i, j\in [d_{\varphi}]} \left|\widehat{f}(\varphi)_{ij}\right|^{2}
    \geq   \min_{\varphi}\| f - f(1)\widetilde{\chi}_{\varphi}\|_{2}^{2} \sum_{\varphi} d_{\varphi}  \sum_{i, j\in [d_{\varphi}]} \left|\widehat{f}(\varphi)_{ij}\right|^{2}\\
    \geq & \|f\|_{2}^{2} \min_{\varphi}\| f - f(1)\widetilde{\chi}_{\varphi}\|_{2}^{2}
  \end{align*}
  In the third equality, we used the fact that $\E_{y}\bigl[\varphi_{ij}(y)\overline{\varphi'_{i'j'}(y)} \bigr]$ is equal to $1/d_\varphi$ if $\varphi = \varphi'$, $i = i'$, and $j = j'$, and is equal to zero otherwise.
\end{proof}

\begin{lemma}
  If a function $f:G \to \bbC$ with $f(1) = 1$ is $\epsilon$-far from being proportional to an irreducible character,
  then Algorithm~\ref{alg:normalized-character} rejects with probability at least $2/3$.
\end{lemma}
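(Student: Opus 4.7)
The plan is to show that $T(x,y) := f(x)f(y) - f(1)\,\E_z[f(yzxz^{-1})]$, whose estimated squared modulus drives the rejection at Line~\ref{line:alg3-reject}, has a large average squared magnitude over random $(x,y)$, and then to translate this into a constant rejection probability via a reverse-Markov argument together with careful propagation of the inner sampling error. The heavy lifting is already done by Lemma~\ref{lem:sum-of-squared-difference}; what remains is feeding it the right inputs and chasing constants.

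First I would dispatch the early acceptance at Line~2: the zero function is proportional to every irreducible character (with constant zero), so the farness hypothesis forces $\dist(f,0) = \tfrac{1}{2}\|f\|_2 \geq \epsilon$ and hence $\|f\|_2^2 \geq 4\epsilon^2$; with probability at least $99/100$ the estimate $e_{\ref{line:estimate-f_2^2}}$ therefore exceeds $\epsilon^2/2$ and the tester enters the main loop. Next, because $f(1)\widetilde{\chi}_\varphi = (f(1)/d_\varphi)\chi_\varphi$ is itself proportional to $\chi_\varphi$, farness gives $\|f - f(1)\widetilde{\chi}_\varphi\|_2 \geq 2\epsilon$ for every irreducible $\varphi$; combined with $\|f\|_2^2 \geq 4\epsilon^2$, Lemma~\ref{lem:sum-of-squared-difference} yields $\E_{x,y}[|T(x,y)|^2] \geq 16\epsilon^4$. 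Since $|T(x,y)| \leq |f(x)f(y)| + |f(1)\,\E_z[f(yzxz^{-1})]| \leq 2$ deterministically, a reverse-Markov step on $X := |T(x,y)|^2 \in [0,4]$ (from $\E X \leq 4\Pr[X \geq 8\epsilon^4] + 8\epsilon^4$) gives $\Pr_{x,y}[|T(x,y)|^2 \geq 8\epsilon^4] \geq 2\epsilon^4$.

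To translate this into rejection, set $\widetilde{T}_i := f(x)f(y) - f(1)\,e^i_{\ref{line:estimate-f(yzxz)}}$, so that $e^i_{\ref{line:estimate-fxfy-e}} = |\widetilde{T}_i|^2$. With probability at least $1 - 1/(100s)$ Lemma~\ref{lem:estimate} guarantees $|\widetilde{T}_i - T(x,y)| = |f(1)|\cdot|e^i_{\ref{line:estimate-f(yzxz)}} - \E_z[f(yzxz^{-1})]| \leq \epsilon^2/10$, and hence $|\widetilde{T}_i| \geq |T(x,y)| - \epsilon^2/10$. Whenever $|T(x,y)|^2 \geq 8\epsilon^4$ this forces $|\widetilde{T}_i|^2 \geq (2\sqrt{2} - 1/10)^2\epsilon^4 > \epsilon^4/100$, triggering rejection. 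Choosing the hidden constant in $s = \Theta(1/\epsilon^4)$ large enough makes $(1-2\epsilon^4)^s \leq 1/100$, and a final union bound over the failure of the $\|f\|_2^2$-estimate, the failure of any of the $s$ inner estimates, and the event that no iteration lands on a bad pair drives the total failure probability below $1/3$.

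The main obstacle I expect is this quantitative bookkeeping: the average lower bound $16\epsilon^4$ only guarantees a $\Theta(\epsilon^4)$-fraction of pairs that noticeably violate Weyl's equation, forcing $s = \Omega(1/\epsilon^4)$, and simultaneously the sampling slack $\epsilon^2/10$ must be tight enough that after taking squared moduli the observable quantity still safely exceeds the $\epsilon^4/100$ threshold. The constants in Algorithm~\ref{alg:normalized-character} are tuned precisely so this accounting closes; loosening any one of them would force a corresponding tightening elsewhere.
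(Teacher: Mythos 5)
Your proof is correct and follows essentially the same route as the paper: both feed the farness hypothesis into Lemma~\ref{lem:sum-of-squared-difference} to lower-bound the expected squared discrepancy, then use a bounded-range reverse-Markov argument to extract a $\Theta(\epsilon^4)$ rejection probability per iteration and close with a union bound over the estimation failures. Your treatment is slightly more explicit than the paper's at two points that the paper leaves implicit---the zero-function observation that forces $\|f\|_2^2\geq 4\epsilon^2$ (and hence guarantees the Line-2 check cannot wrongly accept), and the careful propagation of the inner $\epsilon^2/10$ sampling slack through the squaring before comparing against the $\epsilon^4/100$ threshold---but these are bookkeeping refinements, not a different argument.
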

\begin{proof}
  By the union bound,
  with probability at least $5/6$,
  all the estimations succeed.
  Below we assume it indeed happens.

  From Lemma~\ref{lem:sum-of-squared-difference},
  when $f$ is $\epsilon$-far,
  the expectation of $e^i_{\ref{line:estimate-fxfy-e}}$ is at least
  \[
    \E_{x, y}\Bigl[ \bigl| f(x)f(y) -f(1)e_{\ref{line:estimate-f(yzxz)}}^i \bigr|^{2}  \Bigr]
    \geq  (1-\frac{2\epsilon^2}{10})\|f\|_{2}^{2} \min_{\varphi} \|f - f(1)\widetilde{\chi}_{\varphi} \|_{2}^{2} - \frac{\epsilon^4}{100}
    \geq (1-\frac{\epsilon^2}{5}) \frac{\epsilon^2}{4} \cdot \|f\|_{2}^{2} - \frac{\epsilon^4}{100} \geq \frac{\epsilon^4}{25}.
  \]
  We also note that
  $e^i_{\ref{line:estimate-fxfy-e}} \leq (1 + \epsilon^2/100)^2 \leq 2$.
  Let $p = \Pr_{x,y}[e^i_{\ref{line:estimate-fxfy-e}} > \epsilon^4/100]$.
  Then, we have $2 \cdot p + \epsilon^4/100\cdot (1-p) \geq \epsilon^4/25$, and it follows that $p \geq \epsilon^4/100$.
  By choosing the hidden constant in $s$ large enough, we reject $f$ with probability at least $2/3$.
\end{proof}

\section{Testing Unitary Equivalence}\label{sec:unitary}

In this section, we prove the following:
\begin{theorem}\label{the:unitary}
  The unitary equivalence to $g:G \to \bbD(d)$ is testable with $\bigl(d^{3/2}/\epsilon\bigr)^{O(d^2)}$ queries.
\end{theorem}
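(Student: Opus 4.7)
The plan is to reduce the problem to an exhaustive search over a finite net of the compact group $U(d)$. Since the set of functions unitarily equivalent to $g$ is the orbit $\{U g U^{-1} : U \in U(d)\}$ parametrized by the $d^2$-dimensional compact manifold $U(d)$, for constant $d$ the metric entropy is bounded and we can afford to enumerate a sufficiently fine net and test each candidate against $f$ by random sampling.

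First I would establish a Lipschitz-type bound
\[
  \dist(U_1 g U_1^{-1},\; U_2 g U_2^{-1}) \leq C \|U_1 - U_2\|_{\mathrm{op}}
\]
for all $U_1, U_2 \in U(d)$. This follows from the decomposition $U_1 g(x) U_1^{-1} - U_2 g(x) U_2^{-1} = U_1 g(x) (U_1^{-1} - U_2^{-1}) + (U_1 - U_2) g(x) U_2^{-1}$, the inequality $\|A B C\|_F \leq \|A\|_{\mathrm{op}} \|B\|_F \|C\|_{\mathrm{op}}$, the identity $\|U_1^{-1} - U_2^{-1}\|_{\mathrm{op}} = \|U_1 - U_2\|_{\mathrm{op}}$ for unitaries, and the assumption $\|g(x)\|_F \leq 1$ for every $x \in G$.

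Second, I would build an $\eta$-net $\mathcal{N} \subseteq U(d)$ in operator norm with $\eta = \Theta(\epsilon)$. A standard volumetric packing argument (or equivalently, drawing enough samples from the Haar measure on $U(d)$, which is the route hinted at in the introduction) yields $|\mathcal{N}| = (d^{O(1)}/\epsilon)^{d^2}$. For each $U \in \mathcal{N}$, I would estimate $\dist(f, U g U^{-1})^2 = \frac{1}{4}\E_x \|f(x) - U g(x) U^{-1}\|_F^2$ by an empirical mean over $t = O(\log |\mathcal{N}| / \epsilon^4)$ uniformly random $x$; since the summand is bounded by $1$, Hoeffding's inequality combined with a union bound over $\mathcal{N}$ gives additive error at most $\epsilon^2/16$ simultaneously for every $U \in \mathcal{N}$ with probability at least $2/3$. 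The tester accepts iff some $U \in \mathcal{N}$ has empirical squared distance at most $\epsilon^2/4$.

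Correctness is then a routine two-case split. In the yes case $f = U^* g (U^*)^{-1}$, the net contains some $\hat{U}$ with $\|\hat{U} - U^*\|_{\mathrm{op}} \leq \eta$, and the Lipschitz bound gives $\dist(f, \hat{U} g \hat{U}^{-1})^2 \leq C^2 \eta^2 \leq \epsilon^2/8$, so the corresponding estimate lies below the acceptance threshold. If $f$ is $\epsilon$-far from the orbit, then $\dist(f, U g U^{-1})^2 \geq \epsilon^2$ for every $U \in U(d)$, hence every $U \in \mathcal{N}$, and every estimate exceeds $\epsilon^2/4$. The total query complexity is $|\mathcal{N}| \cdot t = (d^{O(1)}/\epsilon)^{O(d^2)}$, matching the claimed $(d^{3/2}/\epsilon)^{O(d^2)}$ bound. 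I expect the main obstacle to be pinning down the precise $d$-dependence in the interplay between the operator norm in which the net is built and the Frobenius norm in which the distance is measured; obtaining the prefactor $d^{3/2}$ rather than some larger polynomial in $d$ requires care about the local geometry of $U(d)$ at a general base point (or, equivalently, about the measure-concentration behavior of Haar-random unitaries), whereas the sampling and union-bound steps are standard.
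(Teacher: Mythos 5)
Your overall strategy --- cover $U(d)$ by a finite collection of candidates, estimate $\dist(f, UgU^*)$ for each, and accept iff some estimate is small --- is the same as the paper's, but two of your ingredients differ and one is a genuine simplification. The paper covers $U(d)$ implicitly by Haar sampling, and its density bound (Lemma~\ref{lem:unitary-epsilon-net}) is proved via Weyl's integral formula and the Vandermonde density of eigenphases; you propose an explicit $\eta$-net via a volumetric/metric-entropy argument, which gives the same $(\mathrm{poly}(d)/\epsilon)^{\Theta(d^2)}$ count and derandomizes the choice of candidates --- the exact polynomial-in-$d$ prefactor is immaterial since any $d^{O(1)}$ is absorbed into the $O(d^2)$ exponent. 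More substantively, your Lipschitz bound $\|U_1 g(x) U_1^{-1} - U_2 g(x) U_2^{-1}\|_F \leq 2\|g(x)\|_F\,\|U_1 - U_2\|_{\mathrm{op}}$, obtained by adding and subtracting $U_1 g(x) U_2^{-1}$ and using $\|ABC\|_F \leq \|A\|_{\mathrm{op}}\|B\|_F\|C\|_{\mathrm{op}}$, is cleaner than the paper's Lemma~\ref{lem:U-V->UAU-VAV}: the paper establishes the weaker constant $3$, only for ``sufficiently small'' $\epsilon$, and does so by invoking the exponential map $\exp:\Skew(d)\to U(d)$ (Lemma~\ref{lem:exponential-map}) together with a Taylor expansion carrying informal $O(\epsilon^2)$ error terms. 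Your two-line computation needs no Lie-algebra machinery, holds for all pairs of unitaries with no smallness assumption, and since $\|\cdot\|_{\mathrm{op}}\le\|\cdot\|_F$ it immediately implies the paper's Frobenius-norm version. The sampling-plus-union-bound correctness analysis at the end is sound and parallels the paper's proof of the theorem.
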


Our algorithm is described in Algorithm~\ref{alg:equivalence}.
We use the Haar measure on $U(d)$ to sample unitary matrices.
We do not need the detailed definition of the Haar measure, and we only have to understand that it defines a probability distribution on $U(d)$.
See Section~\ref{app:haar-measure} for a brief introduction to the Haar measure.

The basic idea of our algorithm and analysis is the following.
Suppose that functions $f,g:G \to \bbD(d)$ are unitary equivalent, that is, $f = U_0 g U_0^*$ for some unitary matrix $U_0 \in U(d)$.
Then, by sampling a sufficient number of unitary matrices from the Haar measure, we get a unitary matrix $U$ that is sufficiently close to $U_0$ in the sense that the Frobenius norm of $U - U_0$ is small (Lemma~\ref{lem:unitary-epsilon-net}).
Then, we can show that the Frobenius norm of $f(x) - U g(x)U^*$ is also small for any $x \in G$ (Lemma~\ref{lem:U-V->UAU-VAV}).
On the other hand, if $f$ and $g$ are $\epsilon$-far from being unitary equivalent, then the average Frobenius norm of $f(x) - Ug(x)U^*$ over $x \in G$ is large for any unitary matrix $U$.
Hence, by checking whether there is a unitary matrix $U$ (in the sample) such that the average Frobenius norm is small, we can distinguish the case that $f$ and $g$ are unitary equivalent from the case that $f$ and $g$ are $\epsilon$-far from being unitary equivalent.

\begin{algorithm}[!t]
  \caption{(Tester for unitary equivalence)}
  \label{alg:equivalence}
  \begin{algorithmic}[1]
    \REQUIRE{Functions $f, g: G \to \bbD(d)$}
    \FOR{$s := \bigl(d^{3/2}/\epsilon\bigr)^{\Theta(d^2)}$ times}
      \STATE Sample $U\in U(d)$ with respect to the (normalized) Haar measure of $U(d)$.
      \STATE Let $e$ be the estimation of $\dist(f, Ug U^*)$ obtained by applying Lemma~\ref{lem:estimate} with the error parameter $\epsilon^{2}/100$ and the confidence parameter $1/6s$.
      \STATE \textbf{if} $e < \epsilon^{2}/10$ \textbf{then} accept.
    \ENDFOR
    \STATE Reject.
  \end{algorithmic}
\end{algorithm}



Let $U\in U(d)$ be a random matrix sampled with respect to the Haar measure.
We diagonalize $U$ as $U=W \Lambda W^*$ where $W \in U(d)$ and $\Lambda = \diag (\lambda_{1}, \ldots \lambda_{d})$.
By the unitarity of $U$, the absolute value of each eigenvalue of $U$ is 1. Therefore we can write $\lambda_{i} = \exp (\iunit \theta_{i})$ for some $\theta_{i} \in [-\pi, \pi)$, which we call the \emph{phase} of $\lambda_{i}$.
We use the following proposition, which is Weyl's integral formula applied to $U(d)$. 
\begin{proposition}[\cite{brocker1985representations}]
  The distribution $\mu$ of the phases $\theta = (\theta_{1}, \ldots, \theta_{d})$ is
    $\mathrm{d}\mu (\theta ) = \frac{1}{Z_{d}}\prod_{i > j} |\lambda_{i} - \lambda_{j}|^{2} \mathrm{d}\theta$,
  where $Z_{d} := (2\pi)^{d}d!$ is a normalization constant and $\mathrm{d}\theta$ is a standard Euclid measure.
\end{proposition}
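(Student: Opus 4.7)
The plan is to derive this proposition as a special case of Weyl's integration formula, which reduces integration of conjugation-invariant functions on a compact connected Lie group to integration over a maximal torus weighted by a Jacobian factor, modulo the Weyl group.

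First I would consider the conjugation map $\Phi: U(d) \times T \to U(d)$ given by $(W, \Lambda) \mapsto W\Lambda W^{*}$, where $T$ denotes the maximal torus of diagonal unitary matrices. Right-multiplying $W$ by any element of $T$ leaves $W\Lambda W^{*}$ unchanged, so this map descends to $\bar{\Phi}: (U(d)/T) \times T \to U(d)$. Restricted to the regular locus $T_{\mathrm{reg}}$ of diagonal matrices with pairwise distinct eigenvalues, $\bar{\Phi}$ is a covering of degree equal to the order of the Weyl group $N_{U(d)}(T)/T \cong S_{d}$, since conjugation by permutation matrices realizes all permutations of diagonal entries, and these account for all preimages. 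The non-regular locus has measure zero in both $T$ (being cut out by the algebraic equations $\theta_{i} = \theta_{j}$) and $U(d)$ (its image has lower dimension), so it may be ignored.

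Next I would compute the Jacobian of $\bar{\Phi}$ at a point $(eT, \Lambda)$. Parameterizing $W = \exp(X)$ with $X$ in the orthogonal complement of $\mathfrak{t}$ in $\mathfrak{u}(d)$ (i.e., the off-diagonal skew-Hermitian matrices), the differential of $W \mapsto W\Lambda W^{*}$ in the $X$-direction is $X\Lambda - \Lambda X$, while the derivative in the $\Lambda$-direction is the identity on $\mathfrak{t}$. Translating back to $\mathfrak{u}(d)$ via right-multiplication by $\Lambda^{-1}$, the Jacobian factor becomes $|\det(\mathrm{Ad}(\Lambda^{-1}) - I \mid \mathfrak{u}(d)/\mathfrak{t})|$. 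For $\Lambda = \diag(\lambda_{1}, \ldots, \lambda_{d})$, the adjoint action sends the $(i,j)$ matrix unit to $(\lambda_{i}/\lambda_{j})$ times itself, so on the complexified $\mathfrak{u}(d)/\mathfrak{t}$ the operator $\mathrm{Ad}(\Lambda^{-1}) - I$ has eigenvalues $\lambda_{j}/\lambda_{i} - 1$ for each pair $i \neq j$. Pairing the $(i,j)$ contribution with the complex-conjugate $(j,i)$ contribution and using $|\lambda_{k}|=1$ yields
\[
|\det(\mathrm{Ad}(\Lambda^{-1}) - I \mid \mathfrak{u}(d)/\mathfrak{t})| = \prod_{i \neq j}|\lambda_{j}/\lambda_{i} - 1| = \prod_{i < j}|\lambda_{i} - \lambda_{j}|^{2}.
\]

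Assembling the pieces, Weyl's formula gives $\int_{U(d)} f(U)\, dU = \frac{1}{d!} \int_{T} \int_{U(d)/T} f(W\Lambda W^{*}) \prod_{i<j}|\lambda_{i} - \lambda_{j}|^{2} \, d(WT)\, d\Lambda$. Taking $f$ to be any class function of the phases and integrating out the $U(d)/T$ factor (which contributes a finite total volume that is absorbed into the normalization), and noting that the fundamental domain of $T$ in phase coordinates is $[-\pi,\pi)^{d}$ of Euclidean volume $(2\pi)^{d}$, the normalization constant ensuring that $\mu$ is a probability measure becomes $Z_{d} = (2\pi)^{d} d!$. The main obstacle is the Jacobian computation: one must carefully show that the real determinant of $\mathrm{Ad}(\Lambda^{-1}) - I$ on the real vector space $\mathfrak{u}(d)/\mathfrak{t}$ equals the claimed product, which requires tracking the pairing of complex-conjugate eigenvalues and verifying that no spurious factor of $2$ or sign appears; a secondary subtlety is checking that the non-regular locus is negligible, which follows from a dimension count.
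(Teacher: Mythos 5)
The paper does not supply a proof of this proposition: it is stated as a citation to Br\"ocker and tom Dieck and used as a black box. So there is no ``paper's own proof'' to compare against; you are filling in a gap the authors deliberately outsourced to a textbook.

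That said, your sketch is the standard proof of Weyl's integration formula specialized to $U(d)$, and the main ideas are correct: the conjugation map $(WT,\Lambda)\mapsto W\Lambda W^*$ restricted to the regular torus is a $|W|=d!$-fold covering, the complement of the regular locus is a lower-dimensional (hence null) set, and the Jacobian at $(eT,\Lambda)$ is $\bigl|\det(\mathrm{Ad}(\Lambda^{-1})-I)\bigr|$ on $\mathfrak{u}(d)/\mathfrak{t}$. Your eigenvalue computation on the complexification $\mathfrak{gl}(d)/\mathfrak{t}_{\bbC}$ is right, and the pairing $\bigl(\lambda_j/\lambda_i-1\bigr)\bigl(\lambda_i/\lambda_j-1\bigr)=|\lambda_i-\lambda_j|^2$ (using $|\lambda_k|=1$) correctly shows the real determinant is the nonnegative quantity $\prod_{i<j}|\lambda_i-\lambda_j|^2$, so no spurious sign or factor of $2$ arises. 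Two places where you should tighten the argument rather than wave at them: first, ``translating back via right-multiplication by $\Lambda^{-1}$'' actually produces $I-\mathrm{Ad}(\Lambda)$ rather than $\mathrm{Ad}(\Lambda^{-1})-I$; these have equal absolute determinant because the eigenvalue multiset $\{\lambda_i/\lambda_j\}_{i\neq j}$ is closed under inversion, but you should say so explicitly. Second, ``the $U(d)/T$ factor contributes a finite total volume that is absorbed into the normalization'' is too loose to pin down $Z_d$: to get $Z_d=(2\pi)^d d!$ on the nose you need the $U(d)$-invariant measure on $U(d)/T$ to be the \emph{probability} quotient measure, and then the only volume factor comes from converting the normalized Haar measure on $T\cong(S^1)^d$ to Lebesgue measure $\mathrm d\theta$ on $[-\pi,\pi)^d$, giving $(2\pi)^d$, times the covering degree $d!$. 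With those two points made precise, your proof is complete.
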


For $\epsilon > 0$, we write  $B_{U(d)} (\epsilon) = \{U \in U(d) \mid \|U - I_d\|_{F} \leq \epsilon \}$.
We need the following auxiliary lemma, which says that the set of a sufficiently large number of randomly chosen unitary matrices forms an ``$\epsilon$-net'' of unitary matrices with respect to the Frobenius norm.
\begin{lemma}\label{lem:unitary-epsilon-net}
  Let $U_{0}\in U(d)$ and $U$ be a random matrix sampled with respect to the Haar measure of $U(d)$.
  For sufficiently small $\epsilon > 0$, the probability $\Pr[\|U - U_0\|_F \leq \epsilon] \geq \delta_{\ref{lem:unitary-epsilon-net}}(\epsilon,d)$, where $ \delta_{\ref{lem:unitary-epsilon-net}}(\epsilon,d) = \bigl(\frac{\epsilon}{d^{3/2}}\bigr)^{O(d^{2})}$.
\end{lemma}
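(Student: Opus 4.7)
The plan is to reduce to the case $U_0 = I$ and then use the Weyl integral formula from the stated proposition to lower bound the Haar mass of a small Frobenius ball around the identity. First, by the right-invariance of the Haar measure on $U(d)$ together with the identity $\|U - U_0\|_F = \|UU_0^{-1} - I\|_F$ (from unitary invariance of the Frobenius norm), the random variable $\|U - U_0\|_F$ has the same distribution as $\|U - I\|_F$. Hence it suffices to prove $\Pr[\|U - I\|_F \leq \epsilon] \geq (\epsilon/d^{3/2})^{O(d^2)}$. Diagonalizing $U = W \Lambda W^*$ with $\lambda_k = e^{\iunit \theta_k}$, unitary invariance again yields $\|U - I\|_F^2 = \sum_k |\lambda_k - 1|^2$, and the elementary inequality $|e^{\iunit \theta} - 1| \leq |\theta|$ shows this is at most $\sum_k \theta_k^2$.

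The next step is to apply Weyl's formula, expressing the target probability as
\[
    \Pr[\|U - I\|_F \leq \epsilon] = \frac{1}{Z_d}\int_R \prod_{i>j}|\lambda_i - \lambda_j|^2 \, d\theta,
\]
where $R = \{\theta \in [-\pi,\pi)^d : \sum_k |e^{\iunit\theta_k} - 1|^2 \leq \epsilon^2\}$. I will lower bound this integral by restricting to a product of narrow intervals placed at evenly spaced centers. Setting $\eta := \epsilon/\sqrt{d}$ and $\alpha := \eta/(4d)$, define centers $c_k := -\eta/2 + (k - 1/2)\eta/d$ for $k \in [d]$ and let $R_0 := \prod_{k=1}^d [c_k - \alpha/2,\, c_k + \alpha/2]$. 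Any $\theta \in R_0$ satisfies $|\theta_k| \leq \eta/2 + \alpha/2 \leq \eta$, so $\sum_k \theta_k^2 \leq d\eta^2 = \epsilon^2$, giving $R_0 \subseteq R$. Furthermore, for $i \neq j$, the triangle inequality yields $|\theta_i - \theta_j| \geq |c_i - c_j| - \alpha \geq |i-j|\eta/(2d)$.

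Finally, I combine these with the bound $|e^{\iunit a} - e^{\iunit b}| = 2|\sin((a-b)/2)| \geq (2/\pi)|a-b|$ for $|a-b| \leq \pi$, which gives
\[
    \prod_{i>j} |\lambda_i - \lambda_j|^2 \geq \prod_{i>j}\Bigl(\frac{|i-j|\eta}{\pi d}\Bigr)^2 \geq \Bigl(\frac{\eta}{\pi d}\Bigr)^{d(d-1)}
\]
pointwise on $R_0$. Multiplying by $\mathrm{vol}(R_0) = \alpha^d = (\eta/(4d))^d$ and dividing by $Z_d = (2\pi)^d d! \leq (2\pi d)^d$ yields a lower bound of the form $(\eta/d)^{d^2} / \mathrm{const}^{O(d^2)}$; substituting $\eta = \epsilon/\sqrt{d}$ produces $(\epsilon/d^{3/2})^{O(d^2)}$, as required.

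The main obstacle is choosing the sub-region $R_0$ so that three competing constraints are satisfied simultaneously: (i) $\sum_k \theta_k^2 \leq \epsilon^2$, which forces all phases into a window of width $\Theta(\epsilon/\sqrt{d})$; (ii) the phases are sufficiently separated that the Vandermonde-type product does not collapse, forcing a minimum pairwise spacing of order $\eta/d$; and (iii) the volume $\alpha^d$ of $R_0$ is not too small. Balancing (ii) and (iii) determines the interval width $\alpha = \Theta(\eta/d)$, and together with (i) this is precisely what forces the exponent $d^2$ (rather than $d$) in the final bound.
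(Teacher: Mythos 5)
Your proposal is correct and takes essentially the same route as the paper: reduce to $U_0 = I$ via invariance of the Haar measure, apply Weyl's integral formula, and restrict the integral to a small box of evenly spaced phases so that the Vandermonde-type product $\prod_{i>j}|\lambda_i-\lambda_j|^2$ is bounded below by $(\eta/\mathrm{poly}(d))^{d(d-1)}$ while the box volume contributes another $(\eta/\mathrm{poly}(d))^d$. The only difference is cosmetic: you use exact elementary inequalities ($|e^{\iunit\theta}-1|\le|\theta|$ and $|e^{\iunit a}-e^{\iunit b}|\ge(2/\pi)|a-b|$) where the paper invokes a first-order Taylor expansion with error terms, and you center your box at the origin while the paper places it in the positive orthant.
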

\begin{proof}
  Since the Haar measure is invariant under left multiplication, we can assume $U_{0} = I_{d}$ without loss of generality.
  Hence, we want to bound $\Pr[U \in B_{U(d)}(\epsilon)]$.
  Let $\lambda_1,\ldots,\lambda_d$ be eigenvalues of $U$ and $\theta_1,\ldots,\theta_d$ be corresponding phases.
  By the conjugate invariance of Frobenius norm, $U\in B_{U(d)}(\epsilon)$ iff $\sum_{i=1}^{d} |\lambda_{i}-1|^2 \leq \epsilon^{2}$.

  Suppose $\|\theta\|_{2} \leq 3\epsilon/4$. Since $\theta_{i}$'s are sufficiently small, we can expand as $\lambda_{i} = 1 + \iunit \theta_{i} + O(\theta_{i}^{2})$.
  Then $\sum_{i=1}^{d} |\lambda_{i}-1|^2 = \|\theta\|_{2}^{2} + O(\sum_{i=1}^d\theta_{i}^{3}) = \|\theta\|_{2}^{2} + \|\theta\|_{2}^{2}\max_{i}|\theta_i|  < \epsilon^{2}$.
  It implies the probability is bounded below by $\Pr[ U\in B_{U(d)}(\epsilon)] \geq \int_{B_{d}(3\epsilon/4)} \mathrm{d}\mu$ where $B_{d}(r)$ is the ball in $\bbR^d$ of radius $r$ centered at the origin.

  Let $\tilde{\epsilon} = \frac{\epsilon}{d^{3/2}}$, $\theta_{0} = \left[ 0, \tilde{\epsilon}, \ldots, (d-1)\tilde{\epsilon} \right]$, and $\theta_{1} = \left[\tilde{\epsilon}/3, 4 \tilde{\epsilon}/3, \ldots, (d-2/3)\tilde{\epsilon} \right]$.
  Note that $\|\theta_{0}\|_{2}^{2} \leq \epsilon^{2}/3$ and $\|\theta_{1}\|_{2}^{2} = \epsilon^{2} (6d^{2}-d+2)/18d^{3} < \epsilon^{2}/2$.
  Therefore, $B_{d}(3\epsilon/4)$ contains the $d$-dimensional hypercube $V = \{\theta \in \bbR^d \mid \theta_0 \leq \theta \leq \theta_1\}$, where we write $x \leq y$ if $x_i \leq y_i$ for each $i$.
  Note that if $\theta \in V$, $|\theta_i - \theta_j| \geq \tilde{\epsilon}/3$ for any $i \neq j$.
  Therefore, we have
  \begin{align*}
    \int_{B_{d}(3\epsilon/4)} \mathrm{d}\mu
    &\geq \int_{V} \mathrm{d}\mu \geq \frac{1}{Z_{d}} \int_{V} \prod_{i > j} |\lambda_{i} - \lambda_{j}|^{2} \mathrm{d}\theta
    \sim \frac{1}{Z_{d}}\int_{V} \prod_{i > j} |\theta_{i} - \theta_{j}|^{2} \mathrm{d}\theta  \tag{By the fact that $\epsilon$ is small enough} \\
    &\geq \frac{1}{Z_{d}} \int_{V} \prod_{i > j} \Bigl(\frac{\tilde{\epsilon}}{3}\Bigr)^2 \mathrm{d}\theta = \frac{1}{Z_{d}} \Bigl(\frac{\tilde{\epsilon}}{3}\Bigr)^{d(d-1)} \int_{V} \mathrm{d}\theta = \frac{1}{Z_{d}} \Bigl(\frac{\tilde{\epsilon}}{3}\Bigr)^{d^{2}}.
  \end{align*}
\end{proof}

Let $\Skew(d)$ be the set of $d$-dimensional skew-Hermitian matrices, i.e., $\Skew(d) = \{X \in M_{d}(\bbC) \mid X^* = -X  \}$.
Although the following lemma is an almost immediate consequence of the fact that $\Skew(d)$ is the Lie algebra of $U(d)$,
we prove it for completeness.
\begin{lemma}\label{lem:exponential-map}
  If $\epsilon > 0$ is small enough, then the image of the exponential map $\exp:\Skew (d) \to U(d)$ contains $B_{U(d)}(\epsilon)$. Furthermore, if $U \in B_{U(d)}(\epsilon)$, then we can choose $X\in \Skew(d)$ such that $\exp(X) = U$ and $\|X\|_{F} \leq 2\epsilon$.
\end{lemma}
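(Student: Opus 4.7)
The plan is to construct $X$ explicitly by diagonalizing $U$ and taking a pointwise logarithm of each eigenvalue, then bound $\|X\|_F$ by a sharp trigonometric inequality. First I would write $U = W\Lambda W^{*}$ with $W\in U(d)$ and $\Lambda = \diag\bigl(e^{\iunit\theta_{1}},\ldots, e^{\iunit\theta_{d}}\bigr)$, choosing each phase $\theta_{j}\in[-\pi,\pi)$ (the principal branch). Setting $X := W \cdot \diag(\iunit\theta_{1},\ldots,\iunit\theta_{d})\cdot W^{*}$, one verifies immediately that $X^{*} = -X$, so $X\in\Skew(d)$, and that $\exp(X) = W\Lambda W^{*} = U$. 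This already shows that $\exp:\Skew(d)\to U(d)$ is surjective; in particular its image contains $B_{U(d)}(\epsilon)$ for every $\epsilon>0$, which disposes of the first assertion.

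For the Frobenius bound, I would compute both norms explicitly. By unitary invariance of $\|\cdot\|_{F}$,
\[
  \|U - I_{d}\|_{F}^{2} \;=\; \|W^{*}(U - I_{d})W\|_{F}^{2} \;=\; \sum_{j=1}^{d}|e^{\iunit\theta_{j}} - 1|^{2} \;=\; 4\sum_{j=1}^{d}\sin^{2}(\theta_{j}/2),
\]
and analogously $\|X\|_{F}^{2} = \sum_{j=1}^{d}\theta_{j}^{2}$. The hypothesis $\|U - I_{d}\|_{F}\leq\epsilon$ forces $|\sin(\theta_{j}/2)|\leq \epsilon/2$ for each $j$, and for $\epsilon$ sufficiently small this in turn forces $|\theta_{j}|\leq \pi/2$ under the principal branch. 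Jordan's inequality $|\sin t|\geq (2/\pi)|t|$ valid on $|t|\leq \pi/2$ then yields $\theta_{j}^{2}\leq \pi^{2}\sin^{2}(\theta_{j}/2)$ for every $j$. Summing,
\[
  \|X\|_{F}^{2} \;\leq\; \pi^{2}\sum_{j=1}^{d}\sin^{2}(\theta_{j}/2) \;=\; \frac{\pi^{2}}{4}\|U - I_{d}\|_{F}^{2} \;\leq\; \frac{\pi^{2}}{4}\epsilon^{2},
\]
and since $\pi/2 < 2$ this gives $\|X\|_{F}\leq 2\epsilon$ as required.

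The one subtle point is that the assignment $U\mapsto X$ depends on a choice of phases: for general unitaries this choice is not unique, and a poor choice (e.g., $\theta_{j}' = \theta_{j} + 2\pi$) would blow up $\|X\|_{F}$. Restricting to the principal branch makes the choice canonical, and for $U$ close enough to $I_{d}$ the principal branch automatically yields phases in $[-\pi/2,\pi/2]$, which is precisely the range on which Jordan's inequality produces a clean constant. Beyond this bookkeeping the lemma is essentially the quantitative form of the fact that $\exp$ is a local diffeomorphism from a neighborhood of $0\in\Skew(d)$ onto a neighborhood of $I_{d}\in U(d)$, so I do not expect any deeper obstacle.
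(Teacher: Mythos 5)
Your proposal is correct, and it follows the same basic plan as the paper: diagonalize $U = W\Lambda W^{*}$, take the eigenvalue-by-eigenvalue logarithm on the principal branch to get a skew-Hermitian $X$ with $\exp(X) = U$, and then bound $\|X\|_{F}^{2}$ as a sum over eigenvalues. Where you diverge is in the final quantitative estimate. The paper sets $\mu_{i} = \lambda_{i} - 1$, expands $\log\lambda_{i} = \log(1+\mu_{i}) = \mu_{i} + O(\mu_{i}^{2})$, and then sums the resulting $|\mu_{i}|^{2} + O(\mu_{i}^{3})$ to get $\|X\|_{F}^{2} \leq 2\epsilon^{2}$; this is an asymptotic estimate whose $O$-constants are not pinned down (though they could be). You instead use the exact identity $|e^{\iunit\theta}-1| = 2|\sin(\theta/2)|$ together with Jordan's inequality $|\sin t|\geq (2/\pi)|t|$ on $|t|\leq\pi/2$, which gives the clean, non-asymptotic inequality $\|X\|_{F} \leq (\pi/2)\,\|U - I_{d}\|_{F}$ valid for any $U$ in the range of the principal branch. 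Your constant $\pi/2\approx 1.57$ is marginally weaker than the paper's $\sqrt{2}\approx 1.41$, but both sit comfortably under the required $2$, and your derivation has the advantage of carrying no hidden constants. You also make explicit the phase-ambiguity issue (the choice of branch) that the paper resolves only implicitly by invoking the power series for $\log(1+z)$. One small remark: with the principal branch $\theta_{j}\in[-\pi,\pi)$ you automatically have $|\theta_{j}/2|\leq\pi/2$, so Jordan's inequality applies regardless of $\epsilon$; the ``for $\epsilon$ sufficiently small'' clause in your argument is therefore not actually needed for the bound, only (harmlessly) for intuition.
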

\begin{proof}
  We diagonalize $U$ as $W\Lambda W^*$, where $W\in U(d)$ and $\Lambda = \diag (\lambda_{1}, \ldots, \lambda_{d})$.
  Let $\lambda_{i} = \exp (\iunit \theta_{i})$ for $\theta_{i} \in \bbR$ and $\mu_{i} = 1-\lambda_{i}$.
  For $z\in \bbC$, we define $\log (1+z) = \sum_{i=1}^{\infty} (-1)^{n+1} z^{n}/n$. This Taylor expansion converges for $|z| \leq 1$.
  Note that $\log \lambda_{i} = \iunit (\theta_{i} + 2\pi m_{i})$ for some $m_{i} \in \bbZ$.

  Since $\|U-I_{d}\|_{F} \leq \epsilon$, we can define $X = \log U$.
  Note that $U = \exp(X)$ as $z = \exp (\log z)$.
  We have the following formulas:
  \begin{align*}
    X & = \log U = \log W \Lambda W^* = W (\log \Lambda) W^*, \\
    \log \Lambda & = \diag (\log \lambda_{1}, \ldots, \log \lambda_{d}), \\
    \overline{\log \lambda_{i}} & = \overline{\iunit (\theta_{i} + 2\pi m_{i})} = -\iunit (\theta_{i} + 2\pi m_{i}) = -\log \lambda_{i}, \\
    (\log \Lambda)^* & = \diag (\overline{\log \lambda_{1}}, \ldots, \overline{\log \lambda_{d}}) = -\diag (\log \lambda_{1}, \ldots, \log \lambda_{d}) = - \log \Lambda.
  \end{align*}
  Combining these formulas, $X^* = W (\log \Lambda)^* W^* = -W (\log \Lambda) W^* = -X$, that is, $X \in \Skew (d)$.

  Next, we estimate the Frobenius norm of $X$.
  The condition $\|U-I_{d}\|_{F} \leq \epsilon$ implies $\sum_{i=1}^{d} |\mu_{i}|^{2} \leq \epsilon^{2}$.
  It implies $\mu_{i}$ is small enough. So we can expand $\log \lambda_{i} = \log (1 + \mu_{i}) = \mu_{i} + O(\mu_{i}^{2})$.
  Therefore $|\log \lambda_{i}|^{2} = |\mu_{i}|^{2} + O(\mu_{i}^{3})$ and we can conclude $\|X\|_{F}^{2} = \|W(\log \Lambda) W^*\|_{F}^{2} = \|\log \Lambda\|_{F}^{2} = \sum_{i=1}^{d} |\log \lambda_{i}|^{2} = \sum_{i=1}^{d} |\mu_{i}|^{2} + O(\mu_{i}^{3}) \leq 2\epsilon^{2}$.
\end{proof}

\begin{lemma}\label{lem:U-V->UAU-VAV}
  Let $A \in \bbD(d)$ be a matrix and $U, V\in U(d)$ be unitary matrices.
  If $\| U-V\|_F \leq \epsilon$ for sufficiently small $\epsilon > 0$, then $\| UAU^* - VAV^*\|_{F} \leq 3\epsilon$.
\end{lemma}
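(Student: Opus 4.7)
The plan is to telescope the difference $UAU^{*} - VAV^{*}$ and bound each resulting term by the submultiplicativity relating the Frobenius and operator norms. Specifically, I would write
\[
UAU^{*} - VAV^{*} \;=\; UA(U^{*} - V^{*}) \;+\; (U - V)AV^{*},
\]
so that the triangle inequality reduces the task to bounding the Frobenius norm of each summand.

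The next step is to apply the inequality $\|XY\|_F \leq \|X\|_{op}\|Y\|_F$ (and its mirror $\|XY\|_F \leq \|X\|_F\|Y\|_{op}$) using three facts: (i) $\|U\|_{op} = \|V\|_{op} = 1$ since $U$ and $V$ are unitary; (ii) $\|A\|_{op} \leq \|A\|_F \leq 1$ since $A \in \bbD(d)$; and (iii) $\|U^{*} - V^{*}\|_F = \|U - V\|_F$ since the Frobenius norm is invariant under conjugate transpose. Combining these gives $\|UA(U^{*} - V^{*})\|_F \leq \|A\|_{op}\|U - V\|_F \leq \epsilon$ and analogously $\|(U - V)AV^{*}\|_F \leq \|U - V\|_F\|A\|_{op} \leq \epsilon$, so $\|UAU^{*} - VAV^{*}\|_F \leq 2\epsilon \leq 3\epsilon$.

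There is essentially no obstacle here; the smallness hypothesis on $\epsilon$ is not actually needed for this direct argument, and is presumably a vestige of an alternative route via Lemma~\ref{lem:exponential-map}. In that alternative approach, one would set $W := V^{*}U \in U(d)$, observe $\|W - I_d\|_F = \|U - V\|_F \leq \epsilon$ by unitary invariance, invoke the lemma to write $W = \exp(X)$ with $X \in \Skew(d)$ and $\|X\|_F \leq 2\epsilon$, and reduce by unitary invariance to $\|\exp(X)A\exp(-X) - A\|_F$. Expanding $\exp(\pm X) = I \pm X + O(\|X\|_F^{2})$ produces the commutator estimate $\|[X, A]\|_F \leq 2\|X\|_F\|A\|_{op}$ plus a quadratic error, giving a slightly worse leading constant than the direct decomposition. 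I would therefore present the triangle-inequality proof, whose only delicate point is keeping straight which factor contributes its operator norm and which its Frobenius norm in each application of submultiplicativity.
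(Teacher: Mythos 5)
Your proof is correct, and it takes a genuinely different route from the paper's. The paper reduces to the case $V = I_d$, invokes Lemma~\ref{lem:exponential-map} to write $U = \exp(2\epsilon X)$ with $X \in \Skew(d)$ and $\|X\|_F \leq 1$, Taylor-expands $\exp(\pm 2\epsilon X)$ to first order, and bounds the commutator $\|[X,A]\|_F \leq 2$; the result carries a quadratic error term and gives the constant $3$ only for $\epsilon$ small enough. Your telescoping decomposition $UAU^* - VAV^* = UA(U^* - V^*) + (U-V)AV^*$, combined with the unitary invariance of the Frobenius norm and the mixed submultiplicativity inequalities $\|XY\|_F \leq \|X\|_{op}\|Y\|_F$ and $\|XY\|_F \leq \|X\|_F\|Y\|_{op}$, is more elementary: it gives the clean bound $2\epsilon$ with no error term, needs no smallness hypothesis on $\epsilon$, and would make Lemma~\ref{lem:exponential-map} superfluous (that lemma is used nowhere else in the paper). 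The only thing the paper's route buys is that it keeps all of its unitary-equivalence machinery in the language of the Lie-theoretic picture that also underlies Lemma~\ref{lem:unitary-epsilon-net}; as a matter of pure economy, your argument is preferable.
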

\begin{proof}
  Let $U_0 = V^*U$.
  By a direct calculation, $\|U-V\|_{F} = \|U_0 - I_d\|_{F}$ and $\| UAU^* - VAV^*\|_{F} = \| U_0AU_0^* - A\|_{F}$.
  Therefore, we can assume that $V = I_d$.
  By Lemma~\ref{lem:exponential-map}, there exists $X\in \Skew (d)$ such that $\|X\|_{F} \leq 1$ and $U = \exp (2\epsilon X)$. Note that $U^{\ast} = \exp (-2\epsilon X)$. Then,
  \begin{align*}
    U^{\ast} A U - A & = \exp(-2\epsilon X)A \exp(2\epsilon X) - A
     = (I_{d} - 2\epsilon X + O(\epsilon^{2})J) A (I_{d} + 2\epsilon X + O(\epsilon^{2})J) \\
    &  = 2\epsilon (AX - XA) + O(\epsilon^{2})J,
  \end{align*}
  where $O(\epsilon^2)J$ denotes a matrix with each entry having absolute value $O(\epsilon^2)$.
  We evaluate the Frobenius norm of the commutator as
  \[
    \|AX - XA\|_{F} \leq \|AX\|_{F} + \|XA\|_{F} = 2\|X\|_{F} \leq 2,
  \]
  where we use the triangle inequality of the Frobenius norm, and the assumption $\|A\|_F \leq 1$.
  Therefore,
  \[
    \|U A U^* - A\|_{F}^{2} \leq 4\epsilon^{2} \|AX - XA\|_{F}^{2} + O(\epsilon^{3}) \leq 8\epsilon^{2} + O(\epsilon^3) < 9\epsilon^{2}.
  \]
\end{proof}


Now, we prove Theorem~\ref{the:unitary}.
\begin{proof}[Proof of Theorem~\ref{the:unitary}]
  We assume that the all the estimations have succeeded, which happens with probability at least $5/6$ by the union bound.

  Suppose that $f$ and $g$ are unitary equivalent and $U_0 \in U(d)$ be the unitary matrix with $f(x) = U_0g(x)U_0^*$ for any $x \in G$.
  From Lemma~\ref{lem:unitary-epsilon-net}, we sample $U$ such that $\|U_{0} - U\|_{F} \leq \epsilon/10$ with probability at least $5/6$.
  From Lemma~\ref{lem:U-V->UAU-VAV}, we have $\E_{x\in G}[\|f(x) -U g(x) U^* \|_{F}^{2}] \leq 9\epsilon^{2}/100$.
  For such $U$, we obtain the estimation $e$ satisfies $e < \epsilon^{2}/10$.
  By the union bound, we accept with probability at least $2/3$.

  Suppose that $f$ and $g$ are $\epsilon$-far from being unitary equivalent.
  For every $U$ we sample,
  we have $\E_{x\in G}[\|f(x) -U g(x) U^* \|_{F}^{2}] \geq 4\epsilon^2$.
  Hence, the estimation $e$ satisfies $e > \epsilon^2/10$ and we accept.
  To summarize, we reject with probability at least $5/6>2/3$.
\end{proof}

\section*{Acknowledgements}
We thank Mitsuru Kusumoto and anonymous referees for comments that greatly improved the manuscript.

\bibliographystyle{abbrv}
\bibliography{main}

\end{document}